\date{March 10, 2013}
\newtheorem{theorem}{Theorem}[section]
\newtheorem{proposition}[theorem]{Proposition}
\newtheorem{lemma}[theorem]{Lemma}
\newtheorem{corollary}[theorem]{Corollary}
\theoremstyle{definition}
\theoremstyle{remark}
\newtheorem*{remark}{Remark}
\numberwithin{equation}{section}
\newcommand{\al}{\hat a_{k_3}}
\newcommand{\all}{\hat a_{k_3'}}
\newcommand{\cl}{\mathrm{c}}
\newcommand{\co}{\mathrm{co}}
\newcommand{\K}{\mathcal{K}}
\newcommand{\lk}{\left(}
\newcommand{\N}{\mathbb{N}}
\newcommand{\q}{\mathrm{q}}
\newcommand{\R}{\mathbb{R}}
\newcommand{\rk}{\right)}
\newcommand{\Zv}{\mathbf{Z}}
\DeclareMathOperator*{\esssup}{ess\,sup}
\DeclareMathOperator{\re}{Re}
\begin{document}

\title[Polaron in a strong magnetic field]{The ground state energy of a polaron\\ in a strong magnetic field}

\author[R. Frank]{Rupert L. Frank}
\address{Rupert L. Frank, Department of Mathematics, Caltech, Pasadena, CA 91125, USA}
\email{rlfrank@caltech.edu}

\author[L. Geisinger]{Leander Geisinger}
 \address{Leander Geisinger, Department of Mathematics, Princeton University, Princeton, NJ 08544, USA}
 \email{leander@princeton.edu}

\thanks{\copyright\, 2013 by the authors. This paper may be reproduced, in its entirety, for non-commercial purposes.\\
Work partially supported by NSF grants PHY--1068285 (R.L.F.) and PHY-1122309 (L.G.) and DFG grant GE 2369/1-1 (L.G.).}

\begin{abstract}
We show that the ground state of a polaron in a homogeneous magnetic field $B$ and its energy are described by an effective one-dimensional minimization problem in the limit $B\to\infty$. This holds both in the linear Fr\"ohlich and in the non-linear Pekar model and makes rigorous an argument of Kochetov, Leschke and Smondyrev.
\end{abstract}

\maketitle


\section{Introduction and main results}
\label{}

A central theme in mathematical physics is the derivation of effective equations for a given model in a certain asymptotic regime and the quantification of approximation errors. Remarkably, even when the original model is linear, the effective one often turns out to be non-linear. The purpose of our work here is to derive an effective non-linear one-dimensional equation for the ground state of a polaron in a strong magnetic field.

A polaron describes an electron interacting with the quantized optical modes of a polar crystal, and a `large' polaron refers to the case where the spatial extension of this polaron is large compared with the spacing of the underlying lattice. In this paper we consider a large polaron in the presence of a strong homogeneous magnetic field. This case has been extensively studied in the physics literature, typically under the name `magnetopolaron', and we refer to the surveys \cite{GeLo,De} for references and background information. We shall mention some specific results after having introduced our problem precisely.

We consider two models for a polaron in a magnetic field. The first model, the so-called \emph{Fr\"ohlich model}, involves a quantized phonon field. In this model the polaron energy is described by the Hamiltonian
\begin{equation}
\label{eq:ham}
\mathfrak{h} = H_B -\partial_3^2 + \frac{\sqrt \alpha}{2\pi} \int_{\R^3} \lk  \frac{a_k}{|k|} e^{i k \cdot x} + \frac{a_k^*}{|k|} e^{-i k \cdot x} \rk dk + \int_{\R^3} a^*_k a_k \,dk
\end{equation}
acting in $L^2(\R^3) \otimes \mathcal{F}(L^2(\R^3))$, and the ground state energy is given by the bottom of its spectrum
$$
E_B^\q = \inf \left\{ \lk \Psi, \mathfrak{h} \Psi \rk_{L^2(\R^3) \otimes \mathcal{F}} :  \| \Psi \|_{L^2(\R^3) \otimes \mathcal{F}} = 1 , \Psi \in H^1_A(\R^3) \otimes \mbox{dom} (\sqrt \mathcal N) \right\} \,.
$$
(The superscript $\q$ stands for `quantized'.)

Here $\mathcal F = \mathcal{F}(L^2(\R^3))$ denotes the bosonic Fock space over $L^2(\R^3)$ with creation and annihilation operators $a_k^*$ and $a_k$ satisfying $\left[ a_k, a^*_{k'} \right] = \delta(k-k')$ and $\left[a_k,a_{k'}\right]=\left[a_k^*,a_{k'}^*\right]=0$ for all $k, k' \in \R^3$. The number operator $\mathcal{N} = \int_{\R^3} a^*_k a_k \,dk$ describes the energy of the phonon field. The terms $H_B -\partial_3^2$ describe the kinetic energy of the electron, where
$$
H_B = \left( -i\partial_1 + A_1(x) \right)^2 + \left( -i\partial_2 + A_2(x) \right)^2
$$
denotes the Landau Hamiltonian corresponding to a homogeneous magnetic field of strength $B>0$ pointing in the $x_3$-direction. The vector potential $A$ can be chosen in the symmetric gauge
$$
A(x_1,x_2,x_3) = \frac B2 (-x_2, x_1, 0) \, .
$$
The space $H^1_A(\R^3)$ is the corresponding magnetic Sobolev space of order one. Finally, the parameter $\alpha>0$ in \eqref{eq:ham} describes the strength of the interaction between the electron and the phonon field. We note that our normalization of $\alpha$ differs from the usual one, but makes our formulas easier. For details about the definition of $\mathfrak h$ as a self-adjoint operator in $L^2(\R^3) \otimes \mathcal{F}(L^2(\R^3))$ we refer the reader to \cite{Ne}, see also \cite{MiSp}.

The second model that we consider, the \emph{Pekar model}, involves a classical phonon field. The polaron energy in this model is given by the (non-quadratic) functional 
\begin{equation}
\label{eq:pekarfunc}
\mathcal{E}_B[\phi]  =  (\phi, H_B \phi) + (\phi, -\partial_3^2 \phi) - \frac \alpha 2 \iint_{\R^3\times\R^3} \frac{|\phi(x)|^2 \, |\phi(y)|^2}{|x-y|} \,dx \,dy
\end{equation}
and the ground state energy is defined as
$$
E_B^\cl = \inf \left\{ \mathcal{E}_B[\phi] : \|\phi\| = 1, \phi \in H^1_A(\R^3) \right\} \,.
$$
(The superscript $\cl$ stands for `classical'.)

It is interesting, although not necessary for our argument, that a minimizer for $E_B^\cl$ exists. This was recently shown in \cite{GrHaWe}, generalizing an earlier theorem in \cite{Li} for $B=0$.

To understand the connection between the Pekar functional and the Fr\"ohlich Hamiltonian we note that for every $\phi\in H^1_A(\R^3)$ with $\|\phi\|=1$
\begin{align*}
\mathcal{E}_B[\phi] = \inf_a & \left( (\phi, H_B \phi) + (\phi, -\partial_3^2 \phi) + \frac{\sqrt\alpha}{2\pi} \iint_{\R^3\times\R^3} \left( \frac{a(k)}{|k|} e^{ik\cdot x} + \frac{\overline{a(k)}}{|k|} e^{-ik\cdot x} \right) |\phi(x)|^2 \,dx\,dk  \right. \\
& \quad \left. + \int_{\R^3} |a(k)|^2 \,dk \ \|\phi\|^2 \right) \,,
\end{align*}
where the infimum is taken over all \emph{functions} $a$ on $\R^3$. This observation can be combined with an application of coherent states to show that
\begin{equation}
\label{eq:pekar}
E_B^\q \leq E_B^\cl
\end{equation}
for all $B$ and $\alpha$. This argument is due to Pekar \cite{Pek51,Pek63}.

Our main results are large $B$ asymptotics of both $E_B^\q$ and $E_B^\cl$. We shall prove

\begin{theorem}[Fr\"ohlich model]
\label{thm:operator}
For every fixed $\alpha>0$,
$$
E_B^\q = B - \frac {\alpha^2}{48} (\ln B)^2 + O\lk (\ln B)^{3/2} \rk 
\quad\text{as}\quad B\to\infty \, .
$$
\end{theorem}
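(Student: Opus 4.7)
The physical picture is that as $B\to\infty$ the electron is confined to the lowest Landau level (LLL) in the transverse plane, whose normalised ground state $g_0(x_\perp) = \sqrt{B/(2\pi)}\,e^{-B|x_\perp|^2/4}$ carries the zero--point energy $B$. Integrating $g_0$ out of the Pekar Coulomb interaction reduces the longitudinal motion to an effective one--dimensional problem with kernel
\[
K_B(r) = \iint_{\R^2\times\R^2} \frac{|g_0(u)|^2\,|g_0(v)|^2}{\sqrt{|u-v|^2+r^2}}\,du\,dv ,
\]
whose Fourier transform satisfies $\hat K_B(\xi) = \ln B - 2\ln|\xi| + O(1)$ on the range $|\xi|^2 \ll B$. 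Against slowly varying $|\psi|^2$ this acts as $\ln B$ times a local $\delta$--interaction, so the effective 1D functional is $\int_\R |\psi'|^2\,ds - \tfrac{\alpha\ln B}{2}\int_\R|\psi|^4\,ds$ with $L^2$--normalised sech minimiser and infimum $-(\alpha\ln B)^2/48$; this identifies the subleading term in the theorem.

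\textbf{Upper bound.} I would invoke Pekar's inequality \eqref{eq:pekar} and evaluate $\mathcal E_B$ on the factorised trial state $\phi(x) = g_0(x_\perp)\psi(x_3)$ with $\psi \in H^1(\R)$ a normalised sech whose inverse length scale is of order $\alpha\ln B$. The transverse kinetic energy gives exactly $B$, the longitudinal part contributes $\|\psi'\|^2$, and the Coulomb term becomes $-\tfrac{\alpha}{2}\iint|\psi(s)|^2|\psi(t)|^2 K_B(s-t)\,ds\,dt$. Expanding $K_B$ into its leading logarithmic contribution plus a remainder, and controlling the remainder uniformly on the chosen sech profile, yields $E_B^\cl \le B - \alpha^2(\ln B)^2/48 + O((\ln B)^{3/2})$, hence the same upper bound on $E_B^\q$.

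\textbf{Lower bound.} For the lower bound, let $P$ be the orthogonal projection onto the LLL in the transverse variables (acting trivially on $x_3$ and on $\mathcal F$) and $Q = 1-P$, so that $PH_BP = BP$ and $QH_BQ \ge 3BQ$. Splitting the electron--phonon interaction $V$ appearing in $\mathfrak h$ into its $PP$, $PQ+QP$ and $QQ$ blocks, the off--diagonal block is controlled by a Feshbach/Cauchy--Schwarz estimate
\[
|2\re(P\Psi, V Q\Psi)| \le \epsilon\,(Q\Psi,(H_B - B + \mathcal N + 1) Q\Psi) + \epsilon^{-1}(P\Psi, M_B\, P\Psi) ,
\]
with $\epsilon$ chosen as a small negative power of $B$. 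The first term is then absorbed by the magnetic gap $2B\|Q\Psi\|^2$, while $\epsilon^{-1}M_B$ contributes only at the $O((\ln B)^{3/2})$ level. What survives is an effective one--dimensional Fr\"ohlich Hamiltonian on $PL^2(\R^3)\otimes\mathcal F$ with transverse form factor inherited from $g_0$.

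\textbf{Completing the proof.} On this effective 1D quantum problem I would apply a coherent--state / Lieb--Thomas--type lower bound, exploiting the fact that the effective coupling $\alpha\ln B$ is large, to compare the 1D Fr\"ohlich energy with the 1D Pekar functional of the first paragraph and conclude $E_B^\q \ge B - \alpha^2(\ln B)^2/48 - O((\ln B)^{3/2})$. The principal obstacle is to carry out both reductions --- LLL projection and quantum--to--classical phonons --- with a combined error no worse than $(\ln B)^{3/2}$; the magnetic gap $2B$ and the logarithmic growth of the effective coupling are precisely what make the two errors manageable.
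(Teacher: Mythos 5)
Your high-level strategy coincides with the paper's: Pekar's inequality $E^\q_B\le E^\cl_B$ together with a factorized $g_0(x_\perp)\psi(x_3)$ trial state for the upper bound, and, for the lower bound, a projection onto the lowest Landau level followed by a reduction to an effective one-dimensional Fr\"ohlich Hamiltonian that is analyzed via Lieb--Thomas coherent-state localization. So the route is essentially the paper's, and the main comparison to make is about the steps your sketch glosses over.

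The genuine gap is the absence of any ultraviolet regularization before you run the $P$--$Q$ block decomposition and the Cauchy--Schwarz/Feshbach estimate. The Fr\"ohlich form factor $|k|^{-1}$ is not in $L^2(\R^3)$, so the operator $M_B$ in your bound $|2\re(P\Psi,VQ\Psi)|\le \epsilon(Q\Psi,(H_B-B+\mathcal N+1)Q\Psi)+\epsilon^{-1}(P\Psi,M_B P\Psi)$ is not a finite quadratic form as written: the weight $H_B-B+\mathcal N+1$ gives no decay in $k_3$ at all (the phonons are dispersionless), and the Gaussian transverse cutoff coming from the LLL kernel cures the $k_\perp$ UV tail but not the divergence of $\int |k|^{-2}\,dk$ over $k_3$. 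The paper handles this by a Lieb--Yamazaki commutator argument \emph{before} projecting: it writes the interaction over $\{|k|>\K\}$ as $\sum_j[-i\partial_j+A_j,Z_j-Z_j^*]$ and trades it for a small piece of the kinetic energy plus $\frac12\int a^*_k a_k$, yielding the cutoff operator $\mathfrak h_\K^{\co}$ with a square-integrable form factor. Only then can the off-diagonal $P_0$--$P_>$ term be Cauchy--Schwarzed; moreover, that estimate produces $\|\sqrt{\mathcal N+1}P_0\Psi\|$ on the $P_0$ side, so the paper also needs \emph{a priori} bounds $(\Psi,\mathcal N\Psi)\le C(\ln B)^2$ and $\|P_>\Psi\|^2\le C(\ln B)^2/B$ for approximate minimizers, obtained from a rough global lower bound. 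Your sketch needs both of these ingredients; choosing $\epsilon$ to be a negative power of $B$ does not by itself control $M_B$. Once these two steps are added, your Feshbach reformulation is a cosmetic variant of Lemma~\ref{lem:reduction}, and the remaining analysis (inside the LLL, further $k_3$ and $k_\perp$ cutoffs, effective 1D field with $v(k_3)^2 \asymp \ln B$, block decomposition and coherent states) is exactly what the paper carries out in Section~\ref{sec:fockmain}.
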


In the case of a classical field we are able to identify even a third term in the asymptotic expansion.

\begin{theorem}[Pekar model]
\label{thm:functional}
For every fixed $\alpha>0$,
$$
E_B^\cl = B - \frac {\alpha^2}{48} \lk \ln B  \rk^2 + \frac {\alpha^2}{12}  (\ln B)  (\ln \ln  B) + O \lk \ln B \rk
\quad\text{as}\quad B\to\infty \, .
$$
\end{theorem}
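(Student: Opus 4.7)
My plan is to prove the theorem by establishing matching upper and lower bounds, both of which reduce the three-dimensional Pekar minimization to a one-dimensional soliton problem in the $x_3$-direction. The heuristic, as in Kochetov--Leschke--Smondyrev, is that in a strong field the electron collapses into the lowest Landau level (LLL), so the transverse Coulomb integrals can be performed once and for all.

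For the upper bound I would use the product ansatz $\phi(x)=u(x_3)\chi_0(x_\perp)$, with $\chi_0$ the normalized LLL ground state, so that $|\chi_0(x_\perp)|^2=(B/(2\pi))e^{-B|x_\perp|^2/2}$. This gives $(\phi,H_B\phi)=B$ and $(\phi,-\partial_3^2\phi)=\int|u'|^2$, and a Fourier calculation using $\widehat{|x|^{-1}}(p)=4\pi/|p|^2$ together with $\widehat{|\chi_0|^2}(p_\perp)=e^{-|p_\perp|^2/(2B)}$ yields
\begin{equation*}
\iint_{\R^3\times\R^3}\frac{|\phi(x)|^2|\phi(y)|^2}{|x-y|}\,dx\,dy
= \frac{1}{2\pi}\int_\R \bigl|\widehat{|u|^2}(p_3)\bigr|^2\, e^{p_3^2/B}\, E_1(p_3^2/B)\,dp_3,
\end{equation*}
where $E_1$ is the exponential integral with expansion $E_1(z)=-\ln z-\gamma+O(z)$ as $z\to 0^+$. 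For $u$ of longitudinal scale $L\gg B^{-1/2}$, Plancherel reduces this to $\ln(BL^2)\int|u|^4+O(\int|u|^4)$, and the effective one-dimensional functional becomes $\int|u'|^2-(\alpha/2)\ln(BL^2)\int|u|^4$. Its minimizer under $\|u\|_2=1$ is the cubic-NLS soliton $u(x)=\sqrt{c/8}\,\mathrm{sech}(cx/4)$ with $c=\alpha\ln(BL^2)$, at value $-c^2/48$ and scale $L=4/c$. Solving the self-consistent relation $c=\alpha\ln(16B/c^2)$ by iteration gives $c=\alpha\ln B-2\alpha\ln\ln B+O(1)$, and expanding $-c^2/48$ produces the claimed three-term upper bound for $E_B^\cl$.

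For the lower bound the essential input is the Landau gap $H_B\ge BP_0+3B(1-P_0)$, where $P_0$ is the LLL projector. Writing $\phi=P_0\phi+\phi_\perp$, any near-minimizer satisfies $\|\phi_\perp\|^2=O((\ln B)^2/B)$; Hardy-type estimates on the Coulomb kernel then show that all cross terms involving $\phi_\perp$ contribute only $o((\ln B)^2)$, and the problem is reduced to $\phi\in\mathrm{ran}(P_0)$ modulo an error absorbable in $O(\ln B)$. Within the LLL the transverse density $\int|\phi(x_\perp,x_3)|^2\,dx_\perp$ need not be Gaussian, but in momentum space it carries the universal factor $e^{-|p_\perp|^2/(2B)}$ coming from the LLL reproducing kernel, which lets one bound the Coulomb double integral from below by an expression of the form $\frac{1}{2\pi}\int|\widehat{\rho}(p_3)|^2 e^{p_3^2/B}E_1(p_3^2/B)\,dp_3$ plus controlled errors, where $\rho(x_3)=\int|\phi|^2\,dx_\perp$. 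The same soliton asymptotics then delivers the matching lower bound.

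I expect the main obstacle to lie in this final step: handling the non-Gaussian transverse profiles within the LLL with enough precision to match not only the leading $-\alpha^2(\ln B)^2/48$ but also the subleading coefficient $\alpha^2/12$ in front of $(\ln B)(\ln\ln B)$. Doing so requires carrying the $-\gamma$ constant and the $O(z)$ remainder of $E_1$ consistently through both bounds and matching them against the self-consistent correction to $c$, so that all genuinely lower-order contributions can be absorbed into the stated $O(\ln B)$ error.
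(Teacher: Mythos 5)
Your upper bound argument is sound and genuinely different from the paper's. The paper decomposes the Coulomb energy in physical space (Proposition \ref{pro:lsy}: a cutoff $|y_3|\leq 1/\ln B$ yields the effective coupling $C_B=(\ln B)/2-\ln\ln B$ directly), whereas you work in Fourier space with the exponential integral $E_1$ and recover $c/2 = (\ln B)/2-\ln\ln B+O(1)$ through the self-consistent scale $L=4/c$. Both routes give the same two leading terms; the Fourier route is arguably more transparent for the upper bound, and your formula $\iint |\phi|^2|\phi|^2/|x-y| = \frac{1}{2\pi}\int |\widehat{|u|^2}(p_3)|^2 e^{p_3^2/B}E_1(p_3^2/B)\,dp_3$ for the product ansatz is correct.

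The lower bound is where the gaps are. First, a direction error: to bound $\mathcal E_B$ from below you must bound $D(|\phi|^2,|\phi|^2)$ from \emph{above}, not from below as you write. Second, and more fundamentally, the statement that the transverse density of a general LLL state ``carries the universal factor $e^{-|p_\perp|^2/(2B)}$'' is not correct. That exact factor is special to the single-mode Gaussian $|\chi_0|^2$; for a general $\phi\in\mathrm{ran}\,P_0$ the function $x_\perp\mapsto |\phi(x_\perp,x_3)|^2$ depends on $x_3$ and does not factor in Fourier with a clean Gaussian weight. What the LLL structure does give is a pointwise/kernel bound, $\sup_{x_\perp}|\phi(x_\perp,x_3)|^2\leq (B/2\pi)\int|\phi(y_\perp,x_3)|^2\,dy_\perp$, which the paper exploits in physical space (Lemma \ref{lem:lsyprojected}) to control the remainder $R_B^{(2)}(P_0\phi)$. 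Your Fourier route would need a correct substitute for the false Gaussian-factorization claim.

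Third, ``the same soliton asymptotics then delivers the matching lower bound'' is not an argument: for a lower bound you cannot choose the test function, so you cannot invoke the self-consistent scale $L$. What is needed is the sharp one-dimensional Gagliardo--Nirenberg (Nagy) inequality $\|f'\|_2^{1/4}\|f\|_2^{3/4}\geq 3^{1/8}\|f\|_4$, which yields the uniform bound $\int|f'|^2 - b\int|f|^4\geq -b^2/12\,(\int|f|^2)^3$ for \emph{all} admissible $f$ (Lemma \ref{lem:oned} and Corollary \ref{cor:secondterm}). Without this sharp functional inequality the argument does not close, because the lower bound must hold for near-minimizers whose profile you do not control. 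Finally, you would also have to track carefully how the small losses of kinetic energy (from absorbing cross terms and from the $P_>\phi$ sector via the Landau gap $3B$) feed into the prefactors of the Nagy bound, since any $O(1/\ln B)$ loss in the $\|\partial_3\phi\|^2$ coefficient already costs $O(\ln B)$ in the final estimate; the paper spends the bulk of the proof of Theorem \ref{thm:lower} on exactly this bookkeeping.
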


\begin{remark}
By scaling it is easy to see that $E_B^\cl=\alpha^2 \tilde E_{\alpha^{-2}B}^\cl$, where $\tilde E_B$ is the same as $E_B$ but with $\alpha=1$ (see Section \ref{sec:funproof}). Thus, the asymptotics can also be written as
$$
E_B^\cl = B - \frac {\alpha^2}{48} \lk \ln \frac{B}{\alpha^2}  \rk^2 + \frac {\alpha^2}{12}  \left(\ln \frac B{\alpha^2} \right) \left( \ln \ln  \frac B{\alpha^2} \right) + O \lk \alpha^2 \ln \frac B{\alpha^2} \rk
\quad\text{as}\quad \frac B{\alpha^2} \to\infty \, .
$$
\end{remark}

The physics literature contains upper bounds on $E_B^\q$ and $E_B^\cl$ of the form $B-C\alpha^2 (\ln B)^2$ with explicit but non-optimal constants $C>0$; see \cite{WhPaRo,LeMa,Sa}. These bounds are based on trial function computations. In \cite{KoLeSm}, Kochetov, Leschke and Smondyrev have argued that the correct constant for $E_B^\q$ should be $-C= -1/48$ (in our units). Our theorem verifies this prediction rigorously.

Let us explain the physical intuition behind this problem and how the constant $-1/48$ arises. As $B\to \infty$ the motion of the electron is so fast with respect to that of the phonons (which we have fixed to be of order one) that it becomes uncorrelated. This was the approximation in Pekar's inequality \eqref{eq:pekar} and thus we may expect $E_B^\q$ and $E_B^\cl$ to have the same leading order behavior. Because there is no correlation, the electron can be treated separately from the field. For energetic reasons the electron will be confined to the lowest Landau level in the plane orthogonal to the magnetic field. This corresponds to a spatial extension of order $B^{-1/2}$ in this plane. The shape of the electron density with respect to the $x_3$-direction parallel to the magnetic field is most easily understood in the Pekar model. For smooth functions $\rho\geq 0$ on $\R^2$ with $\int_{\R^2}\rho \,dx_\bot=1$ we have
$$
B^2 \iint_{\R^2\times\R^2} \frac{\rho(B^{1/2} x_\bot)\,\rho(B^{1/2} y_\bot)}{\sqrt{(x_\bot-y_\bot)^2+(x_3-y_3)^2}} dx_\bot\,dy_\bot \sim (\ln B) \delta(x_3-y_3)
$$
as $B\to\infty$. (Here we wrote $x=(x_\bot,x_3)\in\R^2\times\R$.) This suggests that the energy due to the motion in the $x_3$-direction is given by the one-dimensional \emph{effective Pekar functional}
$$
\int_{\R} |f'|^2 \,dx_3 - \frac{\alpha \ln B}{2} \int_\R |f|^4 \,dx_3 \,.
$$
It turns out that the minimization problem for the latter functional can be solved explicitly and one obtains
$$
\inf\left\{ \int_{\R} |f'|^2 \,dx_3 - \frac{\alpha \ln B}{2} \int_\R |f|^4 \,dx_3:\ \int_\R |f|^2\,dx_3 = 1 \right\} = - \frac{\alpha^2 (\ln B)^2}{48} \,.
$$
This is the desired second term in our Theorems \ref{thm:operator} and \ref{thm:functional}.

Note also that the minimizer of the one-dimensional functional is localized on the scale $(\ln B)^{-1}$. This suggests that the electron density of a polaron in a strong magnetic field has the shape of a prolate ellipsoid with characteristic lengths $B^{-1/2}$ and $(\ln B)^{-1}$.

Remarkably, the one-dimensional polaron was introduced by Gross \cite{Gr} as a toy model for the three-dimensional problem, independently of any connection with magnetic fields. His paper also contains the solution of the one-dimensional minimization problem, although in the mathematical literature it can be traced back at least to \cite{Nag41}. As we have already mentioned, the connection between the one-dimensional polaron and the three-dimensional magnetopolaron is due to Kochetov, Leschke and Smondyrev \cite{KoLeSm}.

The above heuristics emphasize, in particular, that the motion in the direction of the $x_3$-axis differs crucially from the motion in the transverse plane. In the polaron context this observation is attributed to \cite{Ku}. A similar phenomenon occurs in other problems with a strong magnetic field, for example, for the one-electron atom, as treated by Avron, Herbst and Simon \cite{AvHeSi}, or for $N$-electron atoms, as treated by Lieb, Solovej and Yngvason \cite{LieSolYng94} and Baumgartner, Solovej and Yngvason \cite{BaSoYn}.

\bigskip

This paper is organized as follows. For pedagogic reasons we begin with the proof of the energy asymptotics in the Pekar model. In the next section we derive initial estimates on the Coulomb energy. In Section \ref{sec:oned} we explain the relation of these bounds to the above minimization problem in one dimension. We prove Theorem \ref{thm:functional} in Section \ref{sec:funproof}. First we derive an upper bound on $E_B^\cl$ by choosing an appropriate trial function. Then we refine the estimate on the Coulomb energy after projecting to the lowest Landau level and we use this to prove the lower bound on $E_B^\cl$. The claim of Theorem~\ref{thm:functional} follows directly from \eqref{eq:scaling}, Corollary~\ref{cor:upper}, and Theorem~\ref{thm:lower}.

Theorem \ref{thm:operator} about the energy asymptotics in the Fr\"ohlich model are proved in Section \ref{sec:fock} and Section \ref{sec:fockmain}. The upper bound follows immediately from \eqref{eq:pekar}. The proof of the lower bound consists of a reduction to the lowest Landau level, which accomplished in Section \ref{sec:fock}, and the analysis on that level. The main result in the latter analysis is Proposition \ref{pro:fockmain}, which is proved in Section \ref{sec:fockmain}.

\bigskip

\emph{Notation.} The letter $C$ stands for a positive constant whose value may change from line to line. The norm $\|\cdot\|$ denotes the standard norm in $L^2(\R^d)$ where $d=1,2,3$ is clear from the context.


\section{Bounds on the Coulomb energy}
\label{sec:basic}

First we establish some basic estimates and show that the energy functional $\mathcal{E}_B$ from \eqref{eq:pekarfunc} is well-defined on $H^1_A(\R^3)$. Let us introduce the notation
$$
D(\varphi,\phi) = \frac 12 \iint_{\R^3\times\R^3} \frac{\overline{\varphi(x)} \phi(y)}{|x-y|} \,dx\, dy \,.
$$
The Hardy-Littlewood-Sobolev inequality (see, e.g., \cite{LieLos01}),
$$
D ( |\phi|^2 ,|\phi|^2 ) \leq C \| \phi^2 \|^2_{6/5} \,,
$$
together with the H\"older inequality, then the Sobolev inequality, and finally the diamagnetic inequality yields
\begin{equation}
\label{eq:hls}
D ( |\phi|^2 ,|\phi|^2 ) \leq C \|\phi\|^3 \|\phi\|_6 \leq C \|\phi\|^3 \|\nabla|\phi|\| \leq C \| \phi \|^3  \| ( -i \nabla + A ) \phi \|  < \infty
\end{equation}
for $\phi \in H_A^1(\R^3)$. This bound easily implies that $E_B^\cl>-\infty$.

We also record the following bounds for later use. For any $\phi \in H^1_A(\R^3)$ and a.e. $x = (x_\perp, x_3) \in \R^2 \times \R$ we have
\begin{align}
\nonumber
|\phi(x_\perp,x_3)|^2 & = \re \lk \int_{-\infty}^{x_3} \overline{(\partial_t \phi)(x_\perp,t)} \phi(x_\perp,t) dt - \int_{x_3}^\infty \overline{(\partial_t \phi)(x_\perp,t)} \phi(x_\perp,t) dt \rk \\
\label{eq:basic1}
&\leq \lk \int_\R |(\partial_t \phi)(x_\perp,t)|^2 dt \rk^{1/2}  \lk \int_\R | \phi(x_\perp,t)|^2 dt \rk^{1/2}
\end{align}
and, therefore,
\begin{align}
\nonumber
\int_{\R^2} |\phi(x_\perp,x_3)|^2 dx_\perp 
&\leq \lk \int_{\R^2} \int_\R |(\partial_t \phi)(x_\perp,t)|^2 dt\,dx_\bot \rk^{1/2}  \lk \int_{\R^2} \int_\R | \phi(x_\perp,t)|^2 dt\,dx_\bot \rk^{1/2} \\
\label{eq:basic2}
&= \| \partial_3 \phi \| \, \|\phi\|\, .
\end{align}

Following \cite{LieSolYng94} we now prove an estimate on $D ( |\phi|^2 ,|\phi|^2 )$. In Section \ref{sec:oned} we will see that the main term of this bound leads to  the second term of the asymptotics of $E_B^\cl$.
 
\begin{proposition}
\label{pro:lsy}
For $\phi \in H_A^1(\R^3)$ and $B >1$ we have
$$
D ( |\phi|^2 ,|\phi|^2 )= \lk \frac{\ln B}{2} - \ln\ln B \rk \int_\R \left| \int_{\R^2} |\phi(x_\perp,x_3)|^2 dx_\perp \right|^2 dx_3 + R^{(1)}_B(\phi) + R^{(2)}_B(\phi)
$$
with remainder terms
$$
|R^{(1)}_B(\phi)| \leq  \frac{\ln B}2 \| \phi \|^4 + \frac{4}{(\ln B)^{1/2} } \| \phi \|^{5/2} \| \partial_3 \phi \|^{3/2}
$$
and
$$
R^{(2)}_B(\phi) =  \int_\R \iint_{\R^2 \times \R^2} |\phi(x_\perp,x_3)|^2 |\phi(y_\perp,x_3)|^2 K_B(x_\perp-y_\perp) dy_\perp dx_\perp dx_3 \, .
$$
Here $K_B$ is given by
$$
K_B(x_\perp) = \ln \lk 1+ \sqrt{1+(\ln B)|^2 |x_\perp|^2 } \rk - \ln \lk \sqrt B |x_\perp| \rk \, .
$$
\end{proposition}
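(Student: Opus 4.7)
The key algebraic observation is that, with $L := 1/\ln B$, one has $\sqrt{1+(\ln B)^2|x_\perp|^2} = (\ln B)\sqrt{L^2+|x_\perp|^2}$, so that recognizing an $\operatorname{arcsinh}$ gives
$$
K_B(x_\perp) + \tfrac{1}{2}\ln B - \ln\ln B \;=\; \tfrac{1}{2} \int_{-L}^L \frac{du}{\sqrt{|x_\perp|^2+u^2}}\,.
$$
In other words, $K_B$ shifted by this large constant is exactly the three-dimensional Coulomb kernel restricted to the longitudinal strip $|u|\le L$. My plan is then to change variables $u = y_3-x_3$, $v = y_\perp$ in $2D(|\phi|^2,|\phi|^2)$, split the $u$-integral at $|u|=L$, and in the inner region decompose $|\phi(v,x_3+u)|^2 = |\phi(v,x_3)|^2 + \bigl(|\phi(v,x_3+u)|^2 - |\phi(v,x_3)|^2\bigr)$. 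The first summand, paired with the identity above, reproduces precisely $2\bigl((\tfrac{1}{2}\ln B - \ln\ln B)\int \bar\rho^2\,dx_3 + R^{(2)}_B(\phi)\bigr)$, where $\bar\rho(x_3) := \int_{\R^2}|\phi(x_\perp,x_3)|^2\,dx_\perp$. What remains defines $R^{(1)}_B(\phi) = \tfrac{1}{2}(J_{1b}+J_2)$, with $J_2$ the entire $|u|>L$ contribution and $J_{1b}$ the difference-contribution from $|u|\le L$.

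The bound on $J_2$ is elementary: from $1/\sqrt{|x_\perp-v|^2+u^2}\le 1/|u|$ and integrating in the planar variables first (which replaces the $|\phi|^2$ factors by $\bar\rho$ factors), I obtain
$$
|J_2| \;\le\; \int_{|u|>L}\frac{du}{|u|}\int_{\R}\bar\rho(x_3)\bar\rho(x_3+u)\,dx_3 \;\le\; L^{-1}\|\bar\rho\|_1^2 \;=\; (\ln B)\|\phi\|^4\,,
$$
which is the first term in the announced bound on $|R^{(1)}_B(\phi)|$.

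The main obstacle is producing the $(\ln B)^{-1/2}$ scaling in the bound on $J_{1b}$. Writing $\Phi(v) := \bigl(\int|\phi(v,x_3)|^2 dx_3\bigr)^{1/2}$ and $\Psi(v) := \bigl(\int|\partial_3\phi(v,x_3)|^2 dx_3\bigr)^{1/2}$, inequality \eqref{eq:basic1} gives the pointwise bound $\bigl||\phi(v,x_3+u)|^2 - |\phi(v,x_3)|^2\bigr|\le 2\Phi(v)\Psi(v)$, while the fundamental theorem of calculus followed by Cauchy--Schwarz yields the $L^1_{x_3}$-bound $\bigl\||\phi(v,\cdot+u)|^2-|\phi(v,\cdot)|^2\bigr\|_{L^1}\le 2|u|\Phi(v)\Psi(v)$. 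Interpolating gives an $L^2_{x_3}$-bound of $2\sqrt{|u|}\,\Phi(v)\Psi(v)$. Pairing this via Cauchy--Schwarz in $x_3$ with $\bigl\||\phi(x_\perp,\cdot)|^2\bigr\|_{L^2_{x_3}} \le \Phi(x_\perp)^{3/2}\Psi(x_\perp)^{1/2}$ (which follows from $|\phi|^4\le \Phi\Psi|\phi|^2$, again by \eqref{eq:basic1}, integrated in $x_3$), then using the crude bound $1/\sqrt{|x_\perp-v|^2+u^2}\le 1/|u|$, the crucial identity $\int_{-L}^L|u|^{-1/2}du = 4\sqrt{L}$, and H\"older in the planar variables ($\int\Phi^{3/2}\Psi^{1/2}\le\|\phi\|^{3/2}\|\partial_3\phi\|^{1/2}$ and $\int\Phi\Psi\le\|\phi\|\|\partial_3\phi\|$), leads to $|J_{1b}|\le C(\ln B)^{-1/2}\|\phi\|^{5/2}\|\partial_3\phi\|^{3/2}$. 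The critical point is the $L^\infty$--$L^1$ interpolation to $L^2_{x_3}$: neither the pure pointwise bound nor the pure $L^1$ bound alone yields the claimed scaling, and the resulting $|u|^{1/2}$ factor is exactly what tames the $|u|^{-1}$ singularity of the Coulomb kernel into the integrable $|u|^{-1/2}$ on $[-L,L]$.
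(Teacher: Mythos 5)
Your proposal is correct and follows essentially the same route as the paper: split the longitudinal integration at $|y_3|=1/\ln B$, evaluate the inner region with $|\phi(y_\perp,y_3)|^2$ replaced by $|\phi(y_\perp,x_3)|^2$ via the explicit $\operatorname{arcsinh}$ integral (which is exactly how $K_B$ and the $\ln\ln B$ term arise there), bound the outer region by $\tfrac{\ln B}{2}\|\phi\|^4$, and control the replacement error using \eqref{eq:basic1} together with the fundamental theorem of calculus and Cauchy--Schwarz to gain the factor $\sqrt{|y_3|}$ that makes $\int_{|u|\le L}|u|^{-1/2}du=4\sqrt L$ appear. The only difference is bookkeeping in that last step (your $L^\infty$--$L^1$ interpolation to an $L^2_{x_3}$ bound plus Cauchy--Schwarz in $x_3$, versus the paper's pointwise factorization of the difference of squares), and your chain of inequalities in fact reproduces the stated constant $4(\ln B)^{-1/2}$ if the constants are tracked.
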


\begin{proof}
First, we rewrite
\begin{align}
\nonumber
& \frac 12 \iint \frac{|\phi(x)|^2 |\phi(y)|^2}{|x-y|} dy dx - \frac{\ln B}{2} \int |\phi(x)|^2 \int_{\R^2}  |\phi(y_\perp,x_3)|^2 dy_\perp dx \\
\nonumber
& = \frac{\ln B}{2} \int |\phi(x)|^2 \lk \int \frac{|\phi(x+y)|^2}{|y| \ln B } dy - \int_{\R^2} |\phi(x_\perp + y_\perp, x_3)|^2 dy_\perp \rk dx \\ 
\label{eq:remainders}
& =\frac{\ln B}{2} \int |\phi(x)|^2 \lk  r_B^{(1)}(x) + r_B^{(2)}(x) + r_B^{(3)}(x) \rk dx
\end{align}
with 
\begin{align*}
r_B^{(1)}(x) &= \int_{|y_3|\geq 1/\ln B}  \frac{|\phi(x+y)|^2}{|y|\ln B} dy \, ,\\
r_B^{(2)}(x) &=  \int_{|y_3|\leq 1/\ln B}  \frac{|\phi(x+y)|^2 - |\phi(x_\perp + y_\perp,x_3)|^2}{ |y| \ln B} dy \, , \\
r_B^{(3)}(x) &=  \int_{\R^2} |\phi(x_\perp + y_\perp,x_3)|^2 \lk \int_{|y_3|\leq 1/\ln B} \frac{1}{ |y| \ln B} dy_3 - 1 \rk dy_\perp \, .
\end{align*}
We immediately see that for all $x \in \R^3$
\begin{equation}
\label{eq:lsy1}
|r_B^{(1)}(x)| \leq \| \phi \|^2 \, .
\end{equation}

To estimate $r_B^{(2)}(x)$ we use the fact that
\begin{align*}
& \left| |\phi(x+y)|^2 - |\phi(x_\perp + y_\perp,x_3)|^2  \right| \\
& \leq  |\phi(x_\perp + y_\perp,x_3+y_3) -\phi(x_\perp + y_\perp,x_3)|   |\phi(x_\perp + y_\perp,x_3+y_3) +\phi(x_\perp + y_\perp,x_3)|  \, .
\end{align*}
The first factor is bounded by
$$
 \sqrt {|y_3|} \lk \int_\R \left| \partial_3 \phi(x_\perp+y_\perp,t)\right|^2 dt \rk^{1/2} 
$$
and by \eqref{eq:basic1} the second factor is bounded by 
$$
2 \lk \int_\R \left| \partial_3 \phi(x_\perp+y_\perp,t)\right|^2 dt \rk^{1/4} \lk \int_\R \left| \phi(x_\perp+y_\perp, t ) \right|^2 dt \rk^{1/4} \, .
$$
It follows that for a.e. $x \in \R^3$
\begin{align*}
|r_B^{(2)}(x)| \leq  & \, \frac{2}{\ln B} \int_{\R^2}  \lk \int_\R \left| \partial_3 \phi(y_\perp,t)\right|^2 dt \rk^{3/4} \lk \int_\R \left| \phi(y_\perp, t ) \right|^2 dt \rk^{1/4} dy_\perp \\
& \times \int_{|y_3|\leq 1/\ln B} \frac{1}{\sqrt{|y_3|}} dy_3 \\
= & \, \frac{8}{(\ln B)^{3/2}} \int_{\R^2}  \lk \int_\R \left| \partial_3 \phi(y_\perp,t)\right|^2 dt \rk^{3/4} \lk \int_\R \left| \phi(y_\perp, t ) \right|^2 dt \rk^{1/4} dy_\perp \,,
\end{align*}
and applying the H\"older inequality yields
\begin{equation}
\label{eq:lsy3}
|r_B^{(2)}(x)| \leq  8(\ln B)^{-3/2} \| \partial_3 \phi \|^{3/2} \| \phi \|^{1/2} \, .
\end{equation}

Finally, to evaluate $r_B^{(3)}(x)$, we calculate
\begin{align*}
 \int_{|y_3|\leq 1/\ln B} \frac{1}{ |y| \ln B} dy_3 
= \frac 2{\ln B} \lk \ln \lk 1 + \sqrt{(\ln B)^2 |y_\perp|^2 + 1 } \rk - \ln |y_\perp|  - \ln\ln B  \rk  \, .
\end{align*} 
Hence, we get
$$ 
 \int_{|y_3|\leq 1/\ln B} \frac{1}{|y| \ln B } dy_3   -1  = \frac{2}{\ln B} (K_B(y_\perp) - \ln \ln B )
$$
and it follows that
\begin{equation}
\label{eq:lsy2}
r_B^{(3)}(x) = \frac{2}{\ln B} \int_{\R^2} |\phi(y_\perp,x_3)|^2 (K_B(x_\perp - y_\perp) - \ln \ln B) dy_\perp \, .
\end{equation}

If we now put
$$
R^{(1)}_B(\phi) = \frac{\ln B}{2}  \int |\phi(x)|^2 \lk r_B^{(1)}(x)+ r_B^{(2)}(x) \rk dx
$$
and
$$
R^{(2)}_B(\phi) = \frac{\ln B}{2} \int |\phi(x)|^2  r_B^{(3)}(x)  dx  + \ln \ln B  \int_\R \left| \int_{\R^2} |\phi(x_\perp,x_3)|^2 dx_\perp \right|^2 dx_3 \,,
$$
then the claim follows from \eqref{eq:remainders}, \eqref{eq:lsy1}, \eqref{eq:lsy3}, and \eqref{eq:lsy2}.
\end{proof}


\section{The one-dimensional functional}
\label{sec:oned}

In order to motivate the material in this section, let us neglect for a moment the remainder terms in Proposition \ref{pro:lsy} and let us assume that $\phi(x_\perp,x_3) = g(x_\perp) f(x_3)$ with $\|g\|_{L^2(\R^2)}  = \| f\|_{L^2(\R)} = 1$. Then Proposition \ref{pro:lsy} would imply
$$
\mathcal{E}_B[\phi] \sim (g,H_B g) + \int_\R |(\partial_3 f)(x_3)|^2 dx_3 - C_B  \int_\R |f(x_3)|^4 dx_3 
$$
with $C_B = (\ln B)/2 - \ln \ln B$. It is well known that the Landau Hamiltonian satisfies
$$
\inf_{\|g\| = 1} (g,H_B g) = B \, .
$$
Thus, to prove our result under the simplifying assumptions made above it would remain to establish that the infimum  of the one-dimensional functional $\int_\R |f'(t)|^2 dt - C_B \int_\R |f(t)|^4 dt$ is given by $-C_B^2/12$. In fact, this result is implicitly contained in \cite{Nag41}. We formulate this result as follows.

\begin{lemma}
\label{lem:oned}
Let $a,b>0$. Then
$$
\inf \left\{ \int_\R |f'(t)|^2 dt - b \int_\R |f(t)|^4 dt \, : \, \int_\R |f(t)|^2 dt = a \right\} = - \frac{b^2}{12} a^3 \,,
$$
and the infimum is attained at 
$$
f_{a,b}(t) = \frac{a \sqrt b}{2}  \lk \cosh\frac{abt}2 \rk^{-1} \, .
$$
\end{lemma}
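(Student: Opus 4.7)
The plan is to reduce by scaling to a normalized minimization problem and then solve the latter via a sharp one-dimensional Gagliardo--Nirenberg inequality, whose extremizer is identified by solving the Euler--Lagrange ODE.

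First I would substitute $f(t) = a\sqrt{b}\, g(abt)$. A change of variables gives $\int|f|^2\,dt = a\int|g|^2\,ds$ and $\int|f'|^2\,dt - b\int|f|^4\,dt = a^3 b^2\bigl(\int|g'|^2\,ds - \int|g|^4\,ds\bigr)$, so it suffices to show
$$
\inf\{J[g] : g\in H^1(\R),\ \|g\|_2^2 = 1\} = -\tfrac{1}{12},
\qquad J[g] := \|g'\|_2^2 - \|g\|_4^4,
$$
with minimizer $g_*(s) := \tfrac12 \operatorname{sech}(s/2)$; undoing the scaling then recovers $f_{a,b}(t) = a\sqrt b\, g_*(abt)$ and the value $-b^2 a^3/12$.

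For the lower bound I would invoke the sharp one-dimensional Gagliardo--Nirenberg inequality $\|g\|_4^4 \le (1/\sqrt 3)\|g\|_2^3 \|g'\|_2$, with equality precisely for translates/rescalings of $\operatorname{sech}$. On the $L^2$-unit sphere this gives
$$
J[g] \ge \|g'\|_2^2 - \tfrac{1}{\sqrt 3}\|g'\|_2 = \bigl(\|g'\|_2 - \tfrac{1}{2\sqrt 3}\bigr)^{\!2} - \tfrac{1}{12} \ge -\tfrac{1}{12},
$$
with equality forcing $\|g'\|_2 = 1/(2\sqrt 3)$ and $g$ to be a GN extremizer, which together with the $L^2$-constraint pins $g$ down as a translate of $g_*$.

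The real work is therefore in the sharp GN inequality. The scale-invariant ratio $Q[g] := \|g\|_4^4/(\|g\|_2^3 \|g'\|_2)$ is unchanged under $g(t)\mapsto \lambda g(\mu t)$, and by Schwarz symmetrization (which preserves $L^p$-norms and does not increase $\|g'\|_2$) a maximizing sequence can be taken nonnegative and symmetric-decreasing, which restores $L^4$-compactness and produces a maximizer $g$. This $g$ satisfies the Euler--Lagrange equation $-g'' - 2g^3 = \mu g$ for some $\mu\in\R$. Multiplying by $g'$ and integrating, using $g, g' \to 0$ at $\pm\infty$, yields the first integral $(g')^2 = -g^2(g^2 + \mu)$, which forces $\mu = -k^2 < 0$; separation of variables then gives $g(t) = k\operatorname{sech}(k(t-t_0))$. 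Imposing $\|g\|_2^2 = 1$ picks out $k = 1/2$ and hence $g = g_*$ up to translation. A direct computation of $\|g_*'\|_2^2 = 1/12$ and $\|g_*\|_4^4 = 1/6$ yields $Q[g_*] = 1/\sqrt 3$ and $J[g_*] = -1/12$. The main obstacle is the existence step for the GN maximizer, since $H^1(\R)\hookrightarrow L^4(\R)$ is not compact; this is resolved by reducing to symmetric-decreasing functions, for which the classical pointwise bound $|g(t)|^2\le \|g\|_2^2/|t|$ restores tightness of maximizing sequences.
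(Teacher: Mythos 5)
Your proof is correct and, at its core, rests on the same key tool as the paper: the sharp one-dimensional Gagliardo--Nirenberg inequality $\|g\|_4^4 \le 3^{-1/2}\|g\|_2^3\|g'\|_2$. The route to the lower bound is slightly rearranged but equivalent: you scale out the parameters $a,b$ at the start to reach the normalized functional $J[g]=\|g'\|_2^2-\|g\|_4^4$, then apply GN and complete the square; the paper instead inserts a one-parameter scaling $f(t)=\sqrt\lambda\,g(\lambda t)$, optimizes over $\lambda$ explicitly, and then applies GN in the form $\|g\|_4^8\|g'\|_2^{-2}\le C_4^{-8}\|g\|_2^6$. Both manipulations are small algebra around the same inequality. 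The substantive difference is that the paper treats the sharp GN constant $C_4=3^{1/8}$ as known, citing Nagy (1941) and leaving the check that $f_{a,b}$ attains the infimum to ``an elementary calculation,'' whereas you sketch a full derivation: Schwarz symmetrization plus the tail bound $|g(t)|^2\le\|g\|_2^2/(2|t|)$ for symmetric-decreasing $g$ to get existence of a maximizer, then the Euler--Lagrange ODE $-g''-2g^3=\mu g$ with first integral $(g')^2=-g^2(g^2+\mu)$, solved by $k\operatorname{sech}(k(t-t_0))$. This makes your argument self-contained where the paper leans on a citation; the price is a compactness argument that you describe only in outline, but the idea (rearrangement restores tightness, weak $H^1$ limit plus local compact embedding plus tail decay gives $L^4$ convergence, and $Q$ bounded away from zero prevents the limit from vanishing) is standard and sound. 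One small point worth flagging if you were to flesh this out: the EL equation $-g''-2g^3=\mu g$ is the stationarity condition for $J$ on the $L^2$-sphere rather than for the ratio $Q$ directly, so you should note that scale invariance lets you pass from a maximizer of $Q$ to a normalized critical point of $J$; as written the two are conflated but the conclusion is unaffected.
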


\begin{proof}
This follows from the estimate \cite{Nag41}
\begin{equation}
\label{eq:scalinv}
\| g' \|_2^\theta \, \| g\|^{1-\theta}_2 \geq C_q \, \| g \|_q \, , \qquad \theta = \frac 12 - \frac 1q \, , \qquad q >2 \,  ,
\end{equation}
where
$$
C_q = (q\theta)^{-1/q} \lk \frac{2+q\theta}{2q\theta} \rk^{(2+q\theta)/2q} \lk \frac{\sqrt 2 \Gamma(3/2+1/q\theta)}{\Gamma(3/2) \Gamma(1+1/q\theta)} \rk^{-\theta} \, .
$$
For $q = 4$ we have $\theta = 1/4$ and $C_4 = 3^{1/8}$. Given $f$ with $\| f \|_2^2 = a$ we set $f(t) = \sqrt \lambda g(\lambda t)$ for $\lambda > 0$, so that $\| g \|_2^2 = a$ and
$$
 \int_\R |f'(t)|^2 dt - b \int_\R |f(t)|^4 dt = \lambda^2  \int_\R |g'(t)|^2 dt - b \lambda \int_\R |g(t)|^4 dt \geq - \frac {b^2}{4} \frac{\|g\|^8_4}{\| g' \|_2^2} \, ,
$$
where the last estimate follows from minimizing in $\lambda > 0$. From \eqref{eq:scalinv} we learn that $\| g\|_4^8 \|g'\|_2^{-2} \leq C_4^{-8} \|g\|_2^6 = a^3/3$. This yields the first claim. The fact that the infimum is attained at $f_{a,b}$ can be checked by an elementary calculation.
\end{proof}

We conclude that the main term of Proposition \ref{pro:lsy} leads to a sharp lower bound, even if $\phi$ is not given as a product:

\begin{corollary}
\label{cor:secondterm}
For $\phi \in H^1_A(\R^3)$ and $b > 0$
$$
\| \partial_3 \phi \|^2 - b \int_\R \left| \int_{\R^2} |\phi(x_\perp,x_3)|^2 dx_\perp \right|^2 dx_3 \geq - \frac{b^2}{12} \|\phi\|^6 \, .
$$
\end{corollary}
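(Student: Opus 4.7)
The strategy is to reduce the three-dimensional inequality to a one-dimensional one and then invoke Lemma~\ref{lem:oned}. Given $\phi\in H^1_A(\R^3)$, I would define the density profile along the magnetic field axis
$$
\rho(x_3) = \int_{\R^2} |\phi(x_\perp,x_3)|^2\,dx_\perp
\quad\text{and}\quad f(x_3) = \sqrt{\rho(x_3)}\,.
$$
Then by construction $\int_\R |f|^2\,dx_3 = \|\phi\|^2$ and $\int_\R |f|^4\,dx_3 = \int_\R |\int_{\R^2} |\phi(x_\perp,x_3)|^2\,dx_\perp|^2\,dx_3$, so the quartic term in the corollary equals $b\int_\R |f|^4\,dx_3$ exactly.

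The key inequality is that the one-dimensional kinetic energy of $f$ controls the full $x_3$-kinetic energy of $\phi$, namely
$$
\int_\R |f'(x_3)|^2\,dx_3 \le \|\partial_3 \phi\|^2 \, .
$$
To obtain this, I would note that $\rho'(x_3) = 2\,\re \int_{\R^2} \overline{\phi(x_\perp,x_3)}\,(\partial_3\phi)(x_\perp,x_3)\,dx_\perp$, so Cauchy--Schwarz in $x_\perp$ yields
$$
|\rho'(x_3)|^2 \le 4\,\rho(x_3) \int_{\R^2} |(\partial_3\phi)(x_\perp,x_3)|^2\,dx_\perp \, .
$$
Since $|f'|^2 = |\rho'|^2/(4\rho)$ at points where $\rho>0$, this gives $|f'(x_3)|^2 \le \int_{\R^2} |(\partial_3\phi)(x_\perp,x_3)|^2\,dx_\perp$ pointwise a.e., and integration in $x_3$ yields the displayed bound.

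Combining, I would apply Lemma~\ref{lem:oned} to $f$ with $a = \|\phi\|^2$, obtaining
$$
\|\partial_3\phi\|^2 - b\int_\R |f|^4\,dx_3
\;\ge\; \int_\R |f'|^2\,dx_3 - b\int_\R |f|^4\,dx_3
\;\ge\; -\frac{b^2}{12}\,\|\phi\|^6 \, ,
$$
which is the desired inequality. The only technical nuisance is that $f=\sqrt\rho$ need not be differentiable where $\rho$ vanishes; this is the main obstacle, but it is standard: one regularizes by working with $f_\varepsilon = \sqrt{\rho + \varepsilon}$ (whose weak derivative is $\rho'/(2\sqrt{\rho+\varepsilon})$) and lets $\varepsilon \to 0^+$ after applying Lemma~\ref{lem:oned} to a truncation, using monotone/dominated convergence on the $L^2$ and $L^4$ norms. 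Note that the magnetic vector potential plays no role, as only $|\phi|$ enters through $\rho$.
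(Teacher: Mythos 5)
Your proof is the same as the paper's: both introduce $f=\sqrt{\rho}$ with $\rho(x_3)=\int_{\R^2}|\phi(x_\perp,x_3)|^2\,dx_\perp$, use the Schwarz inequality in $x_\perp$ to get $\int_\R|f'|^2\,dx_3\le\|\partial_3\phi\|^2$, and then apply Lemma~\ref{lem:oned} with $a=\|\phi\|^2$. Your additional remark about regularizing at zeros of $\rho$ is a sound way to handle a technicality the paper leaves implicit.
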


\begin{proof}
For $x_3 \in \R$ let us introduce the function
$$
f(x_3) = \lk \int_{\R^2} |\phi(x_\perp,x_3)|^2 dx_\perp \rk^{1/2} 
$$
such that
$$
\int_\R \left| \int_{\R^2} |\phi(x_\perp,x_3)|^2 dx_\perp \right|^2 dx_3 = \int_\R |f(x_3)|^4 dx_3 \, .
$$
Moreover, by the Schwarz inequality we get $|\partial_3 f(x_3)|^2 \leq  \int_{\R^2} |\partial_3 \phi(x_\perp,x_3)|^2 dx_\perp$ and thus $\int_\R |\partial_3 f(x_3)|^2 dx_3 \leq \| \partial_3 \phi \|^2$. Applying Lemma \ref{lem:oned} yields
$$
\| \partial_3 \phi \|^2 - b \int_\R \left| \int_{\R^2} |\phi(x_\perp,x_3)|^2 dx_\perp \right|^2 dx_3 \geq  \int_\R |\partial_3 f(x_3)|^2 dx_3 - b  \int_\R |f(x_3)|^4 dx_3 \geq - \frac{b^2}{12} \| \phi \|^6.
$$
This finishes the proof.
\end{proof}


\section{Proof of Theorem \ref{thm:functional}}
\label{sec:funproof}

Let us first explain that it suffices to prove Theorem \ref{thm:functional} for $\alpha =1$. To this end we want to make the dependence on $\alpha$ explicit and write $\mathcal E_{B,\alpha}$ and $E_{B,\alpha}^\cl$ for $\mathcal E_{B}$ and $E_{B}^\cl$, respectively.  

Assume that $\phi \in H_A^1(\R^3)$ is normalized. Then $\phi_\alpha(x) = \alpha^{3/2} \phi(\alpha x)$ is also normalized and we have $D( |\phi_\alpha|^2, |\phi_\alpha|^2) = \alpha D(|\phi|^2, |\phi|^2)$ and $(\phi_\alpha, (H_B-\partial_3^2)\phi_\alpha) = \alpha^2 (\phi, (H_{\alpha^{-2}B}-\partial_3^2)\phi)$. Hence, we find $\mathcal E_{B,\alpha}[\phi_\alpha] = \alpha^2 \mathcal E_{\alpha^{-2}B, 1} [\phi]$ and, in particular,
\begin{equation}
\label{eq:scaling}
E^\cl_{B,\alpha} = \alpha^2 E^\cl_{B\alpha^{-2}, 1} \, .
\end{equation}
Thus, for the remainder of this section we assume $\alpha = 1$.

\subsection{The upper bound}
\label{ssec:upper}

The considerations in Section \ref{sec:oned} suggest to derive an upper bound on $E_B$ using the trial  function
$$
\varphi_B(x_\perp,x_3) = \sqrt { \frac B{2\pi} } \exp \lk - \frac B4 |x_\perp|^2 \rk \frac{\sqrt {{|\ln B|}}}{2 \sqrt 2} \lk \cosh \frac{|\ln B|\, x_3}4 \rk^{-1} \, .
$$
Note that this function is of the form $g(x_\perp) f(x_3)$, where $g$ is a ground state of the Landau Hamiltonian $H_B$ and $f = f_{1,|\ln B|/2}$ was introduced in Lemma \ref{lem:oned}.
\begin{corollary}
\label{cor:upper}
There is a constant $C > 0$ such that for $B > 1$ the estimate
$$
E_B^\cl \leq \mathcal E_B[\varphi_B] \leq B - \frac1{48} (\ln B)^2 + \frac 1{12}  (\ln B) ( \ln \ln B) + C \ln B 
$$
holds.
\end{corollary}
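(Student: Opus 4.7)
The trial function factorises as $\varphi_B(x_\perp,x_3)=g(x_\perp)\,f(x_3)$, where $g(x_\perp)=\sqrt{B/(2\pi)}\exp(-B|x_\perp|^2/4)$ is the normalized lowest Landau level of $H_B$ (so $H_B g = B g$) and $f = f_{1,\ln B/2}$ is the one-dimensional minimiser produced by Lemma~\ref{lem:oned}. Hence $\|\varphi_B\|=1$ and the kinetic part of $\mathcal{E}_B[\varphi_B]$ evaluates to $(\varphi_B,H_B\varphi_B)+(\varphi_B,-\partial_3^2\varphi_B)=B+\|f'\|^2$. Since the product structure gives $\int_{\R^2}|\varphi_B(x_\perp,x_3)|^2\,dx_\perp=|f(x_3)|^2$, the plan is to apply Proposition~\ref{pro:lsy} to $\varphi_B$, so the leading Coulomb contribution becomes $(\ln B/2-\ln\ln B)\,\|f\|_4^4$, identify all one-dimensional quantities explicitly, and bound both remainders by $O(\ln B)$.

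\textbf{One-dimensional constants.} A direct integration of the $\mathrm{sech}$-profile of $f=f_{1,\ln B/2}$ (equivalently, combining Lemma~\ref{lem:oned} with the virial identity obtained by setting $\lambda=1$ in $G(\lambda)=\lambda^2\|f'\|^2-(\ln B/2)\lambda\|f\|_4^4$) yields
$$
\|f'\|^2=\frac{(\ln B)^2}{48},\qquad \|f\|_4^4=\frac{\ln B}{12},
$$
so in particular $\|f'\|^2-(\ln B/2)\|f\|_4^4=-(\ln B)^2/48$, which is the minimum value given by Lemma~\ref{lem:oned}. Inserting Proposition~\ref{pro:lsy} into $\mathcal{E}_B[\varphi_B]=B+\|f'\|^2-D(|\varphi_B|^2,|\varphi_B|^2)$ collapses the first two terms to $-(\ln B)^2/48$ and produces
$$
\mathcal{E}_B[\varphi_B]=B-\frac{(\ln B)^2}{48}+\frac{(\ln B)(\ln\ln B)}{12}-R^{(1)}_B(\varphi_B)-R^{(2)}_B(\varphi_B),
$$
so everything reduces to the two remainder estimates $|R^{(j)}_B(\varphi_B)|=O(\ln B)$, $j=1,2$.

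\textbf{Controlling the remainders.} The bound in Proposition~\ref{pro:lsy} together with $\|\varphi_B\|=1$ and $\|\partial_3\varphi_B\|^2=\|f'\|^2=(\ln B)^2/48$ immediately gives $|R^{(1)}_B(\varphi_B)|\leq (\ln B)/2+C\ln B=O(\ln B)$. For $R^{(2)}_B(\varphi_B)$ the product form of $\varphi_B$ factorises the integral into
$$
R^{(2)}_B(\varphi_B)=\|f\|_4^4\iint_{\R^2\times\R^2}|g(x_\perp)|^2|g(y_\perp)|^2\,K_B(x_\perp-y_\perp)\,dx_\perp\,dy_\perp.
$$
I rescale $u=\sqrt B\,x_\perp$, $v=\sqrt B\,y_\perp$ so that the double integral becomes a two-dimensional Gaussian expectation of $K_B(w/\sqrt B)$ with $w=u-v$, and I treat the two summands of $K_B$ separately. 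The first is controlled by $1+\sqrt{1+\epsilon|w|^2}\leq 2\sqrt{1+\epsilon|w|^2}$ (valid for all $w$), giving $\ln(1+\sqrt{1+\epsilon|w|^2})\leq \ln 2+\tfrac12\ln(1+\epsilon|w|^2)\leq \ln 2+\epsilon|w|^2/2$ with $\epsilon=(\ln B)^2/B\to 0$, whose Gaussian expectation is $\ln 2+o(1)$; the second summand $-\ln|w|$ is locally integrable on $\R^2$ and decays at infinity against the Gaussian, contributing a finite constant independent of $B$. Thus the double integral is $O(1)$ and $R^{(2)}_B(\varphi_B)=\|f\|_4^4\cdot O(1)=O(\ln B)$.

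\textbf{Main obstacle.} The only delicate step is the uniform bound on the $K_B$-integral: the magnetic length $B^{-1/2}$ on which $|g|^2$ concentrates is much smaller than the scale $(\ln B)^{-1}$ built into $K_B$, and one has to verify that this scale mismatch, combined with Gaussian decay, yields an $O(1)$ bound that does not degrade as $B\to\infty$. Once that estimate is in hand, combining the three displays above delivers the asserted upper bound on $\mathcal{E}_B[\varphi_B]$ and hence on $E_B^\cl$.
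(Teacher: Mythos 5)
Your argument is correct and follows essentially the same route as the paper: evaluate the one-dimensional constants $\|f'\|^2=(\ln B)^2/48$ and $\|f\|_4^4=\ln B/12$ for the product trial function, insert Proposition~\ref{pro:lsy}, and control the two remainders using the stated estimate on $R^{(1)}_B$ and a Gaussian integral of $K_B$ for $R^{(2)}_B$. The only (inconsequential) difference is that you prove a two-sided bound $|R^{(2)}_B(\varphi_B)|=O(\ln B)$, whereas the paper only needs and records the one-sided bound $R^{(2)}_B(\varphi_B)\geq -C\ln B$ via $K_B(x_\perp)\geq -\ln_+(\sqrt B|x_\perp|)$.
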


\begin{proof}
By elementary calculations we see that $\| \varphi_B \| = 1$ and $(\varphi_B, H_B \varphi_B) = B$. Moreover, the results in Section \ref{sec:oned} show that
\begin{align*}
&\| \partial_3 \varphi_B\|^2 - \lk \frac{\ln B}{2} - \ln \ln B \rk \int_\R \left| \int_{\R^2} |\varphi_B(x_\perp,x_3)|^2 dx_\perp \right|^2 dx_3 \\
&= - \frac{1}{12}  \lk \frac{\ln B}{2} - \ln \ln B \rk ^2 \\
&\leq - \frac{(\ln B)^2}{48} + \frac { (\ln B) (\ln \ln B)}{12} \, .
\end{align*}
Thus, Proposition \ref{pro:lsy} yields
$$
E_B^\cl \leq \mathcal E_B[\varphi_B] \leq B -  \frac{(\ln B)^2}{48} + \frac { (\ln B) (\ln \ln B)}{12}  + |R_B^{(1)}(\varphi_B)| - R_B^{(2)}(\varphi_B) \, .
$$
Since $\|\partial_3 \varphi_B\| \leq C \ln B$, we have $|R^{(1)}_B(\varphi_B)| \leq C  \ln B$. Moreover, we bound $K_B(x_\bot) \geq -\ln_+(\sqrt B |x_\bot|)$ and deduce that $R_B^{(2)}(\varphi_B) \geq -C\ln B$. This proves Corollary \ref{cor:upper}.
\end{proof}

\subsection{The lower bound}
\label{ssec:lower}

In this subsection we supplement the upper bound in Corollary~\ref{cor:upper} with a corresponding lower bound. Theorem \ref{thm:functional} follows directly from these two results.

\begin{theorem}
\label{thm:lower}
There is a constant $C>0$ such that for all $B\geq C$ the estimate
$$
E_B^\cl \geq B - \frac 1{48} (\ln B)^2 + \frac 1{12} (\ln B) (\ln \ln B) - C \ln B 
$$
holds.
\end{theorem}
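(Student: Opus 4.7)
\emph{Proof plan.}
The strategy is to combine the Coulomb energy expansion of Proposition~\ref{pro:lsy} with the one-dimensional minimization bound of Corollary~\ref{cor:secondterm} and the Landau gap $(\phi,H_B\phi)\ge B\|\phi\|^2$; a reduction to the lowest Landau level will be needed to control the remainder $R_B^{(2)}$ at the required accuracy $O(\ln B)$.

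Let $\phi\in H^1_A(\R^3)$ be normalized. By Corollary~\ref{cor:upper} it suffices to consider $\phi$ with $\mathcal{E}_B[\phi]\le B$. Set $T:=(\phi,H_B\phi)-B\ge 0$, $K:=\|\partial_3\phi\|^2$, and $f(x_3):=\big(\int_{\R^2}|\phi(x_\perp,x_3)|^2\,dx_\perp\big)^{1/2}$, so that $\int_\R f^2\,dx_3 = 1$. Applying Proposition~\ref{pro:lsy} with $b_B=(\ln B)/2-\ln\ln B$ and Corollary~\ref{cor:secondterm} to $f$ (using $\|f'\|_2^2\le K$), one obtains
\[
\mathcal{E}_B[\phi]\ge B+T-\frac{(\ln B)^2}{48}+\frac{(\ln B)(\ln\ln B)}{12}-\frac{(\ln\ln B)^2}{12}-|R_B^{(1)}(\phi)|-R_B^{(2)}(\phi).
\]
Since the $(\ln\ln B)^2$-term is of lower order than $\ln B$, it suffices to prove $T-|R_B^{(1)}(\phi)|-R_B^{(2)}(\phi)\ge -C\ln B$ under the standing assumption $\mathcal{E}_B[\phi]\le B$.

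For $R_B^{(1)}$ the bound of Proposition~\ref{pro:lsy} reads $|R_B^{(1)}(\phi)|\le \tfrac{\ln B}{2}+4(\ln B)^{-1/2}K^{3/4}$, so an a priori estimate $K\le C(\ln B)^2$ yields $|R_B^{(1)}(\phi)|\le C\ln B$. I would obtain such a bound by a short bootstrap from the energy inequality, using the 1D Sobolev estimate $\int f^4\le C\sqrt K$ to control the main Coulomb term and a coarse pointwise bound $|K_B(x_\perp)|\le C(\ln B+|\ln(\sqrt B|x_\perp|)|)$ to show $|R_B^{(2)}(\phi)|\le C(\ln B)(\sqrt K+1)$; the resulting inequality $T+K\le C(\ln B)\sqrt K+O(\ln B)$ then forces $K\le C(\ln B)^2$.

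The decisive and hardest step is the refined upper bound on $R_B^{(2)}(\phi)$, since the kernel $K_B$ is positive and logarithmically singular on the transverse ball of radius $\sim B^{-1/2}$ (where $K_B(x_\perp)\sim -\ln(\sqrt B|x_\perp|)$) but only of size $O(\ln\ln B)$ outside it, so the coarse bound above loses one factor of $\ln B$. To remove this loss I would introduce the projection $P_0$ onto the lowest Landau level in the $(x_1,x_2)$-plane, write $\phi=\phi_0+\phi_\perp$ with $\phi_0=P_0\phi$, and use the gap to deduce $\|\phi_\perp\|^2\le T/(2B)$. The density $|\phi_0(\cdot,x_3)|^2$ is then a Gaussian smoothing on the transverse scale $B^{-1/2}$—precisely the scale of the singular part of $K_B$—and integrating $K_B$ against this profile extracts only a factor of $\ln\ln B$, giving a contribution of $O(\ln B)$ after integration in $x_3$ and use of the a priori bound on $K$. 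The cross and $\phi_\perp$-only contributions to $R_B^{(2)}$ are handled by Cauchy–Schwarz together with $\|\phi_\perp\|^2\le T/(2B)$, producing an error of order $T+\ln B$ that is absorbed by the positive $T$. Carrying out this refined transverse analysis on the lowest Landau level is the main obstacle; once it is in place, combining all the bounds yields the theorem.
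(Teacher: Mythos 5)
Your proposal correctly identifies the central mechanism of the proof: projecting to the lowest Landau level so that the Gaussian regularity of $P_0\phi$ at the transverse scale $B^{-1/2}$ neutralizes the singular part $K_B^{(2)}(x_\perp)=-\ln(\sqrt B|x_\perp|)\chi_{\{|x_\perp|\le B^{-1/2}\}}$ of the remainder kernel. This is exactly the content of the paper's Lemma~\ref{lem:lsyprojected} (where, incidentally, the gain is $O(1)$ per transverse integration, not $O(\ln\ln B)$; the final $O(\ln B)$ comes from $\int_\R(\int_{\R^2}|P_0\phi|^2dx_\perp)^2dx_3\le\|P_0\phi\|^3\|\partial_3P_0\phi\|$). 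However, you organize the proof differently, and the difference opens a real gap: you apply Proposition~\ref{pro:lsy} to the full $\phi$ and then expand $|\phi|^2=|P_0\phi+P_>\phi|^2$ \emph{inside} $R_B^{(2)}$. This produces contributions to $R_B^{(2)}$ with $|P_>\phi|^2$ or $\overline{P_0\phi}\,P_>\phi$ at one or both slots, and for these the $B^{-1/2}$-scale singularity of $K_B^{(2)}$ is no longer tamed by any Landau-level $L^\infty$ bound. "Cauchy--Schwarz plus $\|P_>\phi\|^2\le T/(2B)$" is not enough as stated: one must control e.g.\ $\sup_{x_3}\|P_>\phi(\cdot,x_3)\|_{L^4(\R^2)}$ via diamagnetic Sobolev estimates and carefully track powers of $B$, and the kernel $K_B$ is not sign-definite, so no free positivity is available. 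The paper avoids this entirely by using its Lemma~\ref{lem:project} \emph{first}: positivity of the Coulomb form $D$ (via $D(f,g)+D(g,f)\le D(f,f)+D(g,g)$) bounds $D(|\phi|^2,|\phi|^2)$ by the two diagonal quantities $D(|P_0\phi|^2,|P_0\phi|^2)$ and $D(|P_>\phi|^2,|P_>\phi|^2)$, the latter is controlled wholesale by HLS and the spectral gap, and Proposition~\ref{pro:lsy} is then applied only to $P_0\phi$ — so $R_B^{(2)}$ never sees $P_>\phi$ at all.

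A second, smaller gap is the bootstrap for $K=\|\partial_3\phi\|^2\le C(\ln B)^2$. Your coarse bound $|R_B^{(2)}(\phi)|\le C(\ln B)(\sqrt K+1)$ is asserted from a pointwise bound on $K_B$, but for a general $\phi\in H^1_A$ (not yet in the lowest Landau level) the singular $-\ln(\sqrt B|x_\perp-y_\perp|)$ piece cannot be integrated against $|\phi(x_\perp,x_3)|^2|\phi(y_\perp,x_3)|^2$ at this cost without some smoothness or $L^p$ information in $x_\perp$, which is not available a priori. The paper sidesteps the bootstrap altogether: after the reduction to $P_0\phi$ it retains an $\epsilon\|\partial_3 P_0\phi\|^2$ term in the lower bound and optimizes over $\|\partial_3 P_0\phi\|$ to absorb the $\|\partial_3 P_0\phi\|^{3/2}$ and $\|\partial_3 P_0\phi\|$ error terms, choosing $\epsilon$ and $\tau$ of size $(\ln B)^{-1}$. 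I would recommend restructuring your argument along these lines: decompose $D$ via the positivity trick before applying Proposition~\ref{pro:lsy}, and replace the bootstrap by keeping a small fraction of the positive $\|\partial_3P_0\phi\|^2$ term.
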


To derive this estimate we first project in the first two coordinates onto the ground state of the two-dimensional Landau Hamiltonian $H_B$. We recall (see, e.g., \cite{LanLif76}) that the projector onto the lowest Landau level is given by the integral operator $P_0$ in $L^2(\R^2)$ with integral kernel
\begin{equation}
\label{eq:prokernel}
P_0(x_\perp,y_\perp) = \frac{B}{2 \pi} e^{-B|x_\perp-y_\perp|^2/4} e^{iB(x_1 y_2 - x_2 y_1)/2} \, .
\end{equation}
We use the same notation for this operator acting in $L^2(\R^3)$ (and, later, in $L^2(\R^3)\otimes\mathcal F$). Since $P_0$ commutes with $H_B$ and $\partial_3$, we have
\begin{equation}
\label{eq:kinproject}
\|(-i\nabla+A)\phi\|^2 = \|(-i\nabla+A)P_0\phi\|^2 + \|(-i\nabla+A)P_>\phi\|^2 \,, 
\end{equation}
where $P_>=1-P_0$.

We now write $\nabla^\perp=(\partial_1,\partial_2)$ and $A^\perp=(A_1,A_2)$. Since $P_0$ projects onto the lowest Landau level, we have $H_B P_0 = B P_0$ and, thus,
\begin{equation}
\label{eq:b}
\|(-i\nabla^\perp+A^\perp)P_0\phi\|^2 = B \| P_0 \phi \|^2 \, ,
\end{equation}
Moreover, the structure of the spectrum of the Landau Hamiltonian implies that
\begin{equation}
\label{eq:bcomp}
\|(-i\nabla^\perp+A^\perp)P_>\phi\|^2 \geq 3B \| P_> \phi \|^2 \, .
\end{equation}

The following lemma shows how $D ( |\phi|^2 ,|\phi|^2 )$ behaves when we project to the lowest Landau level. For this term there appear off-diagonal terms, however, they can be bounded by the diagonal terms.

\begin{lemma}
\label{lem:project}
There is a constant $C > 0$ such that for all $\phi$ and for all $0 < \tau \leq 1$,
$$
D ( |\phi|^2 ,|\phi|^2 ) \leq  (1+\tau) D ( |P_0\phi|^2 ,|P_0\phi|^2 ) + C \tau^{-3} D ( |P_>\phi|^2 ,|P_>\phi|^2 ) \, .
$$
\end{lemma}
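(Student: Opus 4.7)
The plan is to decompose $\phi = P_0\phi + P_>\phi$, obtain a pointwise bound on $|\phi|^2$ by Young's inequality, lift it to the quadratic form $D$ using nonnegativity of the Coulomb kernel, and finally apply Young's inequality a second time to split the two remaining $D$-terms with the asymmetric coefficients $1+\tau$ and $C\tau^{-3}$.

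\emph{Step 1 (pointwise bound).} Writing $\phi = P_0\phi + P_>\phi$ and expanding the square, the AM--GM inequality $2|P_0\phi||P_>\phi| \leq \epsilon|P_0\phi|^2 + \epsilon^{-1}|P_>\phi|^2$ yields, for every $\epsilon>0$,
$$
|\phi(x)|^2 \;\leq\; (1+\epsilon)\,|P_0\phi(x)|^2 + (1+\epsilon^{-1})\,|P_>\phi(x)|^2
$$
pointwise. Since both sides are nonnegative and the Coulomb kernel $1/|x-y|$ is nonnegative, this lifts to $D(|\phi|^2,|\phi|^2) \leq D(F,F)$ with $F := (1+\epsilon)|P_0\phi|^2 + (1+\epsilon^{-1})|P_>\phi|^2$.

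\emph{Step 2 (Cauchy--Schwarz in $D$).} Because $D$ is a positive semidefinite bilinear form on real functions, $\sqrt{D(\cdot,\cdot)}$ is a seminorm, so expanding $D(F,F)$ and using Cauchy--Schwarz $D(|P_0\phi|^2,|P_>\phi|^2)\leq A\,B$ with $A:=D(|P_0\phi|^2,|P_0\phi|^2)^{1/2}$ and $B:=D(|P_>\phi|^2,|P_>\phi|^2)^{1/2}$ gives
$$
D(|\phi|^2,|\phi|^2) \;\leq\; \bigl((1+\epsilon)A + (1+\epsilon^{-1})B\bigr)^2.
$$

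\emph{Step 3 (second Young).} Applying $(X+Y)^2\leq (1+\eta)X^2+(1+\eta^{-1})Y^2$ with $\eta=\epsilon$ produces
$$
D(|\phi|^2,|\phi|^2) \;\leq\; (1+\epsilon)^3\, A^2 + (1+\epsilon^{-1})^3\, B^2.
$$
Now pick $\epsilon=\epsilon(\tau)$ by $(1+\epsilon)^3 = 1+\tau$; for $0<\tau\leq 1$ one has $\epsilon \geq c\tau$ for an absolute constant $c>0$, hence $(1+\epsilon^{-1})^3 \leq C\tau^{-3}$, which gives exactly the asserted bound.

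The only subtle point is that the two losses must be coupled through a \emph{single} small parameter: one controls the cross term $2\,\mathrm{Re}(\overline{P_0\phi}\,P_>\phi)$ pointwise, the other controls the Coulomb cross term $D(|P_0\phi|^2,|P_>\phi|^2)$; unlinking them would force a coefficient strictly larger than $1+\tau$ in front of $D(|P_0\phi|^2,|P_0\phi|^2)$. The exponent $-3$ on $\tau$ is the natural outcome of composing two Young-type splittings, and the nonnegativity of both the kernel and the functions involved is what allows the pointwise bound to pass through the quadratic form at all.
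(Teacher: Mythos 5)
Your proof is correct and uses essentially the same argument as the paper: a pointwise AM--GM bound, passing to $D$ via the nonnegativity of the Coulomb kernel, and a second Young-type splitting to absorb the cross term with the asymmetric weights. The paper phrases the cross-term estimate as $D(f,g)+D(g,f)\le D(f,f)+D(g,g)$ for suitably rescaled $f,g$, which is equivalent to your Cauchy--Schwarz followed by $(X+Y)^2\le(1+\eta)X^2+(1+\eta^{-1})Y^2$, and both routes produce the same $\tau^{-3}$ scaling.
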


\begin{proof}
First we note that for a.e. $x \in \R^3$ and $\epsilon > 0$
$$
|\phi(x)|^2 = \left| P_0 \phi(x) + P_> \phi (x) \right|^2 \leq (1+\epsilon) |P_0 \phi(x)|^2 +\lk 1+\frac 1\epsilon \rk |P_> \phi(x)|^2 \, .
$$
By definition of $D$ we get for all $\epsilon > 0$
\begin{align}
\nonumber
D ( |\phi|^2 ,|\phi|^2 ) \leq & \, (1+\epsilon)^2 D ( |P_0\phi|^2 ,|P_0\phi|^2 )  + \lk 1+ \frac 1 \epsilon \rk^2 D ( |P_>\phi|^2 ,|P_>\phi|^2 ) \\
\label{eq:dest1}
& + (1+\epsilon) \lk 1 + \frac 1 \epsilon \rk \lk  D ( |P_0\phi|^2 ,|P_>\phi|^2 ) + D ( |P_>\phi|^2 ,|P_0\phi|^2) \rk \, .
\end{align}
To estimate the last term we use the fact that $D$ is positive definite: For all functions $f$ and $g$ in the domain of $D$ we have 
$$
D(f,g) + D(g,f) = D(f,f)+ D(g,g) - D(f-g,f-g) \leq D(f,f) + D(g,g) \, .
$$
We apply this estimate with $f = \sqrt \delta |P_0 \phi|^2$ and $g = \sqrt{ \delta^{-1}} |P_> \phi|^2$ and we obtain, for all $\delta > 0$,
$$
D ( |P_0\phi|^2 ,|P_>\phi|^2 ) + D ( |P_>\phi|^2 ,|P_0\phi|^2) \leq \delta  D ( |P_0\phi|^2 ,|P_0 \phi|^2 ) + \frac 1\delta D ( |P_>\phi|^2 ,|P_>\phi|^2 )  \, .
$$
We can choose for example $\delta = \epsilon^2/(1+\epsilon)$. Then inserting this bound into \eqref{eq:dest1} yields
$$
D ( |\phi|^2 ,|\phi|^2 ) \leq  \, (1+3\epsilon + 2\epsilon^2) D ( |P_0\phi|^2 ,|P_0\phi|^2 )  + ( 1 + \epsilon)^2 (1+2\epsilon) \epsilon^{-3} D ( |P_>\phi|^2 ,|P_>\phi|^2 )
$$
and the claim follows with $\tau = 3\epsilon + 2\epsilon^2$.
\end{proof}

After projecting to the lowest Landau level we want to apply Proposition~\ref{pro:lsy} and Corollary~\ref{cor:secondterm} to estimate $D(|P_0 \phi|^2,|P_0\phi|^2)$. First we need the following bound on the remainder term $R^{(2)}_B(P_0 \phi)$.

\begin{lemma}
\label{lem:lsyprojected}
There is a constant $C>0$ such that for all $\phi \in H^1_A(\R^3)$ and $B>1$
$$
R^{(2)}_B(P_0 \phi) \leq C  \| P_0\phi \|^3  \| \partial_3  P_0\phi \|  \, ,
$$
where $R^{(2)}_B$ was introduced in Proposition \ref{pro:lsy}.
\end{lemma}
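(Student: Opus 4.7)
The plan is to exploit two features of the lowest Landau level projection: the reproducing property $P_0^2 = P_0$, which together with Cauchy--Schwarz gives a pointwise $L^\infty$--$L^2$ bound on $P_0\phi$, and the fact that the positive part of $K_B$ is localized on a disk of radius $\lesssim B^{-1/2}$. Since $K_B$ can change sign, I would first split $K_B = K_B^+ - K_B^-$ with $K_B^\pm \geq 0$ and discard the nonnegative contribution of $K_B^-$, replacing $K_B$ by $K_B^+$ in the integral defining $R_B^{(2)}(P_0\phi)$.

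For a.e.\ fixed $x_3$ the slice $\psi_{x_3}(\,\cdot\,) := P_0\phi(\,\cdot\,,x_3)$ lies in the range of $P_0$, so $\psi_{x_3}(x_\perp) = \int_{\R^2} P_0(x_\perp,y_\perp)\psi_{x_3}(y_\perp)\, dy_\perp$. Combining Cauchy--Schwarz with the identity $\int_{\R^2} |P_0(x_\perp,y_\perp)|^2 dy_\perp = B/(2\pi)$, which follows by a direct Gaussian integration from \eqref{eq:prokernel}, yields
$$
|P_0\phi(x_\perp,x_3)|^2 \leq \frac{B}{2\pi}\, F(x_3) \, , \qquad F(x_3) := \int_{\R^2} |P_0\phi(y_\perp,x_3)|^2 dy_\perp \, .
$$
Using this on one of the two factors $|P_0\phi|^2$ and carrying out the $x_\perp$-integration first, one gets
$$
R_B^{(2)}(P_0\phi) \leq \frac{B}{2\pi}\, M_B \int_\R F(x_3)^2 \, dx_3 \, , \qquad M_B := \int_{\R^2} K_B^+(z_\perp)\, dz_\perp \, .
$$

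A short analysis of $K_B$, using the elementary bound $\sqrt{1+t^2}\leq 1+t$, shows that $K_B^+(z_\perp) = 0$ for $|z_\perp|\gtrsim B^{-1/2}$ and $K_B^+(z_\perp) \leq \ln(C/(\sqrt{B}\,|z_\perp|))$ on its support; the substitution $z_\perp = u/\sqrt{B}$ then gives $M_B \leq C/B$, hence $R_B^{(2)}(P_0\phi) \leq C \int_\R F(x_3)^2 \, dx_3$. It remains to check that $\int_\R F^2 dx_3 \leq C \|P_0\phi\|^3 \|\partial_3 P_0\phi\|$. Setting $g := \sqrt{F}$, this is exactly $\|g\|_4^4 \leq C \|g\|_2^3 \|g'\|_2$, which is \eqref{eq:scalinv} at $q=4$. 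Here $\|g\|_2^2 = \|P_0\phi\|^2$ is immediate, and from
$$
F'(x_3) = 2\,\re \int_{\R^2} \overline{P_0\phi(y_\perp,x_3)}\, \partial_3 P_0\phi(y_\perp,x_3)\, dy_\perp
$$
combined with Cauchy--Schwarz in $y_\perp$ one reads off $|g'(x_3)|^2 = |F'(x_3)|^2/(4F(x_3)) \leq \int_{\R^2} |\partial_3 P_0\phi(y_\perp,x_3)|^2 dy_\perp$, so that $\|g'\|_2 \leq \|\partial_3 P_0\phi\|$. Assembling the three estimates yields the lemma.

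The only mildly delicate step is the estimate $M_B \leq C/B$, which requires isolating the small region $\{K_B>0\}$ and integrating the logarithmic singularity at $z_\perp=0$ there; everything else is a routine combination of Cauchy--Schwarz with the sharp one-dimensional inequality \eqref{eq:scalinv} already used in Section \ref{sec:oned}.
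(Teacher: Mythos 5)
Your proof is correct. It relies on the same two essential ingredients as the paper's: the reproducing-kernel $L^\infty$ bound $|P_0\phi(x_\perp,x_3)|^2 \leq \tfrac{B}{2\pi}\int|P_0\phi(y_\perp,x_3)|^2\,dy_\perp$ coming from $P_0^2=P_0$ and $\int|P_0(x_\perp,y_\perp)|^2dy_\perp = B/2\pi$, and a one-dimensional estimate reducing $\int_\R F^2 dx_3$ to $\|P_0\phi\|^3\|\partial_3 P_0\phi\|$ (you invoke \eqref{eq:scalinv} with $q=4$; the paper uses the pointwise bound \eqref{eq:basic2}, which yields $\|F\|_\infty\leq\|\partial_3 P_0\phi\|\|P_0\phi\|$ and hence the same conclusion --- these are equivalent routes). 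Where you diverge is in the kernel decomposition. The paper writes $K_B = K_B^{(1)}+K_B^{(2)}$, with $K_B^{(1)}$ uniformly bounded (treated directly with \eqref{eq:basic2}, no $L^\infty$ bound needed) and $K_B^{(2)}$ the truncated log singularity with $\|K_B^{(2)}\|_{L^2}\sim B^{-1/2}$, handled by an $L^2$ Cauchy--Schwarz against a convolution $F(x_\perp)$ and then the $L^\infty$ bound. You instead drop $K_B^-$ entirely and observe that $K_B^+$ is supported in a disk of radius $O(B^{-1/2})$ (since $2 > (\sqrt B - \ln B)|z_\perp|$ on its support, and $\sqrt B - \ln B\geq c\sqrt B$ uniformly in $B>1$) with $\|K_B^+\|_{L^1}\lesssim B^{-1}$, which exactly cancels the factor $B$ from the $L^\infty$ bound. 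This is a cleaner single-step argument, using an $L^1$--$L^\infty$ duality rather than $L^2$--$L^2$; the paper's version is marginally more economical in that it only invokes the reproducing-kernel bound on the genuinely singular part of $K_B$, but both yield the same constant-order estimate with comparable effort.
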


\begin{proof}
For $x_\perp \in \R^2$ let $\chi_B(x_\perp)$ denote the characteristic function of the set $\{|x_\perp| \leq 1/\sqrt B \}$.  We decompose $K_B(x_\perp) = K_B^{(1)}(x_\perp) + K_B^{(2)}(x_\perp)$ with $K_B$ defined in Proposition \ref{pro:lsy}. Here we put
$$
K_B^{(1)}(x_\perp) = \ln \lk 1+ \sqrt{1+(\ln B)^2 |x_\perp|^2 } \rk - \ln \lk \sqrt B |x_\perp| \rk \lk 1- \chi_B (x_\perp) \rk
$$
and
$$
K_B^{(2)}(x_\perp) = - \ln \lk \sqrt B |x_\perp| \rk \chi_B (x_\perp) \, .
$$
We note that these kernels satisfy the bounds
$$
K_B^{(1)}(x_\perp)  \leq \ln \left(1+\sqrt{1+B^{-1}(\ln B)^2}\right) \leq C
$$
for all $x_\perp\in\R^2$ and all $B> 1$ and
\begin{equation}
\label{eq:kl2est}
\left\| K_B^{(2)} \right\|_{L^2(\R^2)} = \frac{C}{\sqrt B}
\end{equation}
for all $B>1$ (with a constant $C$ independent of $B$). To estimate the first summand we use \eqref{eq:basic2} and get
$$
\int_\R \iint_{\R^2 \times \R^2} |P_0\phi(x_\perp,x_3)|^2 |P_0\phi(y_\perp,x_3)|^2 K_B^{(1)}(x_\perp-y_\perp) dx_\perp  dy_\perp dx_3  \leq C  \| P_0\phi \|^3  \| \partial_3  P_0\phi \|  \, .
$$
Hence, it remains to estimate $\tilde R^{(2)}_B(P_0 \phi)$ that is defined in the same way as $R^{(2)}_B(P_0 \phi)$ but with $K_B$ replaced by $K^{(2)}_B$. 

Let us fix $x_3 \in \R$ and to simplify notation write $\psi(x_\perp) = P_0\phi(x_\perp,x_3)$ for $x_\perp \in \R^2$. In view of \eqref{eq:kl2est} we can apply the Schwarz inequality to get
\begin{align}
\nonumber
&\iint_{\R^2\times \R^2} |\psi(x_\perp)|^2 |\psi(y_\perp)|^2 K_B^{(2)}(x_\perp-y_\perp) dx_\perp  dy_\perp\\
\nonumber
&= \int_{\R^2} K_B^{(2)}(x_\perp) \int_{\R^2} |\psi(x_\perp+y_\perp)|^2 |\psi(y_\perp)|^2 dy_\perp dx_\perp \\
\label{eq:exest1}
& \leq \frac C{\sqrt B} \| F \|_{L^2(\R^2)} \, ,
\end{align}
where
$$
F(x_\perp) = \int_{\R^2} |\psi(x_\perp+y_\perp)|^2 |\psi(y_\perp)|^2 dy_\perp \, .
$$
Now we estimate
$$
\| F \|_{L^2(\R^2)}^2 \leq  \| F \|_{L^\infty(\R^2)} \int_{\R^2} |F(x_\perp)| dx_\perp \leq \| \psi \|_{L^\infty(\R^2)}^2  \|\psi\|^6_{L^2(\R^2)}
$$
and it remains to estimate $\| \psi \|_{L^\infty(\R^2)}$. The  definition of $\psi$ and the fact that $P_0^2 = P_0$ implies
\begin{align*}
\| \psi \|_{L^\infty(\R^2)} &\leq \esssup_{x_\perp \in \R^2} \int_{\R^2} |P_0(x_\perp,y_\perp)| |P_0 \phi(y_\perp,x_3)| dy_\perp\\
& \leq \esssup_{x_\perp \in \R^2}  \lk \int_{\R^2} |P_0(x_\perp,y_\perp)|^2 dy_\perp \rk^{1/2} \lk \int_{\R^2} |P_0\phi(y_\perp,x_3)|^2 dy_\perp \rk^{1/2}\\
&= \sqrt { \frac B{2\pi} } \lk \int_{\R^2} |P_0\phi(y_\perp,x_3)|^2 dy_\perp \rk^{1/2} \, ,
\end{align*}
where we used the explicit representation of $P_0$, see \eqref{eq:prokernel}, to deduce the last identity.
It follows that
\begin{equation}
\label{eq:exest2}
\|F\|_{L^2(\R^2)} \leq  \sqrt { \frac B{2\pi} } \lk \int |P_0\phi(y_\perp,x_3)|^2 dy_\perp \rk^2  \, .
\end{equation}
Since $x_3 \in \R$ was chosen arbitrarily we can use \eqref{eq:exest1} and  \eqref{eq:exest2} to estimate
\begin{align*}
\tilde R^{(2)}_B(P_0 \phi) &= \int_\R \iint_{\R^2 \times \R^2} |P_0\phi(x_\perp,x_3)|^2 |P_0\phi(y_\perp,y_3)|^2 K_B^{(2)}(x_\perp-y_\perp) dx_\perp  dy_\perp dx_3 \\
& \leq C \int_\R   \lk \int_{\R^2} |P_0\phi(y_\perp,x_3)|^2 dy_\perp \rk^2 dx_3  \, .
\end{align*}
Hence, the claim follows from \eqref{eq:basic2}.
\end{proof}

Now we are in position to prove the lower bound.

\begin{proof}[Proof of Theorem \ref{thm:lower}]
First we project onto the lowest Landau level. We choose $\phi \in H^1_A(\R^3)$ with $\| \phi \| = 1$ and from \eqref{eq:kinproject} and Lemma~\ref{lem:project} we get
\begin{align*}
\mathcal{E}_B[\phi] \geq & \, \|(-i\nabla^\perp+A^\perp)P_0\phi\|^2 + \|\partial_3 P_0\phi\|^2 + \|(-i\nabla +A)P_>\phi\|^2 +\|\partial_3 P_>\phi\|^2 \\
& -  (1+\tau)  D(|P_0\phi|^2,|P_0 \phi|^2)  - C \tau^{-3}  D(|P_>\phi|^2, |P_>\phi|^2) 
\end{align*}
for all $0 < \tau \leq 1$. We insert \eqref{eq:b} and use \eqref{eq:hls} to estimate $ D(|P_>\phi|^2, |P_>\phi|^2) $. After rewriting $B \|P_0 \phi\|^2 = B - B\|P_> \phi\|^2$ we have
\begin{align}
\nonumber
\mathcal{E}_B[\phi] \geq & \, B +  \| \partial_3 P_0 \phi \|^2 -  (1+\tau) D(|P_0\phi|^2,|P_0 \phi|^2)- B \| P_> \phi\|^2  \\
\label{eq:lower1}
& +  \|(-i\nabla^\perp+A^\perp)P_>\phi\|^2 +  \| \partial_3 P_> \phi \|^2 - C \tau^{-3}    \|P_> \phi \|^3 \| (-i\nabla+A) P_> \phi\| \, .
\end{align}

Let us first estimate the terms that involve $P_0 \phi$. We introduce a small parameter $0<\epsilon \leq 1$ and recall the notation $C_B = (\ln B)/2 - \ln \ln B$. From Proposition \ref{pro:lsy} and Lemma \ref{lem:lsyprojected} it follows that 
\begin{align}
\nonumber
& \| \partial_3 P_0 \phi \|^2  -  (1+\tau) D(|P_0\phi|^2,|P_0 \phi|^2)\\
\nonumber
& \geq  (1-\epsilon) \| \partial_3 P_0 \phi \|^2 - (1+\tau) \, C_B \int_\R \left| \int_{\R^2} |P_0\phi(x_\perp,x_3)|^2 dx_\perp \right|^2 dx_3 \\
\label{eq:lower2}
& \quad +\epsilon \| \partial_3 P_0 \phi \|^2 - C (1+\tau) \lk  \ln B + \frac{1}{(\ln B)^{1/2} } \| \partial_3 P_0 \phi \|^{3/2} +  \| \partial_3 P_0 \phi \|  \rk \, .
\end{align}
Here we used the fact that $\|P_0 \phi \| \leq \| \phi \| = 1$ to simplify the error term.
By Corollary~\ref{cor:secondterm} we have
$$
 (1-\epsilon) \| \partial_3 P_0 \phi \|^2 - (1+\tau)C_B \int_\R \left| \int_{\R^2} |P_0\phi(x_\perp,x_3)|^2 dx_\perp \right|^2 dx_3 \geq - \frac{(1+\tau)^2}{1-\epsilon} \frac{C_B^2}{12} \| P_0 \phi \|^6 \, .
$$
If we choose $\epsilon$ and $\tau$ bounded by $1$ and comparable to $(\ln B)^{-1}$ for large $B$, we see that the coefficient on the right-hand side is bounded below by 
$$
-\frac{C_B^2}{12} - C \frac{C_B^2}{\ln B} \geq - \frac{(\ln B)^2}{48} + \frac{(\ln B)(\ln \ln B)}{12} - C \ln B \, .
$$
We claim that with  this choice of $\epsilon$ and $\tau$, all terms in the last line of \eqref{eq:lower2} are bounded below by $- C  \ln B$. Indeed, we can minimize in $\| \partial_3 P_0 \phi \|$. In particular, we find
$$
\frac{\epsilon}{2} \| \partial_3 P_0 \phi \|^2 - C \frac{1+\tau}{(\ln B)^{1/2}} \| \partial_3 P_0 \phi \|^{3/2} \geq - C \frac{(1+\tau)^4}{\epsilon^3 (\ln B)^2} \geq - C \ln B
$$
and
$$
\frac{\epsilon}{2} \| \partial_3 P_0 \phi \|^2 - C (1+\tau) \| \partial_3 P_0 \phi \| \geq - C \frac{(1+\tau)^2}{\epsilon} \geq - C \ln B\, .
$$
Combining these estimate with \eqref{eq:lower2} we arrive at 
\begin{equation}
\label{eq:p0terms}
\| \partial_3 P_0 \phi \|^2  - (1+\tau) D(|P_0\phi|^2,|P_0 \phi|^2)\geq -  \frac{(\ln B)^2}{48}  + \frac{(\ln B) (\ln \ln B)}{12}- C \ln B \, .
\end{equation}

It remains to show that all terms of \eqref{eq:lower1} that involve $P_> \phi$ are bounded below by $-C \ln B$. We introduce another small parameter $0 < \rho \leq 1$ and use the bound \eqref{eq:bcomp} to estimate 
\begin{align*}
& \| (-i\nabla^\perp+A^\perp) P_> \phi \|^2 - B \| P_> \phi\|^2  - C \tau^{-3} \| P_> \phi \|^3 \|(-i\nabla^\perp+A^\perp) P_> \phi \| \\
& = (1-\rho)  \| (-i\nabla^\perp+A^\perp) P_> \phi \|^2 - B \| P_> \phi\|^2 + \rho  \| (-i\nabla^\perp+A^\perp) P_> \phi \|^2 \\
& \quad\quad -  C \tau^{-3} \| P_> \phi \|^3 \| (-i\nabla^\perp+A^\perp) P_> \phi \| \\
& \geq \left((1-\rho)3B - B\right) \| P_> \phi \|^2 + \rho  \| (-i\nabla^\perp+A^\perp) P_> \phi \|^2 -  C \tau^{-3} \| P_> \phi \|^3 \| (-i\nabla^\perp+A^\perp) P_> \phi \| \\
& \geq B \left(2-3\rho\right) \| P_> \phi \|^2 - C \tau^{-6} \rho^{-1} \| P_> \phi \|^6 \, ,
\end{align*}
where the last estimate follows from minimizing in $\| (-i\nabla^\perp+A^\perp) P_> \phi \|$. Since $\| P_> \phi \| \leq 1$ and since $\tau$ is comparable to $(\ln B)^{-1}$ we can choose $\rho$ comparable to $\| P_> \phi \|^2/(\tau^3 \sqrt B)$ and get
\begin{align*}
& \| (-i\nabla^\perp+A^\perp) P_> \phi \|^2 - B \| P_> \phi\|^2  - C \tau^{-3} \| P_> \phi \|^3 \| (-i\nabla^\perp+A^\perp) P_> \phi \| \\
& \geq 2B\| P_> \phi \|^2- C \sqrt B \| P_> \phi \|^4 \tau^{-3} \, .
\end{align*}
To estimate the remaining terms of \eqref{eq:lower1} we note that
$$
\| \partial_3 P_> \phi \|^2 - C \tau^{-3}\|P_> \phi\|^3 \| \partial_3 P_> \phi\| \geq -C \tau^{-6} \|P_> \phi \|^6 \, .
$$
Thus all terms of \eqref{eq:lower1} that involve $P_> \phi$ are bounded below by
$$
2B\| P_> \phi \|^2- C \lk \frac{\sqrt B \| P_> \phi \|^4}{ \tau^3}  + \frac{ \|P_> \phi \|^6}{ \tau^6} \rk \geq \| P_> \phi \|^2 \lk 2B -C \lk \sqrt B (\ln B)^3 + (\ln B)^6 \rk \rk \, ,
$$
since $\| P_> \phi \| < 1$ and $\tau$ is comparable to $(\ln B)^{-1}$. For large $B$ the right-hand side is positive. This finishes the proof.
\end{proof}

\begin{remark}
The above proof also gives bounds on almost minimizers. More precisely, for any $M>0$ there is a constant $C_M>0$ such that for all $B\geq e$ and all $\phi \in H^1_A(\R^3)$ with
$$
\mathcal E_B[\phi] \leq B - \frac{1}{48} (\ln B)^2 + \frac1{12}(\ln B)(\ln\ln B) + M \ln B
$$
one has $\|\partial_3 P_0\phi\|^2 \leq C_M(\ln B)^2$ and
$$
\|(-i\nabla^\perp+A^\perp)P_>\phi\|^2 + B \|P_>\phi\|^2 + \|\partial_3 P_>\phi\|^2 \leq C_M \ln B \,.
$$
Indeed, under the almost minimizing assumption all the error terms in the proof of the lower bound are bounded by a constant times $\ln B$. This easily leads to the stated bounds.
\end{remark}


\section{The ground state energy of the operator $\mathfrak{h}$}
\label{sec:fock}

In this section we outline the proof of Theorem \ref{thm:operator} and reduce it to the proof of Proposition~\ref{pro:fockmain}, which is the topic of the following section.

The upper bound for $E_B^\q$ stated in Theorem \ref{thm:operator} follows from the simple fact that $E_B^\q$ is always bounded from above by $E_B^\cl$; see \cite{Pek51,Pek63} and the discussion in the introduction. Hence, from Theorem \ref{thm:functional}, we obtain that there is a constant $C > 0$ such that for all $B \geq C$  
\begin{equation}
\label{eq:fockupper}
E_B^\q \leq B - \frac{\alpha^2}{48} (\ln B)^2 + \frac{\alpha^2}{12}(\ln B)(\ln\ln B) + C \ln B \, .
\end{equation}

We proceed to the proof of the lower bound for $E_B^\q$. The first step in the proof is to introduce a cut-off in phonon space. For $k \in \R^3$ we write $k = (k_\perp,k_3) \in \R^2 \times \R$ and for   a parameter $\K > 8\alpha/\pi$ we set $\Gamma_\K = \{ k \in \R^3 \, : \, \max(|k_\perp|,|k_3|)\leq \K \}$. Then we introduce the operator
\begin{align}
\mathfrak{h}_{\K}^\co = & \lk 1-\frac{8 \alpha}{\pi \K} \rk \lk H_B -\partial_3^2 \rk 
+  \frac{\sqrt \alpha}{2\pi} \int_{\Gamma_\K}  \lk  \frac{a_k}{|k|}  e^{ik \cdot x} + \frac{a_k^*}{|k|}  e^{-ik \cdot x}  \rk dk \notag \\
\label{eq:ophk}
& + \frac 12 \int_{\R^3} a_k^* a_k \,dk + \frac 12 \int_{\Gamma_\K} a_k^* a_k \,dk   \, .
\end{align}
(The superscrip `co' stands for cut-off.) We shall prove

\begin{lemma}
\label{lem:firstcutoff}
For any $\K> 8\alpha/\pi$ we have $\mathfrak{h} \geq \mathfrak{h}_{\K}^\co - 1/4$.
\end{lemma}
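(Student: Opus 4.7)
The plan is to expand the difference $\mathfrak{h} - \mathfrak{h}_\K^\co$ and then prove the needed lower bound on it by a single completing-the-square identity. Direct subtraction gives
\begin{equation*}
\mathfrak{h} - \mathfrak{h}_\K^\co = \frac{8\alpha}{\pi\K}\lk H_B - \partial_3^2 \rk + \sqrt{\alpha}\,T_> + \tfrac{1}{2}\,N_>,
\end{equation*}
where $T_> = \frac{1}{2\pi}\int_{\R^3 \setminus \Gamma_\K}|k|^{-1}(e^{ik\cdot x}a_k + e^{-ik\cdot x}a_k^*)\,dk$ is the tail of the Fr\"ohlich interaction and $N_> = \int_{\R^3 \setminus \Gamma_\K} a_k^* a_k\,dk$ is the number operator for phonons with $k\notin\Gamma_\K$. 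So the lemma reduces to showing $(8\alpha/\pi\K)(H_B-\partial_3^2) + \sqrt\alpha\,T_> + \tfrac12 N_> \ge -\tfrac14$.

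The idea is to exploit the electron's kinetic energy to tame the $1/|k|$ singularity in $T_>$. Write $P_j = -i\partial_j + A_j$ for $j=1,2,3$ (with $A_3 \equiv 0$), so that $\sum_j P_j^2 = H_B - \partial_3^2$ and $[P_j, e^{\pm ik\cdot x}] = \pm k_j e^{\pm ik\cdot x}$, and introduce the self-adjoint operators
\begin{equation*}
B_j := i\int_{\R^3 \setminus \Gamma_\K} c_j(k)\,\lk e^{ik\cdot x}a_k - e^{-ik\cdot x}a_k^*\rk dk,\qquad c_j(k) := \frac{\sqrt{\alpha}\,k_j}{2\pi|k|^3}.
\end{equation*}
Because $\sum_j k_j c_j(k) = \sqrt{\alpha}/(2\pi|k|)$, the commutator identity above gives $\sum_j [P_j, B_j] = i\sqrt{\alpha}\,T_>$. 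With the parameter $\lambda := 8\alpha/(\pi\K)$, the manifestly nonnegative operator
\begin{equation*}
\sum_j \lk \sqrt{\lambda}\,P_j - i\lambda^{-1/2}B_j\rk^* \lk \sqrt{\lambda}\,P_j - i\lambda^{-1/2}B_j\rk = \lambda\lk H_B - \partial_3^2\rk + \sqrt{\alpha}\,T_> + \lambda^{-1}\sum_j B_j^2
\end{equation*}
reduces the problem to showing $\lambda^{-1}\sum_j B_j^2 \le \tfrac{1}{2}N_> + \tfrac{1}{4}$.

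For this final step, write $B_j = i(\phi_j - \phi_j^*)$ with $\phi_j := \int_{\R^3 \setminus \Gamma_\K} c_j(k) e^{ik\cdot x}a_k\,dk$. The canonical commutation relations give $[\phi_j,\phi_j^*] = \|c_j\|_{L^2}^2$, and expanding $B_j^2 = \phi_j\phi_j^* + \phi_j^*\phi_j - \phi_j^2 - \phi_j^{*2}$ and applying Cauchy--Schwarz to the last two terms yields $(\Psi, B_j^2\Psi) \le 4(\Psi, \phi_j^*\phi_j\Psi) + 2\|c_j\|_{L^2}^2\|\Psi\|^2$. Together with the standard Fock-space bound $\phi_j^*\phi_j \le \|c_j\|_{L^2}^2 N_>$ and the integral estimate
\begin{equation*}
\sum_j \|c_j\|_{L^2}^2 = \frac{\alpha}{4\pi^2}\int_{\R^3 \setminus \Gamma_\K}\frac{dk}{|k|^4} \le \frac{\alpha}{4\pi^2}\int_{|k|>\K}\frac{dk}{|k|^4} = \frac{\alpha}{\pi\K}
\end{equation*}
(valid since $\R^3\setminus\Gamma_\K \subseteq \{|k|>\K\}$), this gives $\sum_j B_j^2 \le (4\alpha/\pi\K)N_> + 2\alpha/\pi\K$. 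Multiplying by $\lambda^{-1} = \pi\K/(8\alpha)$ produces precisely $\tfrac12 N_> + \tfrac14$, which finishes the proof. The crux is the choice of $c_j(k)$: the algebraic identity $\sum_j k_j c_j = \sqrt\alpha/(2\pi|k|)$ forces $c_j \propto k_j/|k|^3$, whose $L^2$-integrability at large $|k|$ in $\R^3$ produces exactly the $1/\K$ decay that matches the kinetic-energy coefficient $8\alpha/(\pi\K)$ in the statement.
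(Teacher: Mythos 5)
Your proof is correct and follows essentially the same Lieb--Yamazaki commutator strategy as the paper: the same operators appear (your $\phi_j$ are the paper's $Z_j$), the same identity $\sum_j[P_j,\cdot]=$ (interaction tail) is used, the same $\int_{\Gamma_\K^c}|k|^{-4}\,dk\leq 4\pi/\K$ estimate enters, and your bound $(\Psi,B_j^2\Psi)\leq 4(\Psi,\phi_j^*\phi_j\Psi)+2\|c_j\|^2\|\Psi\|^2$ is exactly the paper's $-(\Zv-\Zv^*)^2\leq 2(\Zv^*\Zv+\Zv\Zv^*)$ after normal ordering. Your complete-the-square positivity $\sum_j(\sqrt\lambda P_j-i\lambda^{-1/2}B_j)^*(\sqrt\lambda P_j-i\lambda^{-1/2}B_j)\geq 0$ is simply a cleaner packaging of the paper's Cauchy--Schwarz step followed by optimization over $\tau$; the optimal $\tau$ there coincides with your $\lambda=8\alpha/(\pi\K)$.
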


\begin{proof}
We follow the strategy developed in \cite{LieYam58}. For $j = 1,2,3$ we write 
$$
Z_j = \frac{\sqrt \alpha}{2\pi} \int_{\Gamma_\K^c} \frac{k_j}{|k|^3}  a_k e^{ik \cdot x} dk 
$$
with $\Gamma_\K^c = \R^3 \setminus \Gamma_\K$, 
so that
\begin{equation}
\label{eq:firstcutoff1}
\frac{\sqrt \alpha}{2\pi}  \int_{\Gamma_\K^c} \lk  \frac{a_k}{|k|} e^{ik \cdot  x} + \frac{a_k^*}{|k|} e^{-ik \cdot  x}  \rk dk =  \sum_{i=1}^3 \left[ -i\partial_j + A_j , Z_j - Z_j^* \right] \,.
\end{equation}
The expectation $\left< \cdot \right>$ in any (normalized) state satisfies 
\begin{align*}
- \sum_{j=1}^3 \left< \left[ -i \partial_j-A_j, Z_j-Z_j^* \right] \right>  &\leq  2 \left< H_B -\partial_3^2 \right>^{1/2} \left<  - (\Zv-\Zv^*)^2 \right>^{1/2} \\
&\leq   2 \left< H_B -\partial_3^2 \right>^{1/2} \left< 2(\Zv^* \Zv + \Zv \Zv^*) \right>^{1/2} \, ,
\end{align*}
where $\Zv$ denotes the vector $(Z_1,Z_2,Z_3)$.
It follows that for all $\tau > 0$
$$
-   \sum_{i=1}^3 \left[ -i\partial_j + A_j , Z_j - Z_j^* \right]  \leq \tau \lk H_B  -\partial_3^2 \rk +  \frac{2}{\tau} \lk \Zv^* \Zv + \Zv \Zv^* \rk
$$
and we claim that
\begin{equation}
\label{eq:firstcutoffz}
 \Zv^* \Zv + \Zv \Zv^*  \leq \frac{2 \alpha}{\pi \K}\lk \int_{\Gamma_\K^c}  a_k^* a_k  dk + \frac 1 2 \rk \, .
\end{equation}
Combining these estimates with \eqref{eq:firstcutoff1} and choosing $\tau = 8\alpha/\pi \K$ we obtain
$$
\frac{\sqrt \alpha}{2\pi} \int_{\Gamma_\K^c} \lk  \frac{a_k}{|k|}  e^{ik \cdot x} + \frac{a_k^*}{|k|} e^{-ik \cdot  x} \rk dk \geq - \frac{8\alpha}{\pi \K} \lk H_B - \partial_3^2 \rk -\frac 12 \int_{\Gamma_\K^c} a_k^* a_k  dk - \frac 14 \,,
$$
which is the claimed lower bound.

Hence, it remains prove \eqref{eq:firstcutoffz}. By definition, 
$$
\left< \Zv^* \Zv \right>  = \frac{\alpha}{4 \pi^2} \int_{\Gamma_\K^c} \int_{\Gamma_\K^c} \frac{k \cdot k'}{|k|^3 |k'|^3}  e^{i(k'-k) \cdot x} \left< a_k^* a_{k'} \right> dk dk' \, .
$$
We estimate $\langle a_k^* a_{k'} \rangle \leq  \langle  a_k^*  a_k \rangle^{1/2} \langle a_{k'}^* a_{k'} \rangle^{1/2}$ and apply the Schwarz inequality to get
\begin{align*}
\left< \Zv^* \Zv \right> & \leq \frac{\alpha}{4\pi^2} \lk \int_{\Gamma_\K^c} \frac1{|k|^2}  \left<  a_k^*  a_k \right>^{1/2} dk \rk^2 \leq \frac{\alpha}{4\pi^2}  \int_{\Gamma_\K^c} \frac 1{|k|^4} dk \int_{\Gamma_\K^c} \left< a_k^* a_k \right> dk \\
&\leq   \frac{\alpha}{\pi \K} \int_{\Gamma_\K^c} \left< a_k^* a_k \right> dk \, .
\end{align*}
To estimate $\Zv \Zv^*$ we note that $a_k a_{k'}^* =  a_{k'}^*  a_k + \delta(k-k')$ and we can argue in the same way as above. This establishes \eqref{eq:firstcutoffz} and completes the proof.
\end{proof}

Next, we prove a lower bound on $\mathfrak{h}$ which shows already the correct order of the second term of $E_B^\q$. Later, we will use this to estimate the contribution of states that are not in the lowest Landau level. Recall that $P_0$ denotes the projection onto the lowest Landau level, see \eqref{eq:prokernel}, and that $P_>=1-P_0$. We also write $P_0$ for the operator $P_0 \otimes 1\otimes 1$ in $L^2(\R^2) \otimes L^2(\R) \otimes \mathcal F$. 

\begin{lemma}
\label{lem:roughlower}
There is a constant $C > 0$ such that for all $B\geq C$
$$
\mathfrak{h} \geq B P_0 + 3B P_> + \frac 12 \int_{\R^3} a_k^* a_k \,dk - C(\ln B)^2 \, .
$$
\end{lemma}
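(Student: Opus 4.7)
The plan is to start from Lemma~\ref{lem:firstcutoff} with a cut-off parameter $\K > 8\alpha/\pi$ to be chosen, and then complete the square in $\mathfrak h_\K^\co$ to handle the truncated interaction. The mode-by-mode identity
\[
\tfrac12 a_k^* a_k + \tfrac{\sqrt\alpha}{2\pi|k|}(a_k e^{ik\cdot x} + a_k^* e^{-ik\cdot x}) \geq - \tfrac{\alpha}{2\pi^2|k|^2},
\]
integrated over $\Gamma_\K$ against the extra $\tfrac12\int_{\Gamma_\K} a_k^* a_k\,dk$ available inside $\mathfrak h_\K^\co$, yields a lower bound of $-\tfrac{\alpha}{2\pi^2}\int_{\Gamma_\K}\frac{dk}{|k|^2}$; a direct calculation shows this is of order $\K$. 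Combining with $H_B \geq BP_0 + 3BP_>$ and $-\partial_3^2 \geq 0$ to estimate the kinetic factor $(1 - 8\alpha/(\pi\K))(H_B - \partial_3^2)$ from below by $BP_0 + 3BP_> - CB/\K$, one arrives at
\[
\mathfrak h \geq BP_0 + 3BP_> + \tfrac12\int_{\R^3} a_k^* a_k\,dk - C\bigl(\K + B/\K\bigr) - \tfrac14.
\]

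The naive balance $\K \sim \sqrt B$ only gives an error of order $\sqrt B$, which is weaker than the claimed $(\ln B)^2$. To reach the sharper bound, I would refine the estimate by exploiting the lowest-Landau-level structure: on the range of $P_0$ the kernel $P_0 e^{ik_\perp\cdot x_\perp}P_0$ acquires a Gaussian factor $e^{-|k_\perp|^2/(4B)}$, effectively restricting $|k_\perp|\lesssim \sqrt B$; and the one-dimensional polaron analysis of Section~\ref{sec:oned} tells us that the relevant longitudinal momenta are $|k_3|\lesssim \ln B$ (since the minimiser of the effective $1$D functional is localised on scale $(\ln B)^{-1}$). Over this effective box, $\int dk/|k|^2$ is of order $(\ln B)^2$ rather than $\sqrt B$.

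The main obstacle is translating this momentum-scale picture into an operator inequality on all of $L^2(\R^3)\otimes\mathcal F$, rather than at the level of a particular near-minimiser. A natural strategy is to split $\Psi = P_0\Psi + P_>\Psi$ and to treat the two pieces separately: on $P_>$ the spectral gap $H_B \geq 3B$ provides an $O(B)$ margin which can absorb the interaction via Cauchy--Schwarz without any loss comparable to $(\ln B)^2$; on $P_0$ the Gaussian damping together with the effective $1$D Pekar estimate from Corollary~\ref{cor:secondterm} delivers the sharp constant. The cross terms in the interaction, which do not commute with $P_0$, must be handled by an additional Cauchy--Schwarz estimate, whose error can be absorbed into the $P_>$ margin because any excursion out of the lowest Landau level costs an amount of kinetic energy $\gtrsim B$, much larger than $(\ln B)^2$.
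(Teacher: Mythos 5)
Your first step (Lemma~\ref{lem:firstcutoff} plus completing the square over all of $\Gamma_\K$, giving an error $C(\K+B/\K)$) is correct, and you rightly recognize that it only yields $O(\sqrt B)$. But the proposed refinement has a genuine gap: you have no operator-level mechanism for discarding the phonon modes with $|k_3|\gg \ln B$. The statement that ``the relevant longitudinal momenta are $|k_3|\lesssim\ln B$'' is a heuristic about near-minimizers of the effective one-dimensional functional; Lemma~\ref{lem:roughlower} must hold as an inequality on all of $L^2(\R^3)\otimes\mathcal F$, and the interaction couples every $k_3$. The Gaussian damping $e^{-|k_\perp|^2/2B}$ coming from $P_0e^{ik_\perp\cdot x_\perp}P_0$ only restricts $k_\perp$: completing the square on the lowest Landau level with that damping alone still produces an error of order
$$
\int_{\R}\int_{\R^2} e^{-|k_\perp|^2/2B}\,\frac{dk_\perp\,dk_3}{|k_\perp|^2+k_3^2}\ \sim\ \sqrt B\,,
$$
so your refined scheme does not reach $(\ln B)^2$ either. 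The missing idea, which is the heart of the paper's proof, is a Lieb--Yamazaki type commutator bound \emph{in the $x_3$-direction}: on the region $\{\K_3\le|k_3|\le\K\}$ with $\K_3\sim\ln B$ one writes the interaction as $[-i\partial_3,Z-Z^*]$ with $Z=\frac{\sqrt\alpha}{2\pi}\int\frac{a_k}{|k|k_3}e^{ik\cdot x}dk$, bounds it by $\frac12(-\partial_3^2)+4(Z^*Z+ZZ^*)$, and uses $\int_{\K_3\le|k_3|\le\K}\int_{|k_\perp|\le\K}k_3^{-2}|k|^{-2}dk\lesssim(\ln\K)/\K_3=O(1)$ so that these modes are absorbed by the reserved half of the number operator; the square is completed only over $\{|k_3|\le\K_3,\,|k_\perp|\le\K\}$, where the error is $O(\K_3\ln\K)=O((\ln B)^2)$ (with $\K=B/(\ln B)^2$). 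Without this (or an equivalent) device for the high-$|k_3|$ modes, your argument stalls at $O(\sqrt B)$.

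Two further points. First, Corollary~\ref{cor:secondterm} plays no role here: the rough lemma does not need the sharp one-dimensional constant, only a crude $(\ln B)^2$ error, and the sharp analysis is deferred to Proposition~\ref{pro:fockmain}. Second, your plan to split $\Psi=P_0\Psi+P_>\Psi$ and control the off-diagonal interaction terms by Cauchy--Schwarz risks circularity: the natural bound $\|a^*(f)\Phi\|\le\|f\|\,\|\sqrt{\mathcal N+1}\Phi\|$ requires an a priori bound on $\mathcal N$ in the state, and in the paper that bound (Corollary~\ref{cor:simplebounds}) is \emph{derived from} Lemma~\ref{lem:roughlower}; the projection splitting with cross terms is carried out only afterwards, in Lemma~\ref{lem:reduction}. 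The paper's proof of the present lemma avoids $P_0/P_>$ cross terms entirely, using only the $k_3$-splitting, the commutator trick, and $H_B\ge BP_0+3BP_>$ at the very end.
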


\begin{proof}
To prove this estimate we need to treat phonon modes in the $k_3$-direction differently from modes in the $k_1$- and $k_2$-directions. 

First, we bound the contribution to $\mathfrak{h}_\K^\co$ that comes from $\{|k| \in \Gamma_\K, |k_3| < \K_3  \}$, where we choose $\K = B / (\ln B)^2$ and $\K_3 = 16 \alpha |\ln B|/\pi$. Note that for all $k \in \R^3$ we have
$$
\lk \frac{a^*_k}{ \sqrt 2} + \frac{\sqrt{2\alpha}}{2\pi |k|} e^{ik\cdot x} \rk \lk \frac{a_k}{ \sqrt 2} + \frac{\sqrt{2\alpha}}{2\pi |k|} e^{-ik\cdot x} \rk \geq 0 \,.
$$
This implies that, for $B$ large enough,
\begin{align*}
& \int_{|k_3| \leq \K_3} \int_{|k_\perp|^2 \leq \K^2} \lk \frac 12 a_k^* a_k + \frac{\sqrt \alpha}{2\pi} \frac {a_k}{|k|} e^{i k \cdot x} + \frac{\sqrt \alpha}{2\pi} \frac{a_k^*}{|k|} e^{-ik \cdot  x}  \rk dk_\perp dk_3 \\
& \geq - \frac{\alpha}{2\pi^2} \int_{|k_3| \leq \K_3} \int_{|k_\perp|^2 \leq \K^2} \frac 1 {|k|^2} dk_\perp dk_3 \\
& = - \frac{\alpha}{\pi} \int_0^{\K_3}  \ln \lk \frac{\K^2 + k_3^2}{k_3^2} \rk  dk_3  \\
& \geq - \frac{2\alpha}{\pi} \K_3  |\ln \K| \, .
\end{align*}
We combine this estimate with Lemma \ref{lem:firstcutoff} and we see that for $B$ large enough 
\begin{align}
\mathfrak{h} \geq  & \lk 1-\frac{8\alpha (\ln B)^2}{\pi B} \rk \lk H_B -\partial_3^2 \rk + \int \lk \frac{\sqrt \alpha}{2\pi}  \frac{a_k}{|k|}  e^{ik \cdot x} + \frac{\sqrt \alpha}{2\pi}  \frac{a_k^*}{|k|}  e^{-ik \cdot x}  \rk dk  \notag \\
\label{eq:firstrough}
& + \frac 12 \int_{\R^3} a_k^* a_k \,dk + \frac 12 \int a_k^* a_k \,dk
 - C (\ln B)^2  \, .
\end{align}
Here and in the remainder of this proof all integrals without specified domain of integration are over $\{ |k_\perp| \leq \K, \K_3 \leq |k_3| \leq \K \}$.

Now we proceed similarly as in Lemma \ref{lem:firstcutoff}. Here we set
$$
Z = \frac{\sqrt \alpha}{2\pi} \int \frac{a_k}{|k| k_3} e^{ik\cdot x} dk
$$ 
and obtain
$$
- \frac{\sqrt \alpha}{2\pi} \int \lk \frac{a_k}{|k|} e^{ik\cdot x} + \frac{a_k^*}{|k|} e^{-ik\cdot x} \rk \leq \frac 12 \lk -\partial_3^2 \rk + 4 \lk Z^* Z + Z Z^* \rk \, .
$$
Applying the Schwarz inequality in the same way as above yields 
$$
4(Z^* Z + Z Z^*) \leq \frac{2\alpha}{\pi^2} \int \frac1{k_3^2} \frac1{|k|^2} dk  \lk \int a_k^* a_k dk+ \frac 12 \rk  \leq \frac 1 2  \lk \int a_k^* a_k dk+ \frac 12 \rk \, ,
$$
where we used the estimate
\begin{align*}
&\int_{\K_3 \leq |k_3| \leq \K} \int_{|k_\perp|^2 \leq \K^2} \frac1{k_3^2} \frac1{|k|^2} dk_\perp dk_3\\
 &= \pi \int_{\K_3}^\K \frac 1{k_3^2} \ln \lk \frac{k_3^2 + \K^2}{k_3^2} \rk dk_3 \\
& = \pi \lk \frac \pi {2\K} + \frac{\ln (\K^2 + \K_3^2)}{\K_3} - \frac{\ln (2\K^2)}{\K} - \frac{2 \arctan(\K_3/\K)}{\K} \rk \\
&\leq  \frac {\pi^2}{4\alpha} \, ,
\end{align*}
valid for $B$ large enough. We put these estimates together and from \eqref{eq:firstrough} we obtain, for $B \geq C$,
$$
\mathfrak{h} \geq \lk 1-\frac{8\alpha (\ln B)^2}{\pi B} \rk  H_B + \lk \frac 12 - \frac{8\alpha (\ln B)^2}{\pi B} \rk \lk -\partial_3^2 \rk + \frac 12 \int_{\R^3} a_k^* a_k \,dk - C (\ln B)^2 \, . 
$$

It remains to note that, for $B$ large enough, we have
$$
\lk \frac 12 - \frac{8\alpha (\ln B)^2}{\pi B} \rk \lk -\partial_3^2 \rk  \geq 0
$$
and
$$
\lk 1-\frac{8\alpha (\ln B)^2}{\pi B} \rk  H_B = \lk 1-\frac{8\alpha (\ln B)^2}{\pi B} \rk  \lk H_B P_0 + H_B P_> \rk \geq B P_0 + 3B P_>  - C (\ln B)^2 \, .
$$
This completes the proof.
\end{proof}

Now we combine the previous lemma with the upper bound \eqref{eq:fockupper} on $E_B^q$. Note that this bound ensures that for every $M>-\alpha^2/48$ there are states $\Psi  \in H^1_A(\R^3) \otimes \textnormal{dom} (\sqrt \mathcal N)$ that satisfy  
\begin{equation}
\label{eq:roughupper}
(\Psi, \mathfrak{h} \Psi )_{L^2(\R^3) \otimes \mathcal F} \leq B + M(\ln B)^2 \, , \quad \|\Psi\|_{L^2(\R^3) \otimes \mathcal F} = 1 \,.
\end{equation}

\begin{corollary}
\label{cor:simplebounds}
For every $M\in\R$ there is a constant $C_M>0$ such that for every $B\geq C_M$ and every $\Psi  \in H^1_A(\R^3) \otimes \textnormal{dom} (\sqrt \mathcal N)$ satisfying \eqref{eq:roughupper} one has
$$
 \| P_> \Psi \|_{L^2(\R^3) \otimes \mathcal F}^2  \leq C_M (\ln B)^2 B^{-1}
$$
and
$$
(\Psi , \mathcal N \Psi)_{L^2(\R^3) \otimes \mathcal F} \leq C_M (\ln B)^2 \, .
$$
\end{corollary}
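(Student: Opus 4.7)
The proof will be a direct combination of the lower bound in Lemma \ref{lem:roughlower} with the upper bound hypothesis \eqref{eq:roughupper}. No further analytical work is needed, since all the heavy lifting was done in proving Lemma \ref{lem:roughlower}.

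The plan is as follows. Take $\Psi$ normalized and satisfying \eqref{eq:roughupper}, and evaluate the inequality of Lemma \ref{lem:roughlower} on $\Psi$. Using $P_0 + P_> = 1$ and $\|\Psi\|=1$, we have
$$
B \|P_0\Psi\|^2 + 3B\|P_>\Psi\|^2 = B + 2B \|P_>\Psi\|^2 \,,
$$
so Lemma \ref{lem:roughlower} gives
$$
(\Psi,\mathfrak h \Psi) \geq B + 2B\|P_>\Psi\|^2 + \tfrac{1}{2}(\Psi,\mathcal N \Psi) - C(\ln B)^2
$$
for all $B \geq C$. Combining this with the upper bound $(\Psi,\mathfrak h\Psi) \leq B + M(\ln B)^2$ from \eqref{eq:roughupper} and rearranging yields
$$
2B\|P_>\Psi\|^2 + \tfrac{1}{2}(\Psi,\mathcal N \Psi) \leq (M+C)(\ln B)^2 \,.
$$

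Since both terms on the left-hand side are non-negative, we may drop either one to bound the other. Dropping the $\mathcal N$ term and dividing by $2B$ yields the first bound with $C_M = (M+C)/2$; dropping the $P_>$ term and multiplying by $2$ yields the second bound with $C_M = 2(M+C)$. Taking the maximum of these two constants gives the required $C_M$.

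There is no genuine obstacle here: the only small point worth noting is that one must apply Lemma \ref{lem:roughlower} in a regime $B \geq C$ that is possibly larger than the threshold needed to guarantee existence of states satisfying \eqref{eq:roughupper}; this is handled by enlarging $C_M$ to absorb the finite range of $B$ in between, if necessary.
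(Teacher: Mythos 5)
Your proof is correct and follows essentially the same route as the paper: apply Lemma \ref{lem:roughlower} to $\Psi$, combine with the upper bound \eqref{eq:roughupper}, and use $\|P_0\Psi\|^2 + \|P_>\Psi\|^2 = 1$ together with non-negativity of each term to read off both bounds. The rearrangement $B\|P_0\Psi\|^2 + 3B\|P_>\Psi\|^2 = B + 2B\|P_>\Psi\|^2$ is exactly the step the paper performs implicitly.
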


\begin{proof}
If we combine the lower bound derived in Lemma \ref{lem:roughlower} with the upper bound \eqref{eq:roughupper} we obtain
$$
B \|P_0 \Psi \|^2 + 3B \|P_> \Psi\|^2 + \frac 12 (\Psi, \mathcal N \Psi) \leq B + (M+C) (\ln B)^2
$$
for $B\geq C$. Thus, the claim follows from the identity $\| P_0 \Psi \|^2 = 1 - \|P_> \Psi \|^2$ and from the fact that $\mathcal N$ is non-negative.
\end{proof}

Given the bounds of Corollary \ref{cor:simplebounds} we can reduce the problem to the lowest Landau level. 
The reduction lemma reads as follows.

\begin{lemma}
\label{lem:reduction}
There is a constant $C > 0$ such that the following holds. For every $M\in\R$ there is a $C_M>0$ such that for every $B\geq C_M$ and every $\Psi  \in H^1_A(\R^3) \otimes \textnormal{dom} (\sqrt \mathcal N)$ satisfying \eqref{eq:roughupper} one has for every $C \leq \K \leq C^{-1} B$ 
$$
\lk \Psi, \mathfrak{h} \Psi \rk_{L^2(\R^3) \otimes \mathcal{F}} \geq \lk P_0 \Psi, \mathfrak{h}_\K^\co P_0 \Psi \rk_{L^2(\R^3) \otimes \mathcal{F}}  + B \| P_> \Psi \|_{L^2(\R^3) \otimes \mathcal F}- C_M (\ln B)^2 \sqrt {\K B^{-1}} - \frac 14 \, .
$$
\end{lemma}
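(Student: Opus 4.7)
The plan is to exploit Lemma \ref{lem:firstcutoff} to replace $\mathfrak{h}$ by $\mathfrak{h}_\K^\co - 1/4$, then split $\Psi = P_0\Psi + P_>\Psi$ and work at the quadratic form level. Writing $\mathfrak{h}_\K^\co = T + I_\K + N_\K$, with $T = (1-8\alpha/(\pi\K))(H_B-\partial_3^2)$, with $I_\K$ the Fr\"ohlich interaction restricted to $\Gamma_\K$, and with $N_\K$ the sum of the two number-type terms, one notes that $T$ and $N_\K$ commute with $P_0$, whereas $I_\K$ does not because of the factor $e^{\pm i k\cdot x}$ with $k_\perp$ nonzero. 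This gives
$$
(\Psi,\mathfrak{h}_\K^\co\Psi) = (P_0\Psi,\mathfrak{h}_\K^\co P_0\Psi) + (P_>\Psi,\mathfrak{h}_\K^\co P_>\Psi) + 2\re(P_0\Psi, I_\K P_>\Psi),
$$
and the task is to produce the required $B\|P_>\Psi\|^2$ from the last two terms, modulo an error of order $(\ln B)^2\sqrt{\K/B}$.

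The $P_>$ diagonal piece is positive and large: using $H_B P_> \geq 3BP_>$, $-\partial_3^2\geq 0$ and $N_\K\geq 0$, one has
$$
(P_>\Psi,\mathfrak{h}_\K^\co P_>\Psi) \geq 3\lk 1-\frac{8\alpha}{\pi\K}\rk B\|P_>\Psi\|^2 + (P_>\Psi,I_\K P_>\Psi).
$$
For $\K\geq C$ with $C$ large enough, the first term exceeds $2B\|P_>\Psi\|^2$, leaving comfortable slack above the target $B\|P_>\Psi\|^2$. The two interaction contributions, namely the cross term $2\re(P_0\Psi,I_\K P_>\Psi)$ and the diagonal $(P_>\Psi,I_\K P_>\Psi)$, are therefore the only obstacles.

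The main task is to absorb these two interaction contributions into an error of the stated order. The standard creation and annihilation estimates $\|a(f)\Phi\|\leq\|f\|\,\|\sqrt{\mathcal N}\,\Phi\|$ and $\|a^*(f)\Phi\|\leq\|f\|\,\|\sqrt{\mathcal N+1}\,\Phi\|$, together with $\int_{\Gamma_\K}|k|^{-2}\,dk\leq C\K$, yield
$$
\|I_\K\Phi\|^2 \leq C\alpha\K\,(\Phi,(\mathcal N+1)\Phi)
$$
for any $\Phi$. A Cauchy--Schwarz inequality then gives
$$
|2\re(P_0\Psi,I_\K P_>\Psi)| + |(P_>\Psi,I_\K P_>\Psi)| \leq C\sqrt{\alpha\K\,(\Psi,(\mathcal N+1)\Psi)}\,\|P_>\Psi\|.
$$
By the a priori bounds of Corollary \ref{cor:simplebounds}, $(\Psi,\mathcal N\Psi)\leq C_M(\ln B)^2$ and $\|P_>\Psi\|^2\leq C_M(\ln B)^2/B$, so the right-hand side is at most $C_M(\ln B)^2\sqrt{\K/B}$. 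Combining this with the positive surplus $(2-24\alpha/(\pi\K))B\|P_>\Psi\|^2$ from the previous step and adding the $-1/4$ supplied by Lemma \ref{lem:firstcutoff} yields the claim. The upper bound $\K\leq C^{-1}B$ enters only to guarantee that $\sqrt{\K/B}\leq 1$, so that the error remains of the announced order.
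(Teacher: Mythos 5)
Your proof is correct and reaches the same conclusion, but you treat the $P_>$ diagonal interaction term differently from the paper. After the common first steps (applying Lemma~\ref{lem:firstcutoff} and splitting across $P_0$ and $P_>$, which the paper does in \eqref{eq:hdecomp}), the paper handles the diagonal block $P_>\mathfrak{h}_\K^\co P_>$ by completing the square: it uses the full weight of the number operator on $\Gamma_\K$ to absorb the interaction $I_\K$, producing the pure operator bound $P_>\mathfrak{h}_\K^\co P_> \geq (3B - CB\K^{-1} - C\K)P_>$, which is $\geq B P_>$ in the allowed range of $\K$ without any reference to the a priori information on $\Psi$. You instead keep $N_\K\geq 0$ as is and estimate $(P_>\Psi,I_\K P_>\Psi)$ by the same Fock-space $a/a^*$ bounds and a priori bounds from Corollary~\ref{cor:simplebounds} that both you and the paper use for the off-diagonal (cross) terms. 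This is a slightly more uniform argument (one estimate covers both the cross and diagonal interaction), at the modest cost of invoking the almost-minimizer bounds $(\Psi,\mathcal N\Psi)\lesssim(\ln B)^2$ and $\|P_>\Psi\|^2\lesssim(\ln B)^2/B$ for one more term; since those bounds are needed anyway, nothing is lost, and the resulting error is of the same order $(\ln B)^2\sqrt{\K/B}$. Two small remarks: the statement as printed has $B\|P_>\Psi\|$, but it should read $B\|P_>\Psi\|^2$, which is what both you and the paper actually establish; and when you take "$C$ large enough'' you should also ensure $C>8\alpha/\pi$ so that Lemma~\ref{lem:firstcutoff} is applicable, which your phrasing implicitly allows.
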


\begin{proof}
We note that the operators $H_B$, $-\partial_3^2$, and $\mathcal N$ all commute with $P_0$. Hence, we can apply Lemma \ref{lem:firstcutoff} to estimate
\begin{align}
\nonumber
\mathfrak{h} \geq \, & P_0 \mathfrak h_\K^\co P_0 + P_> \mathfrak h_\K^\co P_> + P_0 \frac{\sqrt \alpha}{2\pi} \int_{\Gamma_\K} \lk \frac{a_k}{|k|} e^{ik \cdot x} + \frac{a_k^*}{|k|} e^{-ik \cdot x} \rk dk P_>\\
\label{eq:hdecomp}
&  +  P_> \frac{\sqrt \alpha}{2\pi} \int_{\Gamma_\K} \lk \frac{a_k}{|k|} e^{ik \cdot x} + \frac{a_k^*}{|k|} e^{-ik \cdot x} \rk dk P_0  - \frac 14 \, .
\end{align}
We estimate the terms on the right side individually, first the diagonal term $P_> \mathfrak h_\K^\co P_>$. For a lower bound we complete the square in the interaction term. Similarly as in the proof of Lemma \ref{lem:roughlower} we have, for $k \in \R^3$,
$$
a^*_k a_k +  \frac{\sqrt \alpha}{2\pi} \frac{a_k}{|k|} e^{ik \cdot x} + \frac{\sqrt \alpha}{2\pi} \frac{a_k^*}{|k|} e^{-ik \cdot x} \geq - \frac {\alpha}{4 \pi ^2|k|^2} 
$$
and we find
$$
P_> \mathfrak h_\K^\co P_> \geq \lk 1- \frac{8\alpha}{\pi \K} \rk H_B P_> - \int_{\Gamma_\K} \frac{\alpha}{4 \pi^2 |k|^2} dk P_> \geq \left( 3B - C B \K^{-1} - C \K \right) P_> \, ,
$$
where we used that
$$
\int_{|k_3| \leq \K} \int_{|k_\perp| \leq \K} \frac 1{|k|^2} dk_\perp dk_3 =  \pi \K \lk2 \ln(2) +  \pi  \rk \, .
$$
Thus the bounds on $\K$ imply
\begin{equation}
\label{eq:largeterms}
 (\Psi,P_> \mathfrak{h}_\K^\co P_> \Psi )_{L^2(\R^3) \otimes \mathcal F} \geq B \| P_> \Psi \|^2_{L^2(\R^3) \otimes \mathcal F}
\end{equation}
for $B$ large enough.

We proceed to estimating the off-diagonal terms. For fixed $x \in \R^3$ let us define the function
$$
f_{\K,x}(k) = \frac{\sqrt \alpha}{2\pi |k|} e^{i k \cdot x} \chi_\K (k) \, , \quad k \in \R^3 \, ,
$$
where $\chi_\K$ denotes the characteristic function of $\Gamma_\K$. Note that $f_{\K,x}$ is in $L^2(\R^3)$ with $\| f_{\K,x} \| \leq C \sqrt \K$, independent of $x$. Hence, we can rewrite the operator
\begin{equation}
\label{eq:atogether}
\frac{\sqrt \alpha}{2\pi} \int_{\Gamma_\K} \frac{a_k}{|k|} e^{ik\cdot x} dk = a \lk f_{\K,x} \rk \, .
\end{equation}
Let us recall the bound
\begin{equation}
\label{eq:afockest}
\| a^*(f_{\K,x}) \Phi \|_{\mathcal F}  \leq \| f_{\K,x} \| \| \sqrt { \mathcal N +1} \Phi \|_{\mathcal F} \leq C \sqrt{\mathcal K} \| \sqrt { \mathcal N+1}  \Phi \|_{\mathcal F}  \, ,
\end{equation}
valid for all $\Phi \in \mbox{dom}(\sqrt \mathcal N)$. Using notation \eqref{eq:atogether} we can write
\begin{align*}
\lk \Psi, P_0 \frac{\sqrt \alpha}{2\pi} \int_{\Gamma_\K} \frac{a_k}{|k|} e^{ik\cdot x}  dk \, P_> \Psi \rk_{L^2(\R^3) \otimes \mathcal F} 
& = \int_{\R^3} \lk (P_0 \Psi)(x), a \lk f_{\K,x} \rk (P_> \Psi)(x) \rk_{\mathcal F} dx \\
& = \int_{\R^3} \lk a^* \lk f_{\K,x} \rk (P_0 \Psi)(x), (P_> \Psi)(x) \rk_{\mathcal F} dx
\end{align*}
for any $\Psi \in L^2(\R^3) \otimes \mathcal F$. Now the bound \eqref{eq:afockest} allows us to estimate
\begin{align*}
& \left| \lk \Psi, P_0 \frac{\sqrt \alpha}{2\pi} \int_{\Gamma_\K} \frac{a_k}{|k|} e^{ik\cdot x}  dk \, P_> \Psi \rk_{L^2(\R^3) \otimes \mathcal F} \right|
\leq  \int_{\R^3} \|a^*\lk f_{\K,x} \rk (P_0 \Psi)(x) \|_{\mathcal F} \|(P_> \Psi)(x) \|_{\mathcal{F}} \, dx \\
& \qquad \leq  \| P_> \Psi \|_{L^2(\R^3) \otimes \mathcal F} \lk \int_{\R^3}  \|a^*\lk f_{\K,x} \rk (P_0 \Psi)(x) \|_{\mathcal F}^2 \,dx \rk^{1/2} \\
& \qquad \leq C_M \sqrt \K  \| P_> \Psi \|_{L^2(\R^3) \otimes \mathcal F} \| \sqrt{\mathcal N + 1} P_0 \Psi \|_{L^2(\R^3) \otimes \mathcal F} \, .
\end{align*}
We combine this estimate with Corollary \ref{cor:simplebounds} and obtain that any state $\Psi$ satisfying \eqref{eq:roughupper} also satisfies
\begin{equation}
\label{eq:offdiagonal}
\left| \lk \Psi, P_0 \frac{\sqrt \alpha}{2\pi} \int_{\Gamma_\K}  \frac{a_k}{|k|} e^{ik\cdot x}  dk P_> \Psi \rk_{L^2(\R^3) \otimes \mathcal F} \right| \leq C_M (\ln B)^2 \sqrt{\K B^{-1} }
\end{equation}
for $B$ large enough. Similarly, we can estimate the remaining three interaction terms. Thus, \eqref{eq:hdecomp}, \eqref{eq:largeterms} and \eqref{eq:offdiagonal} yield the claimed lower bound.
\end{proof}

In view of Lemma~\ref{lem:reduction} we can work in the lowest Landau level and we have to find a lower bound on the operator $P_0 \mathfrak{h}_\K^\co P_0$. This is accomplished in the in following proposition, which plays a similar role as Proposition \ref{pro:lsy} in the analysis of the functional $\mathcal{E}_B$ and which lies at the heart of the proof of Theorem \ref{thm:operator}.

The definition of $\mathfrak h_\K^\co$, see \eqref{eq:ophk}, implies
\begin{equation}
\label{eq:simplelowerop}
\mathfrak{h}_{\K}^\co \geq  \kappa \lk H_B -\partial_3^2 \rk 
+  \frac{\sqrt \alpha}{2\pi} \int_{\Gamma_\K}  \lk  \frac{a_k}{|k|}  e^{ik \cdot x} + \frac{a_k^*}{|k|}  e^{-ik \cdot x}  \rk dk +\int_{\Gamma_\K} a_k^* a_k \,dk   
\end{equation} 
with $\kappa=  1- 8 \alpha/\pi \K$. Here we used the fact that the operator $a^*_k a_k$ is non-negative for all $k \in \R^3$.
We note that the following proposition is also valid for the operator on the right-hand side of \eqref{eq:simplelowerop} with an arbitrary choice of $\kappa$, not necessarily the one made above.

\begin{proposition}
\label{pro:fockmain}
There is a constant $C>0$ such that for all $B$, $\kappa$, and $\K$ satisfying $B\geq C$, $C(\ln B)^{-1/2}\leq \kappa\leq C^{-1} \ln B$, and $\K \geq \sqrt B$ one has
\begin{align*}
P_0 \mathfrak{h_{\K}^\co} P_0  \geq \, &  \left( \kappa B -\frac{\alpha^2 (\ln B)^2}{48\kappa} - C \kappa^{-1/2} (\ln B)^{3/2} \right) P_0 \,.
\end{align*}
\end{proposition}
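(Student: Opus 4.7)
The plan is to parallel the proof of the classical case in Theorem~\ref{thm:functional}, with Proposition~\ref{pro:fockmain} playing the role of Corollary~\ref{cor:secondterm} but now carrying the quantized phonon field throughout. Since $H_B$, $-\partial_3^2$, and $\mathcal N_\K := \int_{\Gamma_\K} a_k^* a_k\,dk$ all commute with $P_0$, the operator inequality may be tested on vectors $\Psi = P_0\Psi$ with $\|\Psi\|=1$, for which $H_B\Psi = B\Psi$. It therefore suffices to show
\[
\kappa \|\partial_3 \Psi\|^2 + \langle \Psi, (\mathcal I_\K + \mathcal N_\K) \Psi\rangle \geq -\frac{\alpha^2(\ln B)^2}{48\kappa} - C\kappa^{-1/2}(\ln B)^{3/2},
\]
where $\mathcal I_\K$ denotes the interaction term appearing in \eqref{eq:simplelowerop}.

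The core of the argument is an operator analog of Proposition~\ref{pro:lsy}, namely a bound of the form
\[
\langle \Psi, (\mathcal I_\K + \mathcal N_\K) \Psi\rangle \geq -\frac{\alpha \ln B}{2}\int_\R f_\Psi(x_3)^4\, dx_3 + (\text{remainders}),
\]
where $f_\Psi(x_3)^2 = \int_{\R^2} \|\Psi(x_\perp,x_3)\|_{\mathcal F}^2\,dx_\perp$. I would obtain this by completing the square in the phonon operators using a shift that exploits two features of the problem: the magnetic length scale $B^{-1/2}$ in the $x_\perp$-plane (providing the Gaussian decay $\|P_0\, e^{ik_\perp\cdot x_\perp}P_0\| = e^{-|k_\perp|^2/(4B)}$), and the essentially one-dimensional structure of the effective interaction along $x_3$. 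After projection to the lowest Landau level, the kernel produced in this way acts as a smoothed Coulomb interaction between the electron density $\rho_\Psi = \|\Psi(\cdot)\|_{\mathcal F}^2$, and its leading part can be analyzed exactly as in Proposition~\ref{pro:lsy}, producing the effective delta $\frac{\alpha\ln B}{2}\delta(x_3-y_3)$ together with remainder terms controllable via Lemma~\ref{lem:lsyprojected}.

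With the effective one-dimensional bound in place, I would conclude by combining the Schwarz estimate $\|\partial_3\Psi\|^2 \geq \int_\R |f_\Psi'|^2\,dx_3$ (as in Corollary~\ref{cor:secondterm}) with Lemma~\ref{lem:oned} applied to $\kappa \int|f'|^2\,dx_3 - \frac{\alpha\ln B}{2}\int|f|^4\,dx_3$, whose minimum equals $-\alpha^2(\ln B)^2/(48\kappa)$. The remainder terms---in particular those involving $\|\partial_3\Psi\|^{3/2}$ and $\langle\Psi,\mathcal N_\K\Psi\rangle$---are absorbed by extracting a small fraction of $\kappa\|\partial_3\Psi\|^2$, using the bound \eqref{eq:afockest} and the a priori estimates of Corollary~\ref{cor:simplebounds}, and then optimizing; this produces the claimed correction $C\kappa^{-1/2}(\ln B)^{3/2}$. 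The principal obstacle is the operator-level Pekar bound itself: a naive c-number shift $g(k,x) = \frac{\sqrt\alpha}{2\pi|k|}e^{-ik\cdot x}$ yields only the trivial estimate $-\frac{\alpha}{4\pi^2}\int_{\Gamma_\K}|k|^{-2}\,dk = O(\K)$, which is far too crude since $\K \geq \sqrt B$. The shift must instead be adapted to the spread of the electron state---equivalently, one performs a partial Bogoliubov/Gross-type dressing of the low-$k_3$ phonon modes---and the resulting operator-valued cross-terms must be controlled carefully in order to extract the correct constant $\alpha\ln B/2$ in the effective one-dimensional attraction.
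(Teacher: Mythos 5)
Your reduction to vectors $\Psi = P_0\Psi$ is correct, and you have correctly diagnosed the central difficulty: a naive $c$-number completion of the square only yields the useless $O(\K)$ bound, so the phonon shift must somehow ``see'' the spread of the electron state. But your proposal stops precisely where the real work begins. The phrase ``a partial Bogoliubov/Gross-type dressing of the low-$k_3$ phonon modes'' is a gesture toward the right idea, not a construction. If you conjugate by a unitary $e^{S}$ with $S$ depending on $x_3$, the commutator $[\partial_3, S]$ produces operator-valued cross-terms which, if absorbed into $\kappa\|\partial_3\Psi\|^2$ by a Schwarz inequality, degrade the sharp constant $1/48$ unless the dressing is chosen with extreme care --- and you give no indication of how to choose it, nor of how to show that the resulting effective interaction is the effective Coulomb kernel of Proposition~\ref{pro:lsy} rather than something weaker.

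The paper's proof supplies this missing mechanism through a chain of lemmas that have no counterpart in your sketch. After the representation $P_0 e^{ik_\perp\cdot x_\perp}P_0 = e^{-|k_\perp|^2/2B}P_0 I_{k_\perp}P_0$ with $\|I_{k_\perp}\|\le 2e^{|k_\perp|^2/4B}$ (Lemma~\ref{lem:k3proj}), one cuts off phonon momenta anisotropically ($|k_3|\le\K_3$ with $\K_3\sim(\ln B)^{3/2}$, $1\le|k_\perp|\le\K_\perp\sim\sqrt B$; Lemmas~\ref{lem:ultra}, \ref{lem:cutoff}); then one introduces new one-dimensional bosons $\hat a_{k_3}\propto \int_{1\le|k_\perp|\le\K_\perp}|k|^{-1}a_k e^{ik_\perp\cdot x_\perp}dk_\perp$ with effective coupling $v(k_3)^2 = \pi(\ln(\K_\perp^2+k_3^2)-\ln(1+k_3^2))$ (Lemma~\ref{lem:quant1d}); then one localizes the electron in $x_3$-windows of size $L$ via an IMS formula and coarse-grains the $k_3$-modes into blocks of width $P$, replacing $e^{ik_3 x_3}$ by $e^{ik_b x_3}$ at the cost $\sim (PL)^2/\gamma$ (Lemma~\ref{lem:block}); finally, with finitely many normalized block modes $A_b^{(u)}$, coherent states reduce the quantum operator to a classical minimization, which is handled by Corollary~\ref{cor:secondterm}. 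This is the Lieb--Thomas strategy, and none of it is present in your plan. Without it, the ``completing the square in the phonon operators'' step you describe simply does not close.

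A second, independent gap: you invoke Corollary~\ref{cor:simplebounds} to control remainder terms. That corollary holds only for $\Psi$ satisfying the a~priori near-minimizer bound \eqref{eq:roughupper}, whereas Proposition~\ref{pro:fockmain} asserts an unconditional operator inequality on $\operatorname{ran} P_0$ --- it must hold for every normalized $\Psi=P_0\Psi$, including ones with enormous phonon number or kinetic energy. The paper's proof of the proposition never uses Corollary~\ref{cor:simplebounds}; the number operator is consumed locally by the block-by-block completion of squares, and the kinetic energy by the explicit optimization over $L$, $M$, $\gamma$. If you want to keep your route, you would need to remove the dependence on a~priori bounds, or else you are proving a weaker, conditional statement than the proposition actually asserts.
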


We defer the proof of this result to the following Section \ref{sec:fockmain}. Here we show how to deduce Theorem \ref{thm:operator} from Proposition \ref{pro:fockmain} together with the previous results in this section.

\begin{proof}[Proof of Theorem \ref{thm:operator}]
We have already discussed the proof of the upper bound at the beginning of this section and we now focus on the lower bound. According to the upper bound there are $\Psi\in H^1_A(\R^3) \otimes \textnormal{dom} (\sqrt \mathcal N)$ satisfying \eqref{eq:roughupper} with any fixed $M>-\alpha^2/48$, and it suffices to prove a lower bound on $(\Psi,\mathfrak{h}\Psi)$ for such $\Psi$.

It follows from Lemma \ref{lem:reduction} and Proposition \ref{pro:fockmain} that for such $\Psi$
$$
(\Psi,\mathfrak{h}\Psi) \geq \kappa B  - \lk \frac{\alpha^2 (\ln B)^2}{48\kappa} + C \kappa^{-1/2} (\ln B)^{3/2} \right) \|P_0 \Psi\|^2 - C (\ln B)^2 \sqrt{ \K B^{-1} } - C
$$
with $\kappa= 1-8\alpha/(\pi \K)$, where we choose $\K = B (\ln B)^{-4/3}$. We bound $\|P_0\Psi\|\leq 1$ and obtain
$$
(\Psi,\mathfrak{h}\Psi) \geq B  - \frac{\alpha^2 (\ln B)^2}{48} - C (\ln B)^{3/2} \,,
$$
which is the claimed lower bound in Theorem \ref{thm:operator}.
\end{proof}


\section{Proof of Proposition \ref{pro:fockmain}}
\label{sec:fockmain}

In this section we establish Proposition \ref{pro:fockmain} which lies at the heart of Theorem \ref{thm:operator}. The proof consists of two main steps. In the first step we replace the phonon field $a_k$, $k\in\R^3$, by an effective phonon field $\hat a_{k_3}$, which only depends on a \emph{one-dimensional} parameter $k_3$. Moreover, the electron-phonon coupling is changed from $|k|^{-1}$ to an effective coupling $v(k_3)$, which is almost constant and grows logarithmically with $B$. The precise statement is given in Lemma \ref{lem:quant1d}. The second step in the proof of Proposition \ref{pro:fockmain} is the analysis of an essentially one-dimensional problem, see Subsections \ref{ssec:locdec} and \ref{ssec:errors}. Here we can follow a one-dimensional version of the strategy developed in \cite{LieTho97}; see also \cite{Gha12}.

We recall the estimate on $\mathfrak{h}_\K^\co$  from \eqref{eq:simplelowerop}. We begin our lower bound on $P_0\mathfrak{h}_\K^\co P_0$ by introducing ultra-violet cut-offs, similarly as in the proof of Lemma \ref{lem:firstcutoff}. The fact that $\mathfrak{h}_\K^\co$ is sandwiched by projections $P_0$, however, allows us to choose these cut-offs more carefully and, in particular, to distinguish between the directions $k_\bot$ and $k_3$.

\subsection{Projection onto $P_0$}
\label{ssec:proj}

We begin by deriving a convenient representation of $P_0 e^{ik_\bot \cdot x_\bot} P_0$. For $k_\perp \in \R^2$ let us define the integral operator $I_{k_\perp}$ in $L^2(\R^2)$ with integral kernel
$$
I_{k_\perp} (x_\perp,y_\perp) = P_0(x_\perp,y_\perp) e^{k_\perp \wedge (x_\perp-y_\perp)/2} e^{i k_\perp \cdot (x_\perp+y_\perp)/2 } \, .
$$
Again we also write $I_{k_\perp}$ for the operator $I_{k_\perp} \otimes 1 \otimes 1$ on $L^2(\R^2)\otimes L^2(\R) \otimes \mathcal{F}$.

\begin{lemma}
\label{lem:k3proj}
For $k_\perp \in \R^2$ we have
$$
P_0 e^{ik_\perp \cdot x_\perp} P_0 = P_0 e^{-|k_\perp|^2/2B} I_{k_\perp} P_0 \, .
$$
Moreover, the operator $I_{k_\perp}$ is bounded with $\| I_{k_\perp} \| \leq 2 e^{|k_\perp|^2/4B}$.  
\end{lemma}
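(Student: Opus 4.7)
\textbf{Proof plan for Lemma \ref{lem:k3proj}.} The strategy is to compute the integral kernel of $P_0 e^{ik_\perp \cdot x_\perp} P_0$ directly from the explicit Gaussian expression \eqref{eq:prokernel} and to recognize the answer as $e^{-|k_\perp|^2/(2B)}$ times the kernel $I_{k_\perp}(x_\perp,y_\perp)$. Write $k = k_\perp$ throughout. The kernel of the composition is
$$
(P_0 e^{ik\cdot x_\perp} P_0)(x_\perp,y_\perp) = \int_{\R^2} P_0(x_\perp,z) \, e^{ik\cdot z}\, P_0(z,y_\perp)\, dz \,.
$$
Using the identity $x_\perp\wedge z + z\wedge y_\perp = (x_\perp-y_\perp)\wedge z$ (so that the $z$-dependence of the two magnetic phases collapses into a single linear phase), the integrand takes the form
$$
\tfrac{B^2}{4\pi^2} \exp\lk -\tfrac{B}{4}(|x_\perp-z|^2 + |z-y_\perp|^2)\rk \exp\lk i \tfrac{B}{2}(x_\perp-y_\perp)\wedge z + i k\cdot z \rk \,.
$$

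The next step is the algebraic identity $|x_\perp-z|^2 + |z-y_\perp|^2 = 2|z - (x_\perp+y_\perp)/2|^2 + |x_\perp-y_\perp|^2/2$, which reduces the $z$-integral to a Gaussian of width $B^{-1/2}$ centered at $(x_\perp+y_\perp)/2$. Shifting $z \mapsto z + (x_\perp+y_\perp)/2$ and using that $(x_\perp-y_\perp)\wedge (x_\perp+y_\perp) = 2\, x_\perp\wedge y_\perp$, one evaluates the Gaussian integral via $\int_{\R^2} e^{-B|z|^2/2} e^{i\xi\cdot z}\,dz = (2\pi/B)\, e^{-|\xi|^2/(2B)}$ with $\xi = B\, J(x_\perp-y_\perp)/2 + k$, where $J$ is the $90^\circ$ rotation implementing $a\wedge b = J a \cdot b$. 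Expanding $|\xi|^2$ gives exactly the three factors $\exp(-B|x_\perp-y_\perp|^2/8)$, $\exp(k\wedge(x_\perp-y_\perp)/2)$, and $\exp(-|k|^2/(2B))$ that are needed to reassemble the kernel into $e^{-|k|^2/(2B)} I_k(x_\perp,y_\perp)$; together with the remaining phases $e^{iB(x_\perp\wedge y_\perp)/2}$ and $e^{ik\cdot(x_\perp+y_\perp)/2}$ this produces precisely $e^{-|k|^2/(2B)} I_k(x_\perp,y_\perp)$. The operator identity then follows from $P_0^2 = P_0$.

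For the norm bound I plan to apply Schur's test to $I_k$. The phases $e^{iB(x_\perp\wedge y_\perp)/2}$ and $e^{ik\cdot(x_\perp+y_\perp)/2}$ have modulus one, while the real exponent $-B|x_\perp-y_\perp|^2/4 + k\wedge(x_\perp-y_\perp)/2$ can be completed to the square
$$
-\tfrac{B}{4}\bigl|x_\perp - y_\perp - \tilde k/B\bigr|^2 + \tfrac{|k|^2}{4B} \,,
$$
where $\tilde k$ is the rotation of $k$ with $|\tilde k| = |k|$. Thus $|I_k(x_\perp,y_\perp)|$ is a shifted Gaussian kernel of total mass $(B/2\pi)\cdot e^{|k|^2/(4B)} \cdot (4\pi/B) = 2\, e^{|k|^2/(4B)}$ in either variable, and Schur's test gives $\|I_k\| \leq 2\, e^{|k|^2/(4B)}$.

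No step here is really an obstacle; the main point is bookkeeping of the two magnetic phases and performing the Gaussian integral carefully, so the only thing to watch is the sign convention of $\wedge$ to make sure the phase $e^{k\wedge(x_\perp-y_\perp)/2}$ in the definition of $I_{k_\perp}$ matches the output of the computation.
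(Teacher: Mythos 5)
Your proof is correct and follows essentially the same route as the paper: a direct Gaussian computation of the composed kernel from the explicit formula \eqref{eq:prokernel} for the first identity, and then a bound on $\|I_{k_\perp}\|$ coming from the $L^1$ norm of the convolution-type majorant $\frac{B}{2\pi}e^{-B|u|^2/4+k_\perp\wedge u/2}$. The paper phrases that last step via Young's inequality rather than Schur's test, but for a kernel dominated by a function of $x_\perp-y_\perp$ the two tools give the identical bound $\|\tilde I_{k_\perp}\|_{L^1}=2e^{|k_\perp|^2/4B}$.
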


\begin{proof}
The first claim follows from \eqref{eq:prokernel}, which leads to the identity
$$
\int_{\R^2} P_0(z_\perp,x_\perp)  e^{ik_\perp \cdot x_\perp} P_0(x_\perp,y_\perp) dx_\perp = e^{-|k_\perp|^2/2B} I_{k_\perp} (z_\perp,y_\perp) \, .
$$
To prove the second claim we estimate $| I_{k_\perp}(x_\perp,y_\perp)|  \leq  \tilde I_{k_\perp} (x_\perp-y_\perp)$, where
$$
\tilde I_{k_\perp}(x_\perp) = \frac{B}{2\pi} e^{-B|x_\perp|^2/4} e^{k_\perp \wedge x_\perp/2} \, .
$$
Hence, by Young's inequality, for $\phi\in L^2(\R^2)$,
$$
\| I_{k_\perp} \phi \|^2 \leq \| \tilde I_{k_\perp} \ast |\phi| \|^2 \leq \| \tilde I_{k_\perp} \|_{L^1(\R^2)}^2 \| \phi \|^2 \,,
$$
and the claim follows from the estimate $\| \tilde I_{k_\perp} \|_{L^1(\R^2)} \leq 2 e^{|k_\perp|^2/4B}$.
\end{proof}

In view of Lemma \ref{lem:k3proj} we can write, for all $k \in \R^3$,
\begin{equation}
\label{eq:projres}
 P_0  \frac{a_k}{|k|}  e^{i k \cdot x}  P_0 = P_0  \frac{a_k}{|k|} I_{k_\perp} e^{-|k_\perp|^2/2B} e^{i k_3 x_3}  P_0 
\end{equation}
and similarly for the hermitian conjugate.

\subsection{Ultraviolet cutoff}
\label{ssec:cutoff}

Now we first cut off high phonon modes in the $k_3$-coordinate and then we cut off high and low modes in the first two coordinates. The results from Subsection~\ref{ssec:proj} allow us to choose these cut-offs in a more precise way than in Section \ref{sec:fock}. In particular, we can restrict to phonon modes $k \in \R^3$ with $|k_\perp| \leq \K_\perp$ and $|k_3| \leq \K_3$ with positive parameters $\K_\perp$ and $\K_3$, both smaller than $\K$. This explains  the assumption $\K \geq \sqrt B$ in the proposition. Eventually, we will choose $\K_\perp$ of order $\sqrt B$ and $\K_3$ to be comparable to a power of $\ln B$ (recall that we chose $\K = B (\ln B)^{-4/3}$  in the proof of Theorem \ref{thm:operator}). 

\begin{lemma}
\label{lem:ultra}
For $0 < \K_3 \leq \K$ we have
\begin{align*}
P_0 \mathfrak{h}_{\K}^\co P_0 \geq \, & \kappa P_0 H_B P_0 +  \kappa_1 P_0 (-\partial_3^2) P_0\\
&  + P_0 \int_{|k_3| \leq \K_3, |k_\perp|\leq \K}  \lk a_k^* a_k +   \frac{\sqrt \alpha}{2\pi} \frac{a_k}{|k|}  e^{ik \cdot x} + \frac{\sqrt \alpha}{2\pi} \frac{a_k^*}{|k|}  e^{-ik \cdot x}  \rk dk  P_0 - \frac 12 \, ,
\end{align*}
where
$$
\kappa_1 =  \kappa - \frac{8 \alpha}{\pi \K_3} \int_0^\infty e^{-\K^2_3 t /2B } \frac 1 {1+t} dt \,.
$$
\end{lemma}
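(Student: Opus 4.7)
The plan is to split the $k$-integration in \eqref{eq:simplelowerop} into the inner region $\Gamma_\K^{\mathrm{in}}=\{|k_\perp|\leq\K,\,|k_3|\leq\K_3\}$, which passes verbatim into the conclusion, and the complementary outer region $R=\{|k_\perp|\leq\K,\,\K_3<|k_3|\leq\K\}$. It then suffices to show that, as quadratic forms on $\mathrm{Range}(P_0)$,
\[
\int_R a_k^*a_k\,dk + \frac{\sqrt\alpha}{2\pi}\int_R\frac{a_k e^{ik\cdot x}+a_k^* e^{-ik\cdot x}}{|k|}\,dk \geq -(\kappa-\kappa_1)(-\partial_3^2) - \tfrac12\,.
\]
Since $|k_3|\geq\K_3>0$ on $R$, I would apply the Lieb--Yamazaki commutator trick as in the proof of Lemma~\ref{lem:firstcutoff}. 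Set $Z:=\tfrac{\sqrt\alpha}{2\pi}\int_R\tfrac{a_k}{|k|k_3}e^{ik\cdot x}\,dk$; then $[-i\partial_3,e^{\pm ik\cdot x}]=\pm k_3 e^{\pm ik\cdot x}$ shows that the interaction on $R$ equals $[-i\partial_3,Z-Z^*]$. Using $(Z-Z^*)^*(Z-Z^*)\leq 4Z^*Z+2[Z,Z^*]$ (which follows from $(Z\pm Z^*)^2\geq 0$), the standard commutator bound gives
\[
-[-i\partial_3,Z-Z^*] \leq \tau(-\partial_3^2) + 4\tau^{-1}Z^*Z + 2\tau^{-1}[Z,Z^*]
\]
for every $\tau>0$, trading the interaction on $R$ for a piece of $-\partial_3^2$, a piece to be absorbed into $\int_R a_k^*a_k\,dk$, and the c-number $[Z,Z^*]$.

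The heart of the proof is to bound $P_0 Z^*Z P_0$ by $\int_R a_k^*a_k\,dk$ with a constant carrying Gaussian decay in $|k_\perp|$. Expanding $\|ZP_0\Phi\|^2$ as a double integral in $k,k'$ and carrying out the $x_\perp$-integration, one obtains an inner product involving $P_0 e^{i(k'_\perp-k_\perp)\cdot x_\perp}P_0$, and Lemma~\ref{lem:k3proj} lets me replace this sandwiched plane wave by $e^{-|k_\perp-k'_\perp|^2/(2B)}P_0 I_{k'_\perp-k_\perp}P_0$. Combining this Gaussian factor with the norm bound $\|I_\cdot\|\leq 2e^{|\cdot|^2/(4B)}$, Cauchy--Schwarz in $x_3$, AM--GM applied to $\|a_k\Phi\|\|a_{k'}\Phi\|$, and the elementary estimate $\tfrac{1}{|k||k_3|}\leq\tfrac{1}{k_3^2}\leq\tfrac{1}{\K_3^2}$ valid on $R$, I expect a bound of the form
\[
P_0 Z^*Z P_0 \leq \frac{C\alpha}{\K_3^2}\Bigl(\int_R\frac{e^{-|k_\perp|^2/(2B)}}{|k|^2 k_3^2}\,dk\Bigr)P_0\int_R a_k^*a_k\,dk\,P_0\,,
\]
while the c-number $[Z,Z^*]=\tfrac{\alpha}{4\pi^2}\int_R\tfrac{dk}{|k|^2 k_3^2}$ is dominated by the same type of integral and contributes only to the constant $-1/2$.

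The key integral is evaluated explicitly. Polar coordinates in $k_\perp$ and the substitution $t=r^2/k_3^2$ give
\[
\int_{|k_\perp|\leq\K}\frac{e^{-|k_\perp|^2/(2B)}}{|k_\perp|^2+k_3^2}\,dk_\perp = \pi\int_0^{\K^2/k_3^2}\frac{e^{-k_3^2 t/(2B)}}{1+t}\,dt \leq \pi\int_0^\infty\frac{e^{-\K_3^2 t/(2B)}}{1+t}\,dt
\]
(using $k_3\geq\K_3$), and combined with $\int_{\K_3}^\K k_3^{-2}\,dk_3\leq\K_3^{-1}$ this yields
\[
\int_R\frac{e^{-|k_\perp|^2/(2B)}}{|k|^2 k_3^2}\,dk \leq \frac{2\pi}{\K_3}\int_0^\infty\frac{e^{-\K_3^2 t/(2B)}}{1+t}\,dt\,,
\]
precisely the integral defining $\kappa_1$. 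Finally, I would choose $\tau$ so that the coefficient of $\int_R a_k^*a_k\,dk$ in the combined estimate equals exactly $-1$, cancelling the $+\int_R a_k^*a_k\,dk$ already present in $\mathfrak{h}_\K^\co$; the coefficient of $-\partial_3^2$ that remains matches $\kappa-\kappa_1$ with the stated form. The main technical obstacle is the Cauchy--Schwarz step bounding $P_0 Z^*Z P_0$: one must propagate the Gaussian factor from Lemma~\ref{lem:k3proj} through the $k$-integration with exactly the right constant, so that $4\tau^{-1}$ times the Gaussian integral produces the prefactor $8\alpha/(\pi\K_3)$ demanded by the statement.
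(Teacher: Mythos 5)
The overall strategy -- split the $k$-integral into an inner block that is kept and an outer block $\K_3\leq|k_3|\leq\K$ whose interaction is controlled by the Lieb--Yamazaki commutator trick in the $x_3$-direction -- is the same as in the paper. The critical point where your argument diverges from the paper, and where it develops a gap, is the \emph{order} in which the projection identity of Lemma~\ref{lem:k3proj} and the commutator trick are applied, and this order matters.

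The paper first replaces, for each single mode $k$, the plane wave \emph{inside} the $P_0$-sandwich via \eqref{eq:projres}:
$P_0\,e^{ik\cdot x}\,P_0 = P_0\,I_{k_\perp}\,e^{-|k_\perp|^2/2B}\,e^{ik_3x_3}\,P_0$.
Only then does it define $Z = \tfrac{\sqrt\alpha}{2\pi}\int \tfrac{a_k}{|k|k_3}\,I_{k_\perp}\,e^{-|k_\perp|^2/2B}\,e^{ik_3x_3}\,dk$, so the Gaussian $e^{-|k_\perp|^2/2B}$ is carried by \emph{each factor} of $Z$ and $Z^*$. The Cauchy--Schwarz step then produces the factorized weight $e^{-(|k_\perp|^2+|k_\perp'|^2)/2B}$, which after the bound $\|I_{k_\perp}\|\leq 2e^{|k_\perp|^2/4B}$ yields the separable integral $\int e^{-|k_\perp|^2/2B}/(k_3^2|k|^2)\,dk$ and hence the constant $\tilde R_1$ in $\kappa_1$.

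Your $Z=\tfrac{\sqrt\alpha}{2\pi}\int \tfrac{a_k}{|k|k_3}e^{ik\cdot x}\,dk$ carries no Gaussian, and you try to recover it afterwards from $P_0 Z^*Z P_0$. But the only object Lemma~\ref{lem:k3proj} hands you in $P_0 Z^*Z P_0$ is
$P_0\,e^{i(k_\perp'-k_\perp)\cdot x_\perp}\,P_0 = e^{-|k_\perp'-k_\perp|^2/2B}\,P_0\,I_{k_\perp'-k_\perp}\,P_0$,
i.e.\ a Gaussian in the \emph{difference} $k_\perp'-k_\perp$, not in $|k_\perp|$ and $|k_\perp'|$ separately. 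On the diagonal $k_\perp'=k_\perp$ this weight is $1$, so it provides no decay in $|k_\perp|$, and the two-variable weight does not factorize; the Cauchy--Schwarz estimate you carry out (which tacitly uses a single-variable Gaussian $e^{-|k_\perp|^2/2B}$ in the key integral) does not follow from your setup. The consequences are visible in your claimed bound for $P_0Z^*ZP_0$, whose prefactor $C\alpha\K_3^{-2}\int_R e^{-|k_\perp|^2/2B}/(|k|^2k_3^2)\,dk$ does not match the dimensionless coefficient one needs -- nor the one the paper obtains, which is simply $\tfrac{\alpha}{\pi^2}\int_R e^{-|k_\perp|^2/2B}/(k_3^2|k|^2)\,dk$, without the extra $\K_3^{-2}$.

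The fix is exactly the reordering used in the paper: apply \eqref{eq:projres} to the interaction term first, rewriting $P_0\tfrac{\sqrt\alpha}{2\pi}\int_R(\tfrac{a_k}{|k|}e^{ik\cdot x}+\mathrm{h.c.})\,dk\,P_0$ as $P_0[-i\partial_3, Z-Z^*]P_0$ with $Z$ as above containing $I_{k_\perp}e^{-|k_\perp|^2/2B}$, and then run the commutator bound and the Schwarz inequality (using $\|I_{k_\perp}\|\leq 2e^{|k_\perp|^2/4B}$) to arrive at $Z^*Z+ZZ^*\leq \tfrac{\tilde R_1}{2}\bigl(\int_R a_k^*a_k\,dk+\tfrac12\bigr)$ with $\tilde R_1=\tfrac{8\alpha}{\pi\K_3}\int_0^\infty e^{-\K_3^2t/2B}(1+t)^{-1}\,dt$. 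Your evaluation of the polar integral and the normalization $\tau=\tilde R_1$ that cancels the $\int_R a_k^*a_k\,dk$ piece are then exactly right, and the rest of your outline goes through.
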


\begin{proof}
Similar as in the proof of Lemma \ref{lem:roughlower} we set 
$$
Z = \frac{\sqrt \alpha}{2\pi} \int \frac{a_k}{|k| k_3} I_{k_\perp} e^{-|k_\perp|^2/2B} e^{ik_3 x_3} dk \,.
$$
Here and in the remainder of the proof all integrals are over $\{k \in \R^3 : |k_\perp| \leq \K, \K_3 \leq |k_3| \leq \K \}$, unless stated otherwise. We can write, in view of \eqref{eq:projres},
\begin{align}
\label{eq:projres1}
& P_0 \frac{\sqrt \alpha}{2\pi}  \int \lk \frac{a_k}{|k|} e^{ik\cdot x} + \frac{a_k^*}{|k|} e^{-ik\cdot x} \rk dk P_0 = \\
& \qquad = P_0 \frac{\sqrt \alpha}{2\pi}  \int \lk \frac{a_k}{|k|} I_{k_\perp} e^{-|k_\perp|^2/2B} e^{ik_3 x_3} + \frac{a_k^*}{|k|} I_{k_\perp}^* e^{-|k_\perp|^2/2B} e^{-ik_3 x_3} \rk dk P_0 \notag \\
& \qquad = P_0 [-i\partial_3,Z-Z^*] P_0 \,. \notag
\end{align}
In the same way as in the proof of Lemma \ref{lem:firstcutoff} we obtain the estimate 
\begin{equation}
\label{eq:ultra2}
-\frac{\sqrt \alpha}{2\pi}  \int \lk \frac{a_k}{|k|} I_{k_\perp} e^{-|k_\perp|^2/2B} e^{ik_3 x_3} + \frac{a_k^*}{|k|} I_{k_\perp}^* e^{-|k_\perp|^2/2B} e^{-ik_3 x_3} \rk dk \leq  \rho ( -\partial_3^2) + \frac 2\rho \lk Z^* Z + Z Z^* \rk
\end{equation}
for any $\rho > 0$. We claim that
\begin{equation}
\label{eq:ultraz}
 Z^* Z + Z Z^*  \leq \frac{\tilde R_1}{2}    \lk \int  a_k^* a_k   dk+ \frac 1 2 \rk  \,,
 \qquad
 \tilde R_1 = \frac{8 \alpha}{\pi \K_3} \int_0^\infty e^{-\K^2_3 t /2B } \frac 1 {1+t} dt
 \, .
\end{equation}

Indeed, the expectation $\langle\cdot\rangle$ in any normalized state satisfies
$$
\langle Z^*Z \rangle = \frac{\alpha}{4 \pi^2} \iint \frac{1}{k_3 k_3' |k| |k'|} e^{-(|k_\perp|^2+|k_\perp'|^2)/2B} e^{i(k_3'-k_3)x_3} \langle a_k^* I_{k_\perp}^* I_{k'_\perp} a_{k'} \rangle dk dk' \, .
$$
Combining the Schwarz inequality and Lemma \ref{lem:k3proj} yields
$$
\langle a_k^* I_{k_\perp}^* I_{k'_\perp} a_{k'} \rangle \leq \langle a_k^* I_{k_\perp}^* I_{k_\perp} a_k \rangle^{1/2} \langle a_{k'}^* I_{k'_\perp}^* I_{k'_\perp} a_{k'} \rangle^{1/2} \leq 4 e^{(|k_\perp|^2+|k_\perp'|^2)/4B}\langle a_k^*  a_k \rangle^{1/2} \langle a_{k'}^* a_{k'} \rangle^{1/2} \, .
$$
It follows that
$$
\langle Z^*Z \rangle \leq \frac{\alpha}{\pi^2} \lk  \int \frac{1}{|k_3| |k|} e^{-|k_\perp|^2/4B} \langle a_k^*  a_k \rangle^{1/2} dk \rk^2 \leq \frac{\alpha}{\pi^2} \int \frac{1}{k_3^2 |k|^2} e^{-|k_\perp|^2/2B} dk \int  \langle a_k^*  a_k \rangle dk
$$
with
\begin{align*}
\int \frac{1}{k_3^2 |k|^2} e^{-|k_\perp|^2/2B} dk 
& \leq \frac{4\pi}{\K_3} \int_0^\infty e^{-\K_3^2 r^2 /2B } \frac 1 {r^2} \lk r + \arctan \lk \frac 1r \rk - \frac \pi 2 \rk dr \\
&\leq  \frac{4\pi}{\K_3} \int_0^\infty e^{-\K_3^2 r^2 /2B } \frac r {1+r^2}  dr \\
& = \frac{2\pi}{\K_3} \int_0^\infty e^{-\K_3^2 t /2B } \frac 1 {1+t}  dt 
= \frac{\pi^2}{4\alpha} \tilde R_1 \, .
\end{align*}
To estimate $\langle Z Z^* \rangle$ we note that $\langle a_k I_{k_\perp} I_{k'_\perp}^* a_{k'}^* \rangle = \langle a_{k'}^* I_{k_\perp} I^*_{k'_\perp} a_k \rangle + \langle I_{k_\perp} I^*_{k_\perp} \rangle \delta(k-k')$ and we can argue in the same way as above. This establishes \eqref{eq:ultraz}.

Now we choose $\rho=\tilde R_1$, such that by \eqref{eq:ultraz}
$$
\frac 2\rho \lk Z^* Z + Z Z^* \rk \leq  \int  a_k^* a_k  dk + \frac 12  \, .
$$
The bound claimed in the lemma then follows from \eqref{eq:simplelowerop} and \eqref{eq:ultra2} after projecting onto the range of $P_0$ and recalling \eqref{eq:projres1}.
\end{proof}

Next, we cut off high and low phonon modes in the first two coordinates.

\begin{lemma}
\label{lem:cutoff}
For any $0 < \K_3 \leq \K$ and $1 \leq \K_\perp \leq \K$ we have
$$
P_0 \mathfrak{h}_\K^{\co} P_0 \geq  P_0 \left[ \kappa_2 H_B +  \kappa_1 (-\partial_3^2) + \frac{\sqrt \alpha}{2\pi} \int_\Omega  \lk  a_k^* a_k + \frac{a_k}{|k|}  e^{ik \cdot x} + \frac{a_k^*}{|k|}  e^{-ik \cdot x}  \rk dk - \lk 1+ \frac \alpha 2 \rk \right] P_0  \, ,
$$
with $\Omega = \{ k\in\R^3:\ |k_3| \leq \K_3, 1\leq |k_\perp| \leq \K_\perp  \}$ and with $\kappa_1$ from Lemma \ref{lem:ultra} and 
$$
\kappa_2 =  \kappa - 2\alpha \pi^{-1} \K_3 \K_\perp^{-2} \,.
$$
\end{lemma}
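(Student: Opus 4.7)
My plan is to start from Lemma~\ref{lem:ultra} and split the domain of the remaining interaction integral, $\{|k_3|\le \K_3,\ |k_\perp|\le \K\}$, into three pieces: the low-transverse-momentum region $A = \{|k_3|\le \K_3,\ |k_\perp|<1\}$, the central region $\Omega$ appearing in the statement, and the high-transverse-momentum region $C = \{|k_3|\le \K_3,\ \K_\perp < |k_\perp| \le \K\}$. The contribution from $\Omega$ is kept as is; the contributions from $A$ and $C$ will be absorbed into additive constants and a small correction of the coefficient of $H_B$.

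For the region $A$ the crude pointwise bound
$$
a_k^* a_k + \frac{\sqrt\alpha}{2\pi}\lk \frac{a_k}{|k|}e^{ik\cdot x} + \frac{a_k^*}{|k|}e^{-ik\cdot x}\rk \ge -\frac{\alpha}{4\pi^2|k|^2}
$$
from completion of the square suffices. Computing $\int_A |k|^{-2}\,dk = 2\pi[\K_3 \ln(1+\K_3^{-2}) + 2\arctan \K_3]$, the bracket is monotone increasing in $\K_3$ with limit $\pi$, so the contribution from $A$ is bounded below by $-\alpha/2$ uniformly in $\K_3$.

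For the region $C$, since $|k_\perp|>0$ there, I would use the commutator representation
$$
\frac{a_k}{|k|}e^{ik\cdot x} = \sum_{j=1}^{2}\frac{k_j}{|k_\perp|^{2}|k|}\,[(-i\partial_j + A_j),\, a_k e^{ik\cdot x}]\,,
$$
which is valid because $[(-i\partial_j+A_j),e^{ik\cdot x}] = k_j e^{ik\cdot x}$ for $j=1,2$ (the $A_j$ part is multiplication and commutes with $e^{ik_\perp\cdot x_\perp}$) and $a_k$ commutes with all multiplication operators in $x$. Setting
$$
Y_j = \frac{\sqrt\alpha}{2\pi}\int_C \frac{k_j}{|k_\perp|^{2}|k|}\,a_k e^{ik\cdot x}\,dk\,,
$$
the $C$-interaction then equals $\sum_{j=1}^2[(-i\partial_j+A_j), Y_j - Y_j^*]$, and the same Schwarz-inequality argument as in Lemma~\ref{lem:ultra} yields, for any $\tau>0$,
$$
-\sum_{j=1}^{2}[(-i\partial_j+A_j), Y_j - Y_j^*] \le \tau H_B + \frac{2}{\tau}\lk \mathbf{Y}^*\mathbf{Y} + \mathbf{Y}\mathbf{Y}^*\rk\,.
$$
I would then bound $\mathbf{Y}^*\mathbf{Y}$ in any state by a Schwarz estimate on Fock space using the key integral bound $\int_C dk/(|k_\perp|^{2}|k|^2) \le 2\pi\K_3/\K_\perp^2$ (which follows from doing the $k_3$-integral first and applying $\arctan t\le t$), and control $\mathbf{Y}\mathbf{Y}^*$ by $\mathbf{Y}^*\mathbf{Y}$ plus the $[a_k,a_{k'}^*]=\delta(k-k')$ contribution. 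This gives
$$
\mathbf{Y}^*\mathbf{Y} + \mathbf{Y}\mathbf{Y}^* \le \frac{\alpha\K_3}{\pi\K_\perp^2}\lk \int_C a_k^* a_k\,dk + \tfrac12\rk\,,
$$
and with $\tau = 2\alpha\K_3/(\pi\K_\perp^2)$ the right-hand side of the commutator bound becomes exactly $\tau H_B + \int_C a_k^*a_k\,dk + 1/2$. Summing the three regional contributions and combining with Lemma~\ref{lem:ultra}, the $H_B$-coefficient becomes $\kappa-\tau=\kappa_2$, the $a_k^* a_k$ integral over $C$ cancels, and the additive constants accumulate to $-1/2-1/2-\alpha/2 = -(1+\alpha/2)$.

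The main obstacle I expect is the bookkeeping of constants in the Schwarz bound on $\mathbf{Y}\mathbf{Y}^*$: the explicit $1$ in $-(1+\alpha/2)$ arises as $1/2 + 1/2$, with one $1/2$ coming from Lemma~\ref{lem:ultra} and the other from how the $\delta$-function in $[a_k,a_{k'}^*]$ propagates through the Schwarz estimate and is exactly balanced by the choice of $\tau$. Once this is handled correctly and the key integral estimate $\int_C dk/(|k_\perp|^2|k|^2)\le 2\pi\K_3/\K_\perp^2$ is in place, the rest is routine.
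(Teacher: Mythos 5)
Your proof follows essentially the same route as the paper: split $\{|k_3|\le\K_3,|k_\perp|\le\K\}$ into the low-transverse region $|k_\perp|<1$, the middle region $\Omega$, and the high-transverse region $\K_\perp<|k_\perp|\le\K$; bound the low region by completing the square, and bound the high region by the commutator trick with $H_B$, choosing the Cauchy--Schwarz parameter so that the $\int a_k^*a_k$ term absorbs exactly. Your bookkeeping is correct: the integral estimate $\int_C dk/(|k_\perp|^2|k|^2)\le 2\pi\K_3\K_\perp^{-2}$ is verified by $\arctan t\le t$, the optimal $\tau=2\alpha\K_3/(\pi\K_\perp^2)$ reproduces $\kappa_2$, and the constants accumulate to $1/2 + 1/2 + \alpha/2$ exactly as you say.

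One small but noteworthy point: your choice $Y_j = \frac{\sqrt\alpha}{2\pi}\int_C \frac{k_j}{|k_\perp|^2|k|}a_k e^{ik\cdot x}dk$ is in fact the correct one for the two-dimensional commutator identity $\sum_{j=1}^2[-i\partial_j+A_j,Y_j] = \frac{\sqrt\alpha}{2\pi}\int_C\frac{a_k}{|k|}e^{ik\cdot x}dk$ to hold, since $\sum_{j=1,2}k_j^2 = |k_\perp|^2$, not $|k|^2$. The paper's proof writes $Z_j$ with a $|k|^3$ in the denominator, which matches the three-dimensional version in Lemma~\ref{lem:firstcutoff} but looks like a typo here; with $|k|^3$ the sum over $j=1,2$ produces $|k_\perp|^2/|k|^3$ rather than $1/|k|$. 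Happily, both versions lead to the same integral bound $2\pi\K_3\K_\perp^{-2}$ and thus to the same $\kappa_2$, so the end result is unaffected, but your version is cleaner.
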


For this result it is important that we have already cut off high modes in the $k_3$-direction. Otherwise, the bound on $\kappa-\kappa_2$ would be $C/\K_\perp$, similarly as in Lemma \ref{lem:firstcutoff}. This makes a difference since eventually we want to choose $\K_\perp$ of order $\sqrt B$ and $\K_3$ to be comparable to a power of $\ln B$.

\begin{proof}
We continue with the lower bound given in Lemma \ref{lem:ultra}. We need to bound the contribution of the modes from $\Omega' = \{ k\in\R^3:\ |k_3| \leq \K_3, \K_\perp < |k_\perp| \leq \K \}$ and $\Omega''= \{k\in\R^3:\ |k_3| \leq \K_3 , |k_\perp| < 1\}$.

We begin with $\Omega'$. Similar as in Lemma~\ref{lem:firstcutoff} we set, for $j = 1,2$, 
$$
Z_j = \frac{\sqrt \alpha}{2\pi} \int_{\Omega' } \frac{k_j}{|k|^3} a_k e^{ik\cdot x} dk \, .
$$
Then for any $\epsilon > 0$
\begin{equation}
\label{eq:cutoff1}
- \frac{\sqrt \alpha}{2\pi}  \int_{\Omega'} \lk\frac{a_k}{|k|} e^{ik\cdot x}+\frac{a_k^*}{|k|} e^{-ik\cdot x} \rk dk \leq \epsilon \langle H_B \rangle + \frac 2{\epsilon} \langle \Zv^* \cdot \Zv + \Zv \cdot \Zv^* \rangle \, ,
\end{equation}
where $\Zv$ denotes the vector $(Z_1,Z_2)$.
Similar as above we can estimate
$$
 \Zv^* \cdot \Zv + \Zv \cdot \Zv^*  \leq \frac{\alpha}{2 \pi^2} \lk \int_{\Omega'}  a_k^* a_k  dk + \frac 12 \rk \int_{\Omega'} \frac{|k_\perp|^2}{|k|^6} dk \, .
$$
and we note that 
\begin{align*}
\int_{\Omega'} \frac{|k_\perp|^2}{|k|^6} dk 
&\leq 2 \int_{|k_\perp| > \K_\perp} \frac{1}{|k_\perp|^3} \int_0^{\K_3/|k_\perp|} \frac{1}{(1+t^2)^3} dt dk_\perp  \\
& \leq 2\int_{|k_\perp| > \K_\perp} \frac{\K_3}{|k_\perp|^4} dk_\perp\\
& = 2\pi \K_3 \K_\perp^{-2} \, .
\end{align*}
Hence, we choose $\epsilon= 2\alpha \pi^{-1} \K_3 \K_\perp^{-2}$ such that 
$$
\frac 2{\epsilon}  \lk \Zv^* \cdot \Zv + \Zv \cdot \Zv^* \rk \leq   \int_{\Omega'}  a_k^* a_k  dk + \frac 12  \, .
$$

We combine this estimate with \eqref{eq:cutoff1} to get
$$
\int_{\Omega'} \lk  a_k^* a_k + \frac{\sqrt \alpha}{2\pi} \frac{a_k}{|k|} e^{ik\cdot x} + \frac{\sqrt \alpha}{2\pi} \frac{a_k^*}{|k|} e^{-ik\cdot x} \rk dk \geq -  2\alpha \pi^{-1} \K_3 \K_\perp^{-2}  H_B  - \frac 12 \, .
$$

To complete the proof it remains to estimate the contribution of low modes in $\Omega''$. For all $k \in \R^3$ we have $\lk a_k^*+ \sqrt \alpha e^{i k \cdot x}/2 \pi |k| \rk \lk a_k + \sqrt \alpha e^{-i k \cdot x}/ 2 \pi |k| \rk \geq 0$ and therefore
\begin{align*}
\int_{\Omega''} \lk  a_k^* a_k + \frac{\sqrt \alpha}{2\pi} \frac{a_k}{|k|} e^{i k \cdot x} + \frac{\sqrt \alpha}{2\pi} \frac{a_k^*}{|k|} e^{-i k \cdot x} \rk dk
& \geq - \frac{\alpha}{4\pi^2} \int_{\Omega''} \frac 1{|k|^2} \,dk \\
& \geq - \frac{\alpha}{4 \pi^2}  \int_{|k_\perp|<1} \frac{1}{|k_\perp|} dk_\perp \int_\R \frac 1{1+t^2} dt \\
& = - \frac \alpha 2\, .
\end{align*}
This finishes the proof.
\end{proof}

\subsection{Reduction to one dimension}
\label{ssec:1d}

We are now ready to state our first main result in this section, namely a lower bound on $P_0\mathfrak{h}_\K^{\co} P_0$ in terms of an essentially one-dimensional operator.

\begin{lemma}
\label{lem:quant1d}
For $0 < \K_3 \leq \K$ and $1 \leq \K_\perp \leq \K$ let 
$$
v(k_3) = \lk \int_{1 \leq |k_\perp| \leq \K_\perp} |k|^{-2} dk_\perp \rk^{1/2} = \sqrt \pi \lk \ln \lk \K_\perp^2 + k_3^2 \rk - \ln \lk 1 + k_3^2 \rk \rk^{1/2} \, .
$$
There are creation and annihilation operators $\al^*$ and $\al$ on $\mathcal{F}(L^2(\R^3))$ with
$$
[ \al, \all^*] = \delta(k_3- k'_3)\,,
\quad [ \al, \all]= [ \al^*, \all^*] = 0
\qquad\text{for}\ k_3, k_3'\in\R
$$
such that the operator
$$
\mathfrak{h}^{\textnormal{1d}} = \kappa_1 (- \partial_3^2) + \int_{|k_3| \leq \K_3}  \al^* \al dk + \frac{\sqrt \alpha}{2\pi} \int_{|k_3| \leq \K_3} v(k_3) \lk  \al  e^{i k_3 x_3} + \al^* e^{-i k_3 x_3} \rk dk_3
$$
satisfies for $B>0$ the estimate  
$$
P_0 \mathfrak{h}_{\K}^{\co} P_0 \geq \kappa_2 B  P_0 + P_0 \mathfrak{h}^{\textnormal{1d}} P_0 - C P_0
$$
with $\kappa_1$ and $\kappa_2$ from Lemmas \ref{lem:ultra} and \ref{lem:cutoff} and $C=1+  \alpha/2$.
\end{lemma}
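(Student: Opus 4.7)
The plan is to define the effective one-dimensional annihilation operators as a normalized, weighted average of the transverse modes, complete the square on both sides, and pass from the three-dimensional to the one-dimensional structure via an operator Cauchy--Schwarz inequality.

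I would first set, for $|k_3|\le\K_3$,
\[
\al := v(k_3)^{-1} \int_{1\le |k_\perp|\le\K_\perp} |k|^{-1}\, a_{k_\perp,k_3}\, dk_\perp,
\]
(and adjoin independent auxiliary modes for $|k_3|>\K_3$ if needed). The bosonic CCR $[\al,\all^*]=\delta(k_3-k_3')$ and $[\al,\all]=0$ follow by direct computation from the CCR of the $a_k$ and $\int_{1\le|k_\perp|\le\K_\perp}|k|^{-2}\,dk_\perp=v(k_3)^2$. Using Lemma~\ref{lem:cutoff} together with $H_B P_0 = B P_0$, the claim reduces to proving
\[
P_0 J_\Omega P_0 \ge P_0\,\bigl(\mathfrak h^{\textnormal{1d}}-\kappa_1(-\partial_3^2)\bigr)\,P_0,
\]
without further error, where $J_\Omega$ denotes the phonon-number-plus-interaction operator in the bracket of Lemma~\ref{lem:cutoff}.

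Next, I would complete the square on both sides. Setting $R_k := a_k + \frac{\sqrt\alpha}{2\pi|k|}e^{-ik\cdot x}$ and $\hat R_{k_3} := \al+\frac{\sqrt\alpha\,v(k_3)}{2\pi}e^{-ik_3 x_3}$, one obtains
\[
J_\Omega = \int_\Omega R_k^*R_k\,dk -\tfrac{\alpha}{4\pi^2}\int_\Omega|k|^{-2}\,dk, \qquad \mathfrak h^{\textnormal{1d}}-\kappa_1(-\partial_3^2) = \int_{|k_3|\le\K_3} \hat R_{k_3}^*\hat R_{k_3}\,dk_3 - \tfrac{\alpha}{4\pi^2}\int_{|k_3|\le\K_3} v(k_3)^2\,dk_3,
\]
and the subtracted constants coincide because $\int_\Omega|k|^{-2}\,dk=\int_{|k_3|\le\K_3}v(k_3)^2\,dk_3$. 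The goal thus becomes $P_0 \int R_k^* R_k\,dk\, P_0 \ge P_0 \int \hat R_{k_3}^*\hat R_{k_3}\,dk_3\, P_0$. I would apply operator Cauchy--Schwarz with the unit-norm weight $\phi(k_\perp):=(v(k_3)|k|)^{-1}$: the operator $T_{k_3}:=\int\phi(k_\perp) R_k\,dk_\perp = \al+\frac{\sqrt\alpha}{2\pi v(k_3)}\overline{F(x_\perp,k_3)}\,e^{-ik_3 x_3}$, with $F(x_\perp,k_3):=\int|k|^{-2}e^{ik_\perp\cdot x_\perp}\,dk_\perp$, satisfies $T_{k_3}^*T_{k_3}\le \int R_k^*R_k\,dk_\perp$, and integration in $k_3$ yields $\int R^*R\,dk\ge\int T^*T\,dk_3$.

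Finally, $T_{k_3}$ and $\hat R_{k_3}$ differ only by the scalar $D_{k_3}(x_\perp):=\frac{\sqrt\alpha(\overline F-v(k_3)^2)}{2\pi v(k_3)}e^{-ik_3 x_3}$; expanding $T^*T = \hat R^*\hat R + (D^*\hat R+\hat R^* D) + |D|^2$ produces a cross term linear in $\al$ and $\al^*$ together with the non-negative $|D|^2$. To estimate the cross term uniformly after projecting with $P_0$, I would invoke Lemma~\ref{lem:k3proj}: the identity $P_0 e^{ik_\perp\cdot x_\perp}P_0 = e^{-|k_\perp|^2/2B}P_0 I_{k_\perp}P_0$ together with $\|I_{k_\perp}\|\le 2e^{|k_\perp|^2/4B}$ controls $P_0(F-v(k_3)^2)P_0$, and combined with the lowest-Landau-level localization at scale $B^{-1/2}$ this should yield the stated constant $C=1+\alpha/2$. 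The hard part will be precisely this last step: the naive bound on $\|P_0(e^{ik_\perp\cdot x_\perp}-1)P_0\|$ is too crude, and one must exploit the sharper $O(|k_\perp|^2/B)$-type decay for $|k_\perp|\lesssim\sqrt B$ that reflects the magnetic-length cutoff implicit in the lowest Landau level.
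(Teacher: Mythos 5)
The central gap is in your definition of $\al$: you wrote
\[
\al := v(k_3)^{-1}\int_{1\le|k_\perp|\le\K_\perp}|k|^{-1}a_{k_\perp,k_3}\,dk_\perp,
\]
whereas the paper's definition includes the transverse phase factor,
\[
\al := v(k_3)^{-1}\int_{1\le|k_\perp|\le\K_\perp}|k|^{-1}a_{k_\perp,k_3}\,e^{ik_\perp\cdot x_\perp}\,dk_\perp.
\]
Since $e^{ik_\perp\cdot x_\perp}$ is a phonon-scalar (it acts only on the electron variable), it commutes with all $a_k,a_{k}^*$, so the CCR verification is unchanged; and the Schwarz inequality bound $\al^*\al\le\int_{1\le|k_\perp|\le\K_\perp}a_k^*a_k\,dk_\perp$ still goes through verbatim. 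With the phase included, the interaction term of $J_\Omega$ becomes \emph{identically} equal to the interaction term of $\mathfrak{h}^{\textnormal{1d}}$, because $\frac{a_k}{|k|}e^{ik\cdot x}=\frac{a_k}{|k|}e^{ik_\perp\cdot x_\perp}\,e^{ik_3x_3}$. That is the whole proof: the number operator inequality plus exact matching of the coupling, combined with Lemma \ref{lem:cutoff} and $P_0H_BP_0=BP_0$; no square-completion or $D$-term is needed.

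Without the phase, you are forced to introduce the comparison operator $T_{k_3}$ and the multiplication operator $D_{k_3}(x_\perp)=\frac{\sqrt{\alpha}}{2\pi v(k_3)}\bigl(\overline{F(x_\perp,k_3)}-v(k_3)^2\bigr)e^{-ik_3x_3}$, where $\overline F-v^2=\int_{1\le|k_\perp|\le\K_\perp}|k|^{-2}(e^{-ik_\perp\cdot x_\perp}-1)\,dk_\perp$. This quantity is of order $v(k_3)^2\sim\ln\K_\perp$ in absolute value for generic $x_\perp$, and projecting by $P_0$ does \emph{not} make it small: the lowest Landau level is translation-covariant in $x_\perp$, so states in its range are not localized near $x_\perp=0$, and $P_0 e^{ik_\perp\cdot x_\perp}P_0=e^{-|k_\perp|^2/2B}P_0I_{k_\perp}P_0$ is not close to $P_0$ even for $|k_\perp|\ll\sqrt B$ — the kernel of $I_{k_\perp}$ carries the unbounded phase $e^{ik_\perp\cdot(x_\perp+y_\perp)/2}$. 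There is no $O(|k_\perp|^2/B)$ bound on $\|P_0(e^{ik_\perp\cdot x_\perp}-1)P_0\|$ of the kind you hope for; the correct bound is merely $O(1)$. Consequently the cross term in $T^*T-\hat R^*\hat R$, in particular the piece $\frac{\alpha}{2\pi^2}\int_{|k_3|\le\K_3}\bigl(\re F-v^2\bigr)\,dk_3$, produces errors of size $\sim\K_3\ln\K_\perp\sim(\ln B)^{5/2}$, far exceeding the required $O(1)$, and there is no way to absorb it into $\hat R^*\hat R$ or $|D|^2$ without destroying the main term. Including the phase in the definition of $\al$ is not a cosmetic shortcut; it is the step that eliminates $D$ altogether and makes the lemma true with the simple constant $C=1+\alpha/2$.
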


\begin{proof}
We introduce
$$
\al = \frac{1}{v(k_3)} \int_{1 \leq |k_\perp| \leq \K_\perp} \frac{a_k}{|k|} e^{i k_\perp \cdot x_\perp} dk_\perp \,.
$$
To verify the commutation relations for $\al$ and $\al^*$ we write
$$
\left[ \al, \al^* \right] = \frac{1}{v(k_3) v(k_3')} \int_{1 \leq |k_\perp| \leq \K_\perp} \int_{1 \leq |k'_\perp| \leq \K_\perp} \frac 1{|k| |k'|} \left[a_k, a_{k'}^* \right] e^{i(k_\perp-k'_\perp) \cdot x_\perp} dk_\perp' dk_\perp
$$
and note that the right-hand side equals $\delta(k_3-k_3')$, by definition of $v(k_3)$ and the fact that $[a_k, a^*_{k'} ] = \delta(k-k')$. The other relations are verified similarly.

Next, by means of the Schwarz inequality, applied similarly as in the proof of Lemma \ref{lem:ultra}, we learn that
$$
\al^* \al  \leq \int_{1 \leq |k_\perp| \leq \K_\perp} a_k^* a_k \,dk_\perp \,.
$$
The assertion now follows from Lemma \ref{lem:cutoff} together with the fact that $P_0 H_B P_0 = B P_0$.
\end{proof}

\subsection{Localization and decomposition}
\label{ssec:locdec}

With Lemma \ref{lem:quant1d} at hand we can essentially follow the strategy of \cite{LieTho97} to complete the proof of Proposition \ref{pro:fockmain}.

First, we localize the electron in the $x_3$-direction in intervals of length $L>0$, where $L$ is a parameter that will be specified later. We fix a non-negative function $\chi \in C_0^\infty(\R)$ with support in the interval $[-1/2,1/2]$ that satisfies $\int_\R \chi^2(t) dt = 1$.  For $u \in \R$ we put 
$$
\chi_u(t) =  \frac{1}{\sqrt L} \chi \lk \frac{t-u}{L} \rk \, . 
$$
Then $\chi_u$ is supported in the interval $[u - L/2, u + L/2]$ and  satisfies $\int_\R \chi_u^2(t) dt =1$. Moreover, for all fixed $t \in \R$ we have $\int_\R | \chi_u'(t)|^2 du = \| \chi' \|^2 L^{-2}$.

We note that for all $f,g \in C_0^\infty(\R)$
$$
\frac 12 \lk f,  g^2 (-f'') \rk_{L^2(\R)} + \frac 12 \lk f, (- g^2 f)'' \rk_{L^2(\R)} = \lk g f, (-g f)'' \rk_{L^2(\R)} - \lk f, f (g')^2 \rk_{L^2(\R)} \, .
$$
Applying this identity with $g = \chi_u$ implies
$$
- \partial_3^2 = \int_\R  \chi_u (-\partial_3^2) \chi_u \, du - \|\chi'\|^2 L^{-2}
$$
and
\begin{equation}
\label{eq:locres}
\mathfrak{h}^{\textnormal{1d}} =
\int_\R  \chi_u \mathfrak{h}^{\textnormal{1d}} \chi_u \, du - \|\chi'\|^2 L^{-2} \, .
\end{equation}

After localizing the electron we decompose the phonon modes in the $k_3$-coordinate into $M$ intervals of length $P=2\K_3/M$, where $M\in\N$ is a parameter to be chosen later. We label these intervals by $b$. We want to group together modes $\al$ that belong to an interval $b$. To do this we have to replace the factors $e^{ik_3 x_3}$ by factors independent of $k_3$. So for each $b$, we choose a value $k_b \in \R$ in the block $b$. (Later on, we will optimize over $k_b$, but the bound in the following lemma is true uniformly for any choice.) Then we get the following estimate.

\begin{lemma}
\label{lem:block}
For every $u\in\R$ there are creation and annihilation operators $A_b^{(u)*}$ and $A_b^{(u)}$ on $\mathcal{F}(L^2(\R^3))$ satisfying
$$
\left[ A_b^{(u)}, A_{b'}^{(u)*}\right] = \delta_{bb'}\,,
\quad \left[ A_b^{(u)}, A_{b'}^{(u)}\right]= \left[ A_b^{(u)*}, A_{b'}^{(u)*}\right] = 0
\qquad\text{for all blocks}\ b, b'
$$
with the following property. For any $u\in\R$ and $0 < \gamma < 1$ the operator
$$
\mathfrak{h}_{\gamma}^{(u)} = \kappa_1 (- \partial_3^2) + \sum_b \left[ (1-\gamma) A_b^{(u)*} A_b^{(u)} + \frac{\sqrt \alpha}{2\pi} V(b) \lk A_b^{(u)}   e^{i k_b x_3} +  A_b^{(u)*} e^{-i k_b x_3} \rk \right]
$$
with $V(b) = (\int_b v(k_3)^2  dk_3)^{1/2}$ satisfies
$$
\chi_u \mathfrak{h}^{\textnormal{1d}} \chi_u \geq   \chi_u \mathfrak{h}_{\gamma}^{(u)} \chi_u -   C \frac{\K^2_3 L^2}{\gamma M^2} R \chi_u^2
$$
with an error term
$$
R = \int_{|k_3|\leq \K_3} v(k_3)^2 dk_3 =  \pi  \int_{|k_3| \leq \K_3} \lk \ln \lk \K_\perp + k_3^2 \rk - \ln \lk 1 + k_3^2 \rk \rk  dk_3 \, .
$$
\end{lemma}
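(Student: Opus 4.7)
The plan is to group the modes $\al$ over each block $b$ into a single creation/annihilation operator, with a phase factor that absorbs the large global phase $e^{ik_3 u}$ so that only the localized variable $x_3-u$ enters the error. Concretely, set
\[
A_b^{(u)} = \frac{1}{V(b)}\int_b v(k_3)\, e^{i(k_3-k_b)u}\, \al\, dk_3 \, ,
\]
and write $f_b^{(u)}(k_3) = V(b)^{-1} v(k_3)\, e^{-i(k_3-k_b)u}\, \mathbf{1}_b(k_3)$, so that $A_b^{(u)} = \int \overline{f_b^{(u)}(k_3)}\,\al\,dk_3$ with $\|f_b^{(u)}\|_{L^2}=1$. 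The canonical commutation relations then follow from $[\al,\all^*] = \delta(k_3-k_3')$ together with $\|f_b^{(u)}\|=1$, and disjointness of the blocks yields orthogonality between different $b$'s. Second quantization gives $\sum_b A_b^{(u)*}A_b^{(u)} \leq \int_{|k_3|\leq\K_3}\al^*\al\,dk_3$; the difference equals $\sum_b N_b^\perp \geq 0$, where $N_b^\perp = \int_b \widehat c_{k_3}^{(u)*}\widehat c_{k_3}^{(u)}\,dk_3$ and $\widehat c_{k_3}^{(u)} = \al - f_b^{(u)}(k_3) A_b^{(u)}$ is the component of $\al$ orthogonal to $f_b^{(u)}$ inside block $b$.

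With this preparation I would rewrite the difference $\mathfrak{h}^{\textnormal{1d}} - \mathfrak{h}_{\gamma}^{(u)}$ as
\[
\gamma\sum_b A_b^{(u)*}A_b^{(u)} + \sum_b N_b^\perp + \frac{\sqrt\alpha}{2\pi}\lk T + T^*\rk \, ,
\]
where $T = \sum_b T_b$ and $T_b = \int_b v(k_3)\,\al\,\delta_b(k_3,x_3)\,dk_3$ with
\[
\delta_b(k_3,x_3) = e^{ik_3 x_3} - e^{i(k_3-k_b)u+ik_b x_3} = e^{i k_3 u + i k_b(x_3-u)}\lk e^{i(k_3-k_b)(x_3-u)} - 1 \rk \, .
\]
The key estimate is $|\delta_b(k_3,x_3)| \leq |k_3-k_b|\,|x_3-u| \leq PL/2$ on the support of $\chi_u$, with $P = 2\K_3/M$. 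Decomposing $\al = f_b^{(u)}(k_3) A_b^{(u)} + \widehat c_{k_3}^{(u)}$ splits $T_b = g_b^{(u)}(x_3)A_b^{(u)} + S_b$, where $g_b^{(u)}(x_3) = \int_b v(k_3) f_b^{(u)}(k_3)\,\delta_b\,dk_3$ is a $\mathbb{C}$-valued function obeying $|g_b^{(u)}(x_3)| \leq V(b)\,PL/2$ on the support of $\chi_u$, and $S_b = \int_b v(k_3)\,\widehat c_{k_3}^{(u)}\,\delta_b\,dk_3$ involves only the orthogonal Fock modes.

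Each of the two cross pieces is absorbed by completing the square. For the scalar part, using the reserved $\gamma$-fraction,
\[
\gamma A_b^{(u)*}A_b^{(u)} + \frac{\sqrt\alpha}{2\pi}\lk\overline{g_b^{(u)}(x_3)}A_b^{(u)} + g_b^{(u)}(x_3)A_b^{(u)*}\rk \geq -\frac{\alpha\,|g_b^{(u)}(x_3)|^2}{4\pi^2\gamma}\, .
\]
For the operator part, the pointwise Fock inequality $\int_b(\widehat c_{k_3}^{(u)} + F(k_3,x_3))^*(\widehat c_{k_3}^{(u)} + F(k_3,x_3))\,dk_3 \geq 0$ with $F(k_3,x_3) = (\sqrt\alpha/2\pi)\,v(k_3)\,\overline{\delta_b(k_3,x_3)}$ gives
\[
N_b^\perp + \frac{\sqrt\alpha}{2\pi}\lk S_b + S_b^*\rk \geq -\frac{\alpha}{4\pi^2}\int_b v(k_3)^2\,|\delta_b(k_3,x_3)|^2\,dk_3 \, .
\]
All the operators involved commute with the multiplication operator $\chi_u$, so sandwiching by $\chi_u$ preserves these inequalities. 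Summing over $b$, using the pointwise bound on $\delta_b$, the identity $\sum_b V(b)^2 = R$, the relation $P = 2\K_3/M$, and $1 + \gamma^{-1} \leq 2\gamma^{-1}$, one arrives at
\[
\chi_u\lk\mathfrak{h}^{\textnormal{1d}} - \mathfrak{h}_{\gamma}^{(u)}\rk\chi_u \geq -\frac{\alpha}{2\pi^2}\cdot\frac{\K_3^2 L^2}{\gamma M^2}\,R\,\chi_u^2 \, ,
\]
which is the claimed bound.

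The main subtlety is the choice of phase $e^{i(k_3-k_b)u}$ in the definition of $A_b^{(u)}$: without it, $\delta_b$ would contain a spurious factor of the form $e^{i(k_3-k_b)u}-1$ that does not become small with $L$, and the desired $L^2/M^2$ dependence of the error would be lost. Once the phase is tuned so that $\delta_b$ depends only on the localized variable $x_3-u$, the two completions of the square, one using the reserved $\gamma A_b^{(u)*}A_b^{(u)}$ and the other using the free $N_b^\perp$, close the proof.
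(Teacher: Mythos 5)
Your proposal is correct and follows essentially the same route as the paper: the same operators $A_b^{(u)}$ with the same phase $e^{i(k_3-k_b)u}$, the same pointwise bound $|e^{ik_3(x_3-u)}-e^{ik_b(x_3-u)}|\leq PL/2$ on the support of $\chi_u$, and the same completing-the-square mechanism producing the error $C\K_3^2L^2R/(\gamma M^2)$. The only (harmless) difference is bookkeeping: you use the exact decomposition $\int_b \al^*\al\,dk_3 = A_b^{(u)*}A_b^{(u)} + N_b^\perp$ and complete two squares, whereas the paper completes a single square against $\gamma\int_b\al^*\al\,dk_3$ and then invokes the Schwarz-type bound $A_b^{(u)*}A_b^{(u)}\leq\int_b\al^*\al\,dk_3$.
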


\begin{proof}
For all $k_3 \in b$ and all $x_3$ in the support of $\chi_u$ we have
\begin{equation}
\label{eq:expest}
\left| e^{ik_3 (x_3-u)} - e^{i k_b (x_3-u)} \right| \leq |(k_3 - k_b) (x_3-u)| \leq \frac {P L}2 \, . 
\end{equation}
To replace $k_3$ by $k_b$ in the definition of $\mathfrak h^{\textnormal{1d}}$ we introduce a small parameter $\gamma >0$ and estimate 
\begin{align*}
& \frac{\sqrt \alpha}{2\pi} \sum_b \!\int_b \! v(k_3)\! \lk\! \lk e^{ik_3 (x_3-u)} - e^{i k_b (x_3-u)} \rk  e^{ik_3u} \al \! + \! \lk e^{-ik_3 (x_3-u)} - e^{-i k_b (x_3-u)} \rk e^{-ik_3u} \al^* \rk \! dk_3 \\
& + \gamma \sum_b \int_b \al^* \al dk_3
\end{align*}
from below. 
We complete the square and using \eqref{eq:expest} we obtain a lower bound
$$
 - \frac{P^2 L^2 \alpha}{16 \pi^2 \gamma}  \sum_b \int_b v(k_3)^2 dk_3 =  - \frac{\K^2_3 L^2 \alpha}{4 \pi^2 \gamma M^2} R \, .
$$
This allows us to estimate
\begin{align}
\nonumber
\chi_u \mathfrak{h}^{\textnormal{1d}} \chi_u \geq & \kappa_1 \chi_u (-\partial_3^2) \chi_u + (1-\gamma) \sum_b \int_b \al^* \al \,dk_3 \chi_u^2 \\
\nonumber
& + \frac{\sqrt \alpha}{2\pi} \sum_b \int_b v(k_3) \lk e^{i(k_3-k_b)u} \al e^{ik_b x_3} + e^{-i(k_3-k_b)u} \al^* e^{-ik_b x_3} \rk dk_3 \chi_u^2 \\
\label{eq:blockres}
& - C  \frac{\K^2_3 L^2}{\gamma M^2} R \chi_u^2  \, .
\end{align}

For each block $b$ we define new creation and annihilation operators $A_b^{(u)*}$ and $A_b^{(u)}$ by
$$
A_b^{(u)} = \frac 1{V(b)} \int_b v(k_3) e^{i(k_3-k_b)u} \al dk_3 \, .
$$
Similar as in the proof of Lemma \ref{lem:quant1d} we can show that these operators satisfy the commutation relations and $A_b^{(u)*} A_b^{(u)} \leq \int_b \al^* \al \,dk_3$. Inserting this into \eqref{eq:blockres} completes the proof.
\end{proof}

\subsection{Error estimates}
\label{ssec:errors}

In view of Lemma \ref{lem:block} it suffices to analyze the operator $\mathfrak{h}_\gamma^{(u)}$. We emphasize that the operators $A_b^{(u)}$ and $A_b^{(u)*}$ introduced above are properly normalized boson modes. Thus we can use coherent states as in \cite{LieTho97} to estimate the operator $\mathfrak{h}_\gamma^{(u)}$.

We work under the assumptions 
\begin{equation}
\label{generalassumptions}
0 < \K_3 \leq \K \, , \quad 1 \leq \K_\perp \leq \K \, , \quad \textnormal{and} \quad \kappa_1 > 0
\end{equation}
and emphasize that these will be satisfied by our final choice of parameters.

Similarly as in \cite{LieTho97} (see also \cite{Gha12} for the one-dimensional case) we obtain (for a suitable choice of $k_b$)
\begin{equation}
\label{eq:coh}
\mathfrak{h}_\gamma^{(u)}  \geq I - M  \, ,
\end{equation}
where
$$
I = \inf_{\|\phi\|=1}  \left[\kappa_1 \| \partial_3 \phi\|^2 
- \frac \alpha {4\pi^2(1-\gamma)} \int_\R v(k_3)^2 \left| \int e^{ik_3 x_3} |\phi(x_\perp,x_3)|^2 dx \right|^2 dk_3 \right] \,.
$$
Combining \eqref{eq:coh} with the localization formula \eqref{eq:locres} and with Lemma~\ref{lem:block} we obtain
$$
\mathfrak{h}^{\textnormal{1d}} \geq \lk I-M -C \frac{\K^2_3 L^2}{\gamma M^2} R \rk \int_\R \chi_u^2 \,du - \|\chi'\|^2 L^{-2} = I-M - C\frac{\K^2_3 L^2}{\gamma M^2} R - \|\chi'\|^2 L^{-2} \,.
$$
Finally, combining this with Lemma \ref{lem:quant1d} and the expression for $\kappa_2$ we see that 
\begin{align}
\label{eq:cohres}
 P_0 \mathfrak{h}_\K^{\co} P_0  \geq & \, \kappa B P_0 + I P_0 - C \lk \frac{\K_3 B}{\K_\perp^2} + M +   \frac{\K^2_3 L^2}{\gamma M^2} R + \frac{1}{L^2} \rk P_0  \, .
\end{align}

Our next step is to estimate $I$ from below. To do so we insert the bound
\begin{equation}
\label{eq:vbound}
v(k_3)^2 = \pi \lk \ln \lk \K_\perp^2 + k_3^2 \rk - \ln \lk 1 + k_3^2 \rk \rk \leq 2 \pi \ln \K_\perp 
\end{equation}
into the infimum defining $I$ and perform the $k_3$-integration to get
\begin{align*}
I\geq \inf_{\|\phi\|=1}  \left[\kappa_1 \| \partial_3 \phi\|^2 
- \frac{ \alpha \ln \K_\perp}{1-\gamma} \int_\R \left| \int_{\R^2} |\phi(x_\perp,x_3)|^2 dx_\perp \right|^2 dx_3 \right] = -\frac{\alpha^2 (\ln\K_\perp)^2}{12\kappa_1 (1-\gamma)^2} \,. 
\end{align*}
The last identity used Corollary \ref{cor:secondterm}. Now under the assumptions
\begin{equation}
\label{eq:choosepar}
\kappa-\kappa_1 \leq \frac{\kappa}{2}
\quad\text{and}\quad
\gamma\leq \frac12
\end{equation}
there is a constant $C>0$ such that we can estimate the right side further by
\begin{align}
\label{eq:inf}
I \geq -\frac{\alpha^2 (\ln\K_\perp)^2}{12\kappa} - C \frac{(\ln\K_\perp)^2}{\kappa} \lk \frac{\kappa-\kappa_1}{\kappa} + \gamma \rk \,.
\end{align}

Thus, summarizing \eqref{eq:cohres} and \eqref{eq:inf} we have, assuming \eqref{generalassumptions} and \eqref{eq:choosepar},
\begin{align*}
 P_0 \mathfrak{h}_\K^{\co} P_0  \geq & \lk  \kappa B -\frac{\alpha^2 (\ln\K_\perp)^2}{12\kappa} \rk P_0 \\
 &- C \lk \frac{(\ln\K_\perp)^2}{\kappa} \lk \frac{\kappa-\kappa_1}{\kappa} + \gamma \rk + \frac{\K_3 B}{\K_\perp^2} + M +   \frac{\K^2_3 L^2}{\gamma M^2} R + \frac{1}{L^2} \rk P_0 \, .
\end{align*}
We use \eqref{eq:vbound} again to bound
$$
R  = \int_{|k_3| \leq \K_3} v(k_3)^2 dk_3 \leq 2\pi \K_3 \ln \K_\perp 
$$ 
and optimize the resulting expression with respect to $L$, $M$ and $\gamma$ by choosing
$L^{2} = \kappa^{1/5} \K_3^{-3/5}(\ln\K_\perp)^{-3/5}$, $M=[L^{-2}]$ and $\gamma =\kappa^{4/5} \K_3^{3/5} (\ln\K_\perp)^{-7/5}$ to get
$$
 P_0 \mathfrak{h}_\K^\co P_0  \geq  \lk \kappa B -\frac{\alpha^2 (\ln\K_\perp)^2}{12\kappa} \rk P_0 - C \lk \frac{(\kappa-\kappa_1)(\ln\K_\perp)^2}{\kappa} + \frac{\K_3^{3/5} (\ln\K_\perp)^{3/5}}{\kappa^{1/5}} + \frac{\K_3 B}{\K_\perp^2} \rk P_0
$$
provided $\kappa^{4/5} \K_3^{3/5} (\ln\K_\perp)^{-7/5}\leq1/2$ and $\kappa - \kappa_1 \leq \kappa /2$.

A simple bound shows that, as long as $B/ \K_3^2\geq 2$,
$$
\kappa-\kappa_1 = \frac{8 \alpha}{\pi \K_3} \int_0^\infty e^{-\K^2_3 t /2B } \frac 1 {1+t} dt \leq \frac{C}{\K_3} \ln \frac{B}{\K_3^2} \,.
$$ 

We optimize the remaining three error terms with respect to $\K_3$ and $\K_\perp$ (under the additional assumption that $\kappa$ is close to one). In particular, we choose $\K_\perp = B^{1/2}$ and $\K_3 = \kappa^{-1/2} (\ln B)^{3/2}$ and verify that the conditions \eqref{generalassumptions}, \eqref{eq:choosepar}, and $B/\K_3^2\geq 2$ are satisfied for $B$ large enough. (At this point we use the assumptions $C(\ln B)^{-1/2}\leq\kappa\leq C^{-1}\ln B$ and $\K \geq \sqrt B$.) We obtain
$$
P_0 \mathfrak{h}_\K^{\co} P_0  \geq \lk \kappa B -\frac{\alpha^2 (\ln B)^2}{48 \kappa} - C \kappa^{-1/2} (\ln B)^{3/2} \rk P_0 \,,
$$
which is the bound claimed in Proposition \ref{pro:fockmain}. The proof is complete.



\begin{thebibliography}{GHW12}

\bibitem[AHS81]{AvHeSi}
J. E. Avron, I. W. Herbst, B. Simon, \textit{Schr\"odinger operators with magnetic fields III. Atoms in homogeneous magnetic field}. Comm. Math. Phys. \textbf{79} (1981), 529--572. 

\bibitem[BSY00]{BaSoYn}
B. Baumgartner, J. P. Solovej, J. Yngvason, \textit{Atoms in strong magnetic fields: The high field limit at fixed nuclear charge}. Commun. Math. Phys. \textbf{212} (2000), 703--724.

\bibitem[Dev96]{De}
J.T. Devreese, \textit{Polarons}. In: Encyclopedia of Applied Physics \textbf{14}, edited by G. L. Trigg (VCH Publishers, Weinheim, 1996), 383--413.

\bibitem[GL91]{GeLo}
B. Gerlach, H. L\"owen, \textit{Analytical properties of polaron systems or: do polaronic phase transitions exist or not?}. Rev. Mod. Phys. \textbf{63} (1991), no. 1, 63--90.

\bibitem[Gha12]{Gha12}
R.~Ghanta, \emph{Exact ground state energy of the 1{D} strong-coupling polaron}, Junior thesis, Princeton, 2012.

\bibitem[GHW12]{GrHaWe}
M. Griesemer, F. Hantsch, D. Wellig, \textit{On the magnetic Pekar functional and the existence of bipolarons}. Rev. Math. Phys. \textbf{24} (2012), no. 6, 1250014.

\bibitem[Gro76]{Gr}
E. P. Gross, \textit{Strong coupling polaron theory and translational invariance}. Ann. of Phys. \textbf{99} (1976), 1--29.

\bibitem[KLS92]{KoLeSm}
E. A. Kochetov, H. Leschke, M. A. Smondyrev, \textit{Diagrammatic weak-coupling expansion for the magneto-polaron energy}. Z. Phys. B - Condensed Matter \textbf{89} (1992), 177--186.

\bibitem[Kuk73]{Ku}
L. S. Kukushkin, Fiz. Tverd. Tela 15, 859 (1973); Sov. Phys. - Solid State 15, 591 (1973)

\bibitem[LL76]{LanLif76}
L.~D. Landau, E.~M. Lifshitz, \emph{Course of theoretical physics. {V}ol.
  1}, third ed., Pergamon Press, Oxford, 1976.

\bibitem[LM76]{LeMa}
Y. Lepine, D. Matz, \textit{Fock approximation to the large polaron in a magnetic field}. Can. J. Phys. \textbf{54} (1976), 1979--1989.

\bibitem[Lie76]{Li}
E. H. Lieb, \textit{Existence and uniqueness of the minimizing solution of Choquard's nonlinear equation}. Studies in Appl. Math. \textbf{57} (1976/77), no. 2, 93--105.

\bibitem[LL01]{LieLos01}
E.~H. Lieb and M. Loss, \emph{Analysis}, second ed., Graduate Studies
  in Mathematics, vol.~14, American Mathematical Society, Providence, RI, 2001.

\bibitem[LSY94]{LieSolYng94}
E.~H. Lieb, J.~P. Solovej, J. Yngvason, \emph{Asymptotics of
  heavy atoms in high magnetic fields. {I}. {L}owest {L}andau band regions},
  Comm. Pure Appl. Math. \textbf{47} (1994), no.~4, 513--591.

\bibitem[LT97]{LieTho97}
E.~H. Lieb, L.~E. Thomas, \emph{Exact ground state energy of the
  strong-coupling polaron}, Comm. Math. Phys. \textbf{183} (1997), no.~3,
  511--519.

\bibitem[LY58]{LieYam58}
E.~H. Lieb, K.~Yamazaki, \emph{Ground-state energy and effective mass of the
  polaron}, Phys. Rev. \textbf{111} (1958), 728--733.

\bibitem[MS07]{MiSp} 
T. Miyao, H. Spohn, {\it The bipolaron in the strong coupling limit}. Ann. Henri Poincar{\'e} {\bf 8} (2007), 1333--1370.

\bibitem[Nel64]{Ne}
E. Nelson, \textit{Interaction of non-relativistic particles with a quantized scalar field}. J. Math. Phys. \textbf{5} (1964), 1190--1197.

\bibitem[Pek63]{Pek63}
S.~I. Pekar, \emph{Research in electron theory of crystals}, United States
  Atomic Energy Commission, Washington, DC, 1963.

\bibitem[PT51]{Pek51}
S.~I. Pekar, O.~F. Tomasevich, \emph{Theory of {F} centers}, Zh. Eksp. Teor.
  Fys. \textbf{21} (1951), 1218--1222.

\bibitem[Sai81]{Sa}
M. Saitoh, \textit{Free energy of a polaron in a strong magnetic field}. J. Phys. Soc. Jpn. \textbf{50} (1981), no. 7, 2295--2302.

\bibitem[vSN41]{Nag41}
Bela v.~Sz.~Nagy, \emph{\"{U}ber {I}ntegralungleichungen zwischen einer
  {F}unktion und ihrer {A}bleitung}, Acta Univ. Szeged. Sect. Sci. Math.
  \textbf{10} (1941), 64--74.

\bibitem[WPR76]{WhPaRo}
G. Whitfield, R. Parker, M. Rona, \textit{Adiabatic approximation for a polaron in a magnetic field}. Phys. Rev. B \textbf{13} (1976), no. 5, 2132--2137.

\end{thebibliography}
\end{document}